\crefname{hypothesis}{Hypothesis}{Hypotheses}
\newcommand{\defined}{\triangleq}
\newcommand{\X}{\mathcal{X}}
\newcommand{\enu}{\#}
\def\circO{\mathscr{C}}
\definecolor{RED}{RGB}{255,0,0}
\definecolor{csl}{RGB}{0, 0, 0}
\definecolor{green2}{RGB}{0, 200, 45}
\def\bfd{\mathbf{d}}
\def\S{\mathbb{S}}
\newcommand{\csl}[1]{{\color{csl}{#1}}}
\newcommand{\sac}[1]{{\color{csl}{#1}}}
\newcommand{\cgp}[1]{{\color{csl}{#1}}}
\newcommand{\sym}{{\operatorname{Sym}(n)}}
\newcommand{\dihedral}{\operatorname{Dih}}
\newcommand{\diam}{\operatorname{diam}}
\def\id{\mathrm{id}} % permutacion identidad
\newcommand{\kendall}{{\tau_K}} % kendall tau metric
\newcommand{\cycle}[1]{C_{#1}}
\newcommand{\circular}{{\mathcal{C}_{R}}}
\newcommand{\linear}{{\mathcal{L}_{R}}}
\newcommand{\preC}{{\text{pre-}\circular}}
\newcommand{\preL}{{\text{pre-}\linear}}
\newcommand{\scircular}{{\mathcal{C}_{R}^*}}
\newcommand{\slinear}{{\mathcal{L}_{R}^*}}
\newcommand{\presC}{{\text{pre-}\scircular}}
\newcommand{\rate}[1]{{\mathcal{O}(\log(#1)/#1)}}
\newcommand{\constant}{\ell\delta/(L+\ell)}
\newcommand{\iid}{\stackrel{iid}{\sim}}
\def\circInt{[0, 1)}
\newcommand{\uniform}{\operatorname{Unif}\circInt}
\newcommand{\syminf}{{\operatorname{Sym}(\infty)}}
\newcommand{\x}{x} % samples en intervalo unitario
\newcommand{\sample}[1]{{\X_{#1}}}
\newcommand{\sphere}{\mathbb{S}^1}
\newcommand{\sampled}[1]{D_{\X_{#1}}}
\newcommand{\NN}{\operatorname{NN}}
\newcommand{\NNG}{G_{\NN}}
\newcommand{\DFS}{\mbox{DFS}}
\newcommand{\dmin}{\mathbf{d}^{\operatorname{min}}}
\newcommand{\leaves}[1]{{\partial #1}}
\newcommand{\borders}[1]{{\mathcal{B}(#1)}}
\newcommand{\lborders}[1]{{\mathcal{B}^L(#1)}}
\newcommand{\rborders}[1]{{\mathcal{B}^R(#1)}}
\newcommand{\Dmin}{D^{\operatorname{min}}}
\DeclareMathOperator*{\argmin}{\arg\min} 
\DeclareMathOperator*{\argmax}{\arg\max} 
\newcommand{\argdmin}{\mathbf{d}^{\argmin}}
\newcommand{\depth}{ \operatorname{depth}} %profundidad arbol
\newcommand{\main}{{\small \texttt{Recursive Seriation}}\,}
\newcommand{\orient}{{\small \texttt{Consecutive Orientation}}\,}
\newcommand{\external}{{\small \texttt{External Orientation}}\,}
\newcommand{\complete}{{\small \texttt{Complete Internal Orientation}}\,}
\newcommand{\completefinal}{{\small \texttt{Final Orientation}}\,}
\newcommand{\partition}{{\small \texttt{Arc Partition}}\,}
\newcommand{\brute}{{\small \texttt{Border Candidates Orientation}}\,}
\newcommand{\depthfirst}{{\small \texttt{Depth-First Search}}\,}
\newcommand{\bordercandidates}{{\small \texttt{Border Candidates}}\,}
\newcommand{\dissimilarity}{{\small \texttt{Tree Dissimilarity}}\,}
\setlist[enumerate]{leftmargin=.5in}
\setlist[itemize]{leftmargin=.5in}
\title{An optimal algorithm for strict circular seriation\thanks{Submitted to the editors DATE.
\funding{This work was partially supported by INRIA through the INRIA Associate Teams project, CORFO through the Clover 2030 Engineering Strategy - 14ENI-26862, and ANID – Millennium Science Initiative Program – NCN17\_059. C.A.SL. was partially supported by ANID – Millennium Science Initiative Program – NCN17\_129.}}}
\author{Santiago Armstrong\thanks{Institute for Mathematical and Computational Engineering, Pontificia Universidad Cat\'olica de Chile, Santiago, Chile and ANID – Millennium Science Initiative Program – Millennium Nucleus Center for the Discovery of Structures in Complex Data, Santiago, Chile
  (\email{sarmstrong@uc.cl},  \email{casinglo@uc.cl}).}
\and Cristóbal Guzmán\footnotemark[2]~\thanks{Department of Applied Mathematics, University of Twente, The Netherlands (\email{c.guzman@utwente.nl})}
\and Carlos A. Sing Long\footnotemark[2]~\thanks{Institute for Biological and Medical Engineering, Pontificia Universidad Cat\'olica de Chile, Santiago, Chile and ANID – Millennium Science Initiative Program – Millennium Nucleus Center for Cardiovascular Magnetic Resonance, Santiago, Chile.}}
\begin{document}

\maketitle

% REQUIRED
\begin{abstract}
  We study the problem of circular seriation, where we are given a matrix of pairwise dissimilarities between $n$ objects, and the goal is to find a {\em circular order} of the objects in a manner that is consistent with their dissimilarity. This problem is a generalization of the classical {\em linear seriation} problem where the goal is to find a {\em linear order}, and for which optimal ${\cal O}(n^2)$ algorithms are known. Our contributions can be summarized as follows. First, we introduce {\em circular Robinson matrices} as the natural class of dissimilarity matrices for the circular seriation problem. Second, for the case of {\em strict circular Robinson dissimilarity matrices} we provide an optimal ${\cal O}(n^2)$ algorithm for the circular seriation problem. Finally, we propose a statistical model to analyze the well-posedness of the circular seriation problem for large $n$. In particular, we establish ${\cal O}(\log(n)/n)$ rates on the distance between 
  any circular ordering found by solving the circular seriation problem to the underlying order of the model, in the Kendall-tau metric.
\end{abstract}

% REQUIRED
\begin{keywords}
  Circular seriation, circular Robinson dissimilarities, PQ-trees, cir\-cular Robinsonian matrices, cir\-cular-arc hypergraphs, circular embeddings of graphs, generative model
\end{keywords}

% REQUIRED
\begin{AMS}
  68R01, 05C85, 05C50, 	05C25,
  65C20
\end{AMS}

\section{Introduction}

% DEFINICION EN PALABRAS
The seriation problem seeks to order a sequence of $n$ objects from pairwise dissimilarity information. The goal is for the objects to be linearly ordered according to their dissimilarity \cite{liiv2010seriation,recanati2017spectral,recanati2018reconstructing}.
% APLICACIONES 
Seriation has found applications in several areas such as archaeology \cite{robinson1951method}, sociology and psychology \cite{liiv2010seriation}, and gene sequencing and bioinformatics \cite{recanati2017spectral, mirkin1984graphs}. However, in many applications the objects may be arranged along a closed continuum, resulting in a circular order instead. For instance, in \textit{de novo} genome assembly of bacterial plasmids, the goal is to reorder DNA fragments sampled from a circular genome \cite{recanati2017spectral,liao2019completing}. In some problems in planar tomography, an object's density is to be reconstructed from projections taken at unknown angles between $0$ and $2\pi$. Reordering the projections according to their angle enables the reconstruction of the density \cite{coifman2008graph}. In this case, the matrix representation of the pairwise dissimilarities is symmetric, with entries that increase monotonically  starting from the diagonal along each row  until they reach a maximum and then decrease monotonically, when the columns are wrapped around (see \Cref{fig:ejemprob}). Matrices of this form are called circular Robinson \cite{evangelopoulos2020circular, hubert1998graph} in contrast to linear Robinson dissimilarities, where the entries are monotone non-decreasing along rows and columns when moving toward the diagonal \cite{laurent2017similarity}.

\subsection{Our contributions}

In this work, we address the problem of circular seriation. \csl{In particular, we address algorithmic questions and study its well-posedness for large \(n\)}. Some of our results also apply to the linear case. Our first contribution is to provide a tractable and natural definition of circular Robinson matrices by leveraging unimodality (cf. \cref{prop:rob_is_unimod}). Various definitions of circular ordering have been proposed in the literature (see \Cref{sec:related_work} below), but we believe this one captures intuitively the behavior of circular data.

Our second contribution is to provide the first optimal algorithm, \cgp{i.e.,} with ${\cal O}(n^2)$ time and space complexity, for the seriation problem \cgp{on} strict Robinson dissimilarity matrices. Our algorithm is based on known techniques and data structures used in combinatorial seriation, but by virtue of the strict Robinson property our algorithm is substantially simpler. At a high level, the algorithm follows a divide-and-conquer approach, where we recursively detect nearest neighbors between chains of consecutive elements, and then resolve the orientations of such chains by comparing elements from their borders.

Our third contribution is a statistical model for 
%to analyze 
the large \(n\) regime. In this model, points are sampled from a closed curve, which \sac{without loss of generality} we assume is the unit circle, with a continuous and strict circular Robinson dissimilarity. Our main result here is \cgp{an} $\rate{n}$ bound on the expected Kendall-tau distance of any strict circular Robinson ordering of the data. This result is based on an observation %we make
that in the continuous model, there is essentially\footnote{In an infinite set, permutations can be identified with bijections. However, given that for any finite sample we would only observe permutations of finitely many elements, we can substantially reduce the number of relevant permutations for this question. See \Cref{sec:generative} for further details.} a unique ordering which makes the dissimilarity continuous and strictly circular Robinson. This analysis bridges the gap between solutions to the seriation problem, and their accuracy when data is naturally embedded in a continuous circular-like structure.

\subsection{Related Work} \label{sec:related_work}

Linear seriation is a classical problem in unsupervised learning and exploratory data analysis. As such, it has been thoroughly studied, optimal algorithms for combinatorial seriation are known, as well as spectral methods. In contrast, circular seriation is substantially less understood. Next we summarize some results from the literature.

\paragraph{Linear Seriation}

The first polynomial time algorithm for retrieving a linear order from permuted linear Robinson matrices was due to Mirkin and Rodin \cite{mirkin1984graphs}. It is based on the connection between linear Robinson matrices and interval hypergraphs. It uses an algorithm introduced in \cite{fulkerson1965incidence} as a core subroutine, with an overall running time of $\mathcal{O}(n^4)$. Chepoi and Fichet \cite{chepoi1997recognition} later introduced a simpler algorithm using a divide-and-conquer strategy. By recursively performing a partition refinement the algorithm computes an ordering in $\mathcal{O}(n^3)$ operations and $\mathcal{O}(n^2)$ space. Using similar techniques, Seston \cite{seston2008dissimilarites} improved the complexity to $\mathcal{O}(n^2\log(n))$. Atkins \cite{atkins1998spectral} presented an entirely different strategy based on Laplacian eigenmaps (see \cite{belkin2003laplacian}) with running time of ${\cal
O}(n(T(n)+n \log n)),$ where $T(n)$ is the complexity of (approximately) computing the leading eigenvector of a $n \times n$ symmetric matrix. Prea and Fortin in \cite{prea2014optimal} presented an optimal $\mathcal{O}(n^2)$ algorithm, using an algorithm from \cite{booth1976testing} to first compute a $PQ$-tree which is then updated by the algorithm. For the sparse case, Laurent and Seminarotti \cite{laurent2017similarity} present the Similarity-First Search algorithm \cgp{with}  $\mathcal{O}\left(n^{2}+n m \log n\right)$ operations, where $m$ is the number of nonzero entries of the dissimilarity matrix. 

A natural question is how to perform seriation under noisy measurements of a dissimilarity. Here, it is known that projecting a dissimilarity on the class of Robinsonian dissimilarities (in $\ell_{\infty}$-norm) is an NP-hard problem \cite{chepoi2009seriation}, and constant factor approximation algorithms exist \cite{chepoi2011seriation}.

\paragraph{Circular Seriation}
In contrast to the linear case, where there is a common consensus for the definition of linear Robinson dissimilarities, in the circular case many definitions have been proposed that, in spite that they follow the same intuition, have mathematical formulations that are not equivalent. The first generalization of Robinson dissimilarities to the circular case was introduced in \cite{hubert1998graph}. On top of being quite involved, this definition allows bimodality within each row (modulo $n$), which is incompatible with a circle embedding. The approach proposed for circular seriation is an instance of the quadratic assignment problem, which is NP-hard. %, and only heuristics can be employed to solve it efficiently. 
A recent work following a similar line is \cite{evangelopoulos2020circular}. The authors propose an optimization framework where they employ a spherical embedding together with a spectral method for circular ordering in order to recover circular arrangements of the embedded objects. This heuristic has no theoretical guarantees. A different approach in \cite{coifman2008graph} aims to generalize Atkins' spectral approach by considering two eigenvectors. This methodology has asymptotic guarantees due to the connection between the Laplacian operator and the continuous Laplace-Beltrami operator over a manifold. Using the \sac{same idea, in \cite{recanati2018reconstructing}} theoretical guarantees for a spectral method are introduced for the particular case in which the circular Robinson matrix is circulant, which is an idealized setting. In the same work, numerical experiments are presented to illustrate how the spectral method gains robustness by leveraging higher (\(>2\)) Laplacian eigenvectors. In \cite{brucker2008hypercycles} dissimilarities whose ball, 2-ball and cluster hypergraph correspond to an arc hypergraphs are studied. Such dissimilarities can be considered as generalizations of Robinson dissimilarities to the circular case. We build upon this work by considering dissimilarities whose ball hypergraph corresponds to arcs and connect it to other definitions by showing that this definition is equivalent to requiring that the map $j\mapsto D(i,j+i\bmod n)$ is unimodal. Brucker and Osswald in \cite{brucker2008hypercycles} mainly focus in what they call circular dissimilarities which are a particular case of the previous definition.

\subsection{Outline}

The paper is organized as follows. \Cref{sec:prelim} introduces the notation and preliminaries. In \Cref{sec:seriation_dissim} we formally introduce the seriation problem and the crucial concept of Robinson dissimilarities and matrices. In \Cref{sec:consecutiveOnes} we present some classical results on the {\em consecutive ones problem} and its connection to seriation, including the {\em $PQ$-tree data structure}, which is critical for our optimal algorithm. In \Cref{sec:algorithm} we present our optimal algorithm for strict circular seriation. Finally, in \Cref{sec:generative} we provide the generative model of sampling from a {\em continuous strictly Robinson curve}. %, showing that for large sample size, the solutions to the combinatorial circular seriation problem approximate the underlying continuous ordering in the Kendall-tau distance. 
% Some deferred proofs can be found in the Appendix.

\section{Preliminaries}
\label{sec:prelim}

\newcommand{\gen}[1]{\langle #1\rangle}

Throughout this work, arrays are indexed starting from $0$ and are real unless it is explicitly stated otherwise. We let $[n]\defined \{0,1,\dots, n-1\}$ and denote as $\sym$ the group of permutations of $[n]$. A permutation is represented either by a vector $\pi$ with entries in $[n]$ or by an \(n\times n\) orthogonal and \cgp{$\{0,1\}$-}matrix $\Pi$. We denote as \(\pi_r\) the permutation that {\em reverses} the elements of \([n]\), i.e., \(\pi_r(i) = n-1-i\), and \(\pi_s\) the {\em cyclic (right) shift} on \([n]\), i.e., \(\pi_s(i) = i + 1\bmod n\). We consider the {\em action by conjugation} of \(\sym\) over the set of \(n\times n\) matrices, which is defined by \((\Pi, A)\mapsto \Pi A\Pi^T\). If \(S\subset \sym\) we denote \(\gen{S}\) the subgroup generated by the elements of \(S\). Finally, we denote the dihedral group of $2n$ different symmetries of a regular polygon with $n$ sides as $\dihedral_n$.

For a countable set \(\X\) and an enumeration \(x: i \to x(i)\) we write \(x_i\) to denote \(x(i)\) and let \(\enu x\) be the integer such that \(x(\enu x) = x\). In this work, we consider finite sets of cardinality \(n\). \sac{An enumeration becomes a bijection \([ n] \mapsto \X\) with inverse \(\enu:\X\to [n]\).}

The notion of \cgp{an} {\em ordered set} will play a crucial role. A {\em linear order} on \(\X\) is a relation \(\leq\) on \(\X^2\) that is reflexive, antisymmetric, transitive and total. The pair \((\X,\leq)\) is a {\em linearly ordered set}. We say \(x_0,\ldots, x_{N-1}\) are {\em linearly ordered} if \(x_i\leq x_{i+1}\) for \(i\in[N]\). A {\em cyclic order} on \(\X\) is a relation \(\circO\) on \(\X^3\) that is cyclic, antisymmetric, transitive and total. The pair \((\X,\circO)\) is a {\em cyclically ordered set}. A cyclic order induces a linear order on \(\X\). For \(x_0\in \X\) we define the linear order \(\leq_{\circO,x_0}\) as \(x \leq_{\circO,x_0} y\) if and only if \((x_0, x, y) \in \circO\). Finally, we say \(x_0,\ldots, x_{N-1} \in \X\) are {\em cyclically ordered} if \(x_i \leq_{\circO,x_0} x_{i+1}\) for \(i\in [N]\). See \cite{novak1982cyclically} for more details.

\section{The seriation problem and Robinson dissimilarities} \label{sec:seriation_dissim}

%The seriation problem is formulated as follows. 
We introduce the seriation problem. 
Given a set $\mathcal{M}$ of $n\times n$ real matrices, let the {\em pre-$\mathcal{M}$ class} be the orbit of \(\mathcal{M}\) under the action of \(\sym\) by conjugation. The {\em abstract seriation problem} can be stated as \cite{recanati2018reconstructing}

\vspace{4pt}
\begin{center}
    {\em Given $A$ in pre-$\mathcal{M}$ find $\Pi$ in $\sym$ such that $\Pi A\Pi^T$ is in $\mathcal{M}$.}
\end{center}
\vspace{4pt}
The seriation problem is determined by the class $\mathcal{M}$. A {\em solution} to the seriation problem for \(A\) is any permutation \(\Pi\) satisfying the above. \cgp{We denote the set of all solutions by} \(\mathcal{S}_{\mathcal{M}}(A)\).

%\cgp{ESTE P\'ARRAFO PARECE NO CORRESPONDER AQU\'I}
%The seriation problem arises in applications when we consider a finite set \(\X\) and we know the {\em dissimilarity} between its %enumerated
%elements \(x_1,\ldots, x_n\). Ideally, a solution \(\pi\) to the seriation problem induces a linear order \(\leq\) on \(\X\) such that 
%\[
%    $x_{\pi(1)} \leq \ldots \leq x_{\pi(n)}.$ 
%\]
%In this case, \(\pi\) and the order \(\leq\) {\em orders} or {\em ranks} the elements of \(\X\) in a way that is consistent with their dissimilarities.

% The seriation problem arises in applications~\sac{[REF]} when we consider a linearly ordered finite set \((X,\leq)\) and a set \(\mathcal{M}\) whose elements encode {\em orderings} of the elements of \(\X\). In this case, any solution can be interpreted as a permutation that {\em orders} or {\em ranks} the elements of \(\X\), i.e.
% \begin{equation}
% \label{eq:linearSeriation:naturalSymmetry}
%     x_{\pi(1)} \geq \ldots \geq x_{\pi(n)}\quad\mbox{or}\quad x_{\pi(1)} \leq \ldots \leq x_{\pi(n)}.
% \end{equation}
% In other words, the solution shows how inconsistent is the enumeration of \(\X\) with the order on \(\X\). Problems of this form are usually called {\em seriation problems}.

We study two questions about this problem: \csl{{\em for which class \(\mathcal{M}\) can we ensure a solution exists?}} and, given this class, {\em is there an efficient algorithm to solve the seriation problem for any \(A\)?} \csl{In this work we focus on the case where the class \(\mathcal{M}\) is induced by a dissimilarity on a finite set \(\X\). Our goal is to provide an answer when this dissimilarity may induce a linear or cyclic order on this set.} For this reason, we explicitly distinguish between the {\em linear seriation problem} and the {\em circular seriation problem}; the {\em seriation problem} refers to either of them.

% We study two questions about this problem: {\em how to construct a suitable class \(\mathcal{M}\) for which such a solution exists?} and, given this class, {\em is there an efficient algorithm to solve the seriation problem for any \(A\)?}. Our goal is to provide an answer when we allow for both linear and cyclic orders. For this reason, we explicitly distinguish between the {\em linear seriation problem} and the {\em circular seriation problem}; the {\em seriation problem} refers to either of them.

To answer the first question, in \Cref{sec:seriation_dissim:rob_dissim} we characterize dissimilarities that admit such linear or cyclic orders, and in \Cref{sec:seriation_dissim:rob_mat} we discuss how these induce a suitable class of matrices for the seriation problem. We defer the answer to the second question to \Cref{sec:consecutiveOnes} and~\Cref{sec:algorithm}.

% Our goal is both to extend the seriation problem and to answer these questions when the set \(\X\) is {\em cyclically ordered}. For this reason, in this work we explicitly distinguish between the {\em linear seriation problem} and the {\em circular seriation problem}; the {\em seriation problem} refers to either of them. 

% We first review some important results about linear Robinson dissimilarities and introduce some concepts that will be used afterwards. Then, we introduce circular Robinson dissimilarities as the natural extension of linear Robinson dissimilarities to cyclically ordered sets. Finally, we discuss how they induce linear and Robinson dissimilarity matrices, their relation to the seriation problem, and some natural symmetries of the solutions to the seriation problem.

\subsection{Robinson dissimilarities}
\label{sec:seriation_dissim:rob_dissim}

\def\dfun{\mathbf{d}}
\def\R{\mathbb{R}}

\cgp{Most of the work on the seriation problem has focused on matrix dissimilarities. In this work, we primarily focus on an equivalent formulation based on abstract pairwise dissimilarities. This allows for a more transparent presentation of algorithms, as well as a more natural extension to the case of infinite continuous sets. Later, we provide a formal connection between the two concepts.}

A {\em dissimilarity} or {\em premetric} \(\dfun:\X^2\to \R\) on \(\X\) is a non-negative and symmetric function %\(\dfun:\X^2\to \R\) 
that is identically zero on the diagonal. {\em Robinson dissimilarities} are %a class of 
dissimilarities to which we can associate a linear or cyclic order on \(\X\).

\subsubsection{Linear Robinson dissimilarities}
\label{sec:seriation_dissim:rob_dissim:linear}

\def\HGr{\mathcal{H}}
\def\leqCxo{\leq_{\circO,x_0}}
\def\leCxo{<_{\circO,x_0}}

Linear Robinson dissimilarities admit a family of linear orders on \(\X\).

\begin{defi}[The linear Robinson property] \label{defi:RobDiss}
    A dissimilarity \(\dfun\) on \(\X\) is {\em linear Robinson} if there exists a linear order \(\leq_{\dfun}\) on \(\X\) such that
    \begin{equation}
    \label{eq:abslinrob}
        \forall\,\, \mbox{linearly ordered \(x,y,z\in \X\)}:\,\,
        \dfun(x, z) \geq \max\{\dfun(y, x), \dfun(y, z)\}.
    \end{equation}
    It is {\em strictly linear Robinson} if all the inequalities are strict. We say \(\leq_{\dfun}\) is {\em consistent} with \(\dfun\) and that \(\dfun\) is linear Robinson with respect to \(\leq_{\dfun}\).
\end{defi}

% Linear Robinson matrices are associated to the linear seriation problem. In this section we consider a linearly ordered finite set \((\X, \leq)\) and we consider an enumeration that is consistent with this order, i.e. \(x_i \leq x_j\) if and only if \(i \leq j\).

% \begin{defi}[The linear Robinson property] \label{RM}
% A \(n\times n\) dissimilarity matrix $D$ is {\em linear Robinson} if
% \begin{equation}
% \label{eq:linrob}
%     \forall\, i,j,k\in [n],\, i \leq j \leq k:\quad
%     D(i, j) \leq D(i, k)\quad\mbox{and}\quad
%     D(j, k) \leq D(i, k).
% \end{equation}
% It is {\em strictly linear Robinson} if all the inequalities are strict.
% \end{defi}

% When the dimension is understood from context, the set of linear and strictly linear Robinson dissimilarity matrices will be denoted $\linear$ and $\slinear$ respectively. Considering each one of these sets leads to the {\em linear seriation problem} and the {\em strict linear seriation} respectively.

Linear Robinson dissimilarities preserve the {\em intervals} defined by {\em any} consistent order~\cite{mirkin1984graphs}. From \Cref{defi:RobDiss} it follows that for any \(r >0\) and \(x\in\X\) the {\em (closed) balls} $B_r^{\dfun}(x) \defined \{y\in \X:\,\, \dfun(x,y) \leq r\}$ are intervals in \((\X,\leq_{\dfun})\). In fact, this property uniquely characterizes linear Robinson dissimilarities. To prove this converse, the appropriate structure to analyze is the {\em hypergraph} \(\HGr_{\dfun}\) with vertex set \(\X\) and hyperedge set \(\mathbf{B}_{\dfun}\defined \{B_r^{\dfun}(x):\,\, x\in \X,\, r > 0\}\). This hypergraph is called an {\em interval hypergraph} if every hyperedge is an interval~\cite{kobler2017circular}.

\begin{prop}[\cite{mirkin1984graphs}]
\label{prop:lin}
Let \(\dfun\) be a dissimilarity on \(\X\). The following are equivalent:
\begin{enumerate}
    \item $\dfun$ is linear Robinson.
    % \item There exists \(\leq_{\dfun}\) such that for $x,y,z\in\X$ with $x \leq_{\dfun} y \leq_{\dfun} z$ we have \(\dfun(x, z) \geq \max\{\dfun(x, y), \dfun(y, z)\}\) (resp. \(\dfun(x, z) \geq \max\{\dfun(x, y), \dfun(y, z)\}\)).
    \item The hypergraph \(\HGr_{\dfun}\) is an interval hypergraph.
\end{enumerate}
\end{prop}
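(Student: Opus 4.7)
The plan is to establish the two implications separately. For $(1)\Rightarrow (2)$, fix a linear order $\leq_{\dfun}$ consistent with $\dfun$, pick a ball $B_r^{\dfun}(x)$, and take three points $y\leq_{\dfun} w\leq_{\dfun} z$ with $y,z\in B_r^{\dfun}(x)$. To conclude that $w\in B_r^{\dfun}(x)$, I would split on the position of $x$ relative to $w$ in $\leq_{\dfun}$: if $x\leq_{\dfun} w$, then $x\leq_{\dfun} w\leq_{\dfun} z$ is linearly ordered and applying \eqref{eq:abslinrob} to this triple gives $\dfun(x,w)\leq \dfun(x,z)\leq r$; otherwise $w\leq_{\dfun} x$, so $y\leq_{\dfun} w\leq_{\dfun} x$ is linearly ordered and \eqref{eq:abslinrob} gives $\dfun(w,x)\leq \dfun(y,x)\leq r$. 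Either way $w$ lies in the ball, so every element of $\mathbf{B}_{\dfun}$ is an interval in $(\X,\leq_{\dfun})$.

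For the converse $(2)\Rightarrow (1)$, the key ingredient is that an interval hypergraph on a finite ground set admits a linear order of its vertices under which every hyperedge is an interval. This is exactly the consecutive-ones characterization of interval hypergraphs (accessed, for instance, through the $PQ$-tree machinery of Booth and Lueker) to be developed in the upcoming \Cref{sec:consecutiveOnes}. Invoking it on $\HGr_{\dfun}$ produces a linear order $\leq_{\dfun}$ on $\X$ with respect to which every ball $B_r^{\dfun}(x)\in \mathbf{B}_{\dfun}$ is an interval; this will be my candidate consistent order.

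It remains to verify that $\dfun$ satisfies \eqref{eq:abslinrob} with respect to this $\leq_{\dfun}$. Given linearly ordered $x\leq_{\dfun} y\leq_{\dfun} z$, I would consider the ball $B_{\dfun(x,z)}^{\dfun}(x)$. It contains $x$ and $z$ by definition; since it is an interval in $(\X,\leq_{\dfun})$ and $y$ sits between them, it must also contain $y$, giving $\dfun(x,y)\leq \dfun(x,z)$. The symmetric argument applied to $B_{\dfun(x,z)}^{\dfun}(z)$ yields $\dfun(y,z)\leq \dfun(x,z)$, which together establish the Robinson inequality and hence that $\leq_{\dfun}$ is consistent with $\dfun$.

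The main obstacle is the existence step in $(2)\Rightarrow (1)$: producing a linear order of $\X$ compatible with the interval hypergraph structure. All the genuinely combinatorial content is concentrated there, while both the ``Robinson implies every ball is an interval'' direction and the final verification reduce to a single application of \eqref{eq:abslinrob} or of the interval property, respectively.
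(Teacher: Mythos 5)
Your argument is correct, and for context: the paper does not prove \Cref{prop:lin} at all --- it is imported from Mirkin and Rodin \cite{mirkin1984graphs} --- so there is no in-paper proof to compare against. Both of your implications are sound: the case split on the position of $x$ relative to $w$ in the forward direction is exactly the right move, and the verification of \eqref{eq:abslinrob} via the two balls $B_{\dfun(x,z)}^{\dfun}(x)$ and $B_{\dfun(x,z)}^{\dfun}(z)$ is the standard converse argument.

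Two small remarks. First, you overstate the difficulty of the ``existence step'' in $(2)\Rightarrow(1)$: an interval hypergraph is by definition one for which \emph{some} linear order of the vertex set renders every hyperedge an interval, so the consistent order is handed to you by hypothesis (2) itself; no consecutive-ones or $PQ$-tree machinery is needed, and there is no residual combinatorial content in that step. (The $PQ$-tree apparatus is relevant to the algorithmic question of \emph{finding} such an order, which is a different matter from this equivalence of properties.) Second, a pedantic edge case: the hyperedge set $\mathbf{B}_{\dfun}$ only contains balls with $r>0$, so if $\dfun(x,z)=0$ for distinct $x\leq_{\dfun} y\leq_{\dfun} z$ your chosen ball $B_{\dfun(x,z)}^{\dfun}(x)$ is not a hyperedge. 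This is repaired in one line by applying the interval property to $B_r^{\dfun}(x)$ for every $r>0$ and letting $r\downarrow 0$, which forces $\dfun(x,y)=\dfun(y,z)=0$ as required.
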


% Linear Robinson matrices preserve {\em intervals} in \([n]\)~\cite{mirkin1984graphs}. In fact, from the definition follows that for any \(r >0\) and \(i\in [n]\) the {\em (closed) balls}
% \[
%     B_r^D(i) \defined \{j\in [n]:\,\, D(i,j) \leq r\}
% \]
% are intervals in \([n]\). This property uniquely characterizes linear Robinson matrices. To prove this converse, the appropriate structure to analyze is the {\em hypergraph} \(\HGr\) with vertex set \([n]\) and edge set
% \[
%     \mathbf{B}_D\defined \{B_r^D(i):\,\, i\in [n],\, r > 0\}.
% \]
% This hypergraph is called an {\em interval hypergraph} if every hyperedge is an interval~\cite{kobler2017circular}.

% \begin{prop}[\cite{mirkin1984graphs}]
% \label{prop:lin}
% Let \(D\) be a \(n\times n\) dissimilarity matrix. The following statements are equivalent:
% \begin{enumerate}
%     \item $D$ is linear Robinson (resp. stricly linear Robinson).
%     \item For $i,j,k\in[n]$ such that $i \leq j \leq k$ we have \(D(i, k) \geq \max\{D(i, j), D(j, k)\}\) (resp. \(D(i, k) \geq \max\{D(i, j), D(j, k)\}\)).
%     \item The hypergraph \(\HGr_D\) is an interval hypergraph.
% \end{enumerate}
% \end{prop}

% As we shall see, the property that \(\HGr_{\dfun}^{\X}\) is an interval hypergraph allows for the efficient solution of the linear seriation problem with linear Robinson matrices.

%Preserving intervals is not enough to yield uniqueness of the consistent order. 
\cgp{It is easy to see that orderings consistent with a given linear Robinson dissimilarity are never unique.} 
This follows from the {\em natural symmetries} of Robinson dissimilarities. Let \(\leq_{\dfun}\) be consistent with respect to \(\dfun\). Its {\em reversal} \(\leq'_{\dfun}\) is the linear order defined by \(x\leq_{\dfun}' y\) if and only if \(y\leq_{\dfun} x\). It is clear that \(\dfun\) is linear Robinson with respect to \(\leq_{\dfun}\) if and only if it is so with respect to \(\leq'_{\dfun}\).  %Therefore, for Robinson dissimilarities the consistent order in \(\X\) is not unique. 
\cgp{Hence, the reversal of an ordering consistent with a dissimilarity is always consistent with the dissimilarity. Furthermore, there could be other consistent orderings in the case the dissimilarity is not strict, as we will see in \Cref{sec:consecutiveOnes}.}

% A classical result states that \(\dfun\) is linear Robinson with respect to \(\leq_{\dfun}\) if and only if it is so with respect to \(\leq'_{\dfun}\)~\sac{[CITE]}. Therefore, Robinson dissimilarities do not allow us to recover an underlying linear order in \(\X\). 

% Although linear Robinson matrices preserve intervals, they do not preserve the order \(\leq\). This is evident from the {\em natural symmetries} of Robinson matrices. A classical result states that $D$ is linear Robinson if and only if $\Pi_rD\Pi_r^T$ is linear Robinson~[CITE]. This shows that linear Robinson dissimilarities can only preserve the ordering on \(\X\) up to a reversal; the best we can hope for is to find an increasing or decreasing  ordering, but in general we are unable to determine which one is the true order in \(\X\). This is precisely the result stated in~\eqref{eq:linearSeriation:naturalSymmetry}. Remark that $\dihedral_1\cong \langle \pi_r, \id\rangle$ and consequently linear Robinson matrices are invariant under the action of \(\dihedral_1\) by conjugation.

\subsubsection{Circular Robinson dissimilarities}
\label{sec:seriation_dissim:rob_dissim:circ}

Circular Robinson dissimilarities a\-rise naturally when we allow for cyclic orders.

\begin{defi}[The circular Robinson property]
A dissimilarity $\dfun$ on \(\X\) is {\em circular Robinson} if there exists a cyclic order \(\circO_{\dfun}\) such that
\sac{
\[
    \forall\,\, \mbox{cyclically ordered \(w,x,y,z\in \X\)}:\quad  \dfun(y,w)\geq \min\{\dfun(y,x), \dfun(y,z)\}.
\]}
We say it is {\em strict circular Robinson} if the inequality is strict. We say \(\circO_{\dfun}\) is {\em consistent} with \(\dfun\) and that \(\dfun\) is linear Robinson with respect to \(\circO_{\dfun}\).
\end{defi}

Circular Robinson dissimilarities preserve the {\em arcs} of {\em any} compatible order, i.e., sets of the form \(\{x\in \X:\,\, (m, x, M) \in \circO_{\dfun}\}\) for \(m, M\in \X\) called the {\em borders} of the arc. Arcs are the natural analogues of intervals for a cyclic order. Consequently, we say \(\HGr_{\dfun}\) is an {\em arc hypergraph} if all its hyperedges are arcs. %We present the equivalent result to
The analog of \cref{prop:lin} for a cyclic order is the following. %\cite[Proposition 5]{brucker2008hypercycles}.

\begin{prop}\label{prop:arc_is_ball}
(\cite[Proposition 5]{brucker2008hypercycles})
Let \(\dfun\) be a dissimilarity. % on \(\X\). 
The following are equivalent:
\begin{enumerate}
    \item \(\dfun\) is circular Robinson.
    \item The hypergraph \(\HGr_{\dfun}\) is an arc hypergraph.
\end{enumerate}
\end{prop}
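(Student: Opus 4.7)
The plan is to prove both implications by translating between the pointwise Robinson-type inequality and the arc structure of balls, following the pattern of the linear case (\Cref{prop:lin}). Both directions pivot on a single elementary observation: in a cyclic order, a 4-tuple $w,x,y,z$ encodes that $w$ and $y$ are ``cyclically separated'' by $\{x,z\}$, in the sense that every arc of $\circO_{\dfun}$ containing both $w$ and $y$ must contain at least one of $x$ or $z$ (otherwise, the arc would be disconnected along the cycle).

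For $(1) \Rightarrow (2)$, I would fix $y \in \X$ and $r > 0$ and argue that $B_r^{\dfun}(y)$ is an arc of $\circO_{\dfun}$ by showing that its complement is ``cyclically convex on the side opposite $y$''. Precisely: if $w_1, w_2 \notin B_r^{\dfun}(y)$ and $u \in \X$ is such that $u, w_1, y, w_2$ are cyclically ordered, then applying the Robinson inequality to this 4-tuple with $y$ playing the central role yields
\[
  \dfun(y, u) \;\geq\; \min\{\dfun(y, w_1), \dfun(y, w_2)\} \;>\; r,
\]
so $u \notin B_r^{\dfun}(y)$. Taking $w_1, w_2$ to be the elements of $\X \setminus B_r^{\dfun}(y)$ cyclically closest to $y$ on the two sides then exhibits $B_r^{\dfun}(y)$ as the arc with borders $\{w_1, w_2\}$, with the trivial cases (empty or singleton complement, ball equal to $\X$) handled by convention. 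For $(2) \Rightarrow (1)$, I would take $\circO_{\dfun}$ to be a cyclic order witnessing the arc hypergraph property and proceed by contradiction: if $w, x, y, z$ are cyclically ordered but $\dfun(y, w) < \min\{\dfun(y, x), \dfun(y, z)\}$, then setting $r := \dfun(y, w)$ places $y, w \in B_r^{\dfun}(y)$ while $x, z \notin B_r^{\dfun}(y)$, directly violating the separation observation above.

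The main technical obstacle I anticipate is a careful treatment of the definition of arc and its borders — whether borders are included, and how to identify $w_1, w_2$ in the forward direction when the complement has zero or one elements, or when there are coincidences among the four points of the 4-tuple. A secondary point is to consistently exploit the rotational symmetry of cyclic orders, $(a, b, c) \in \circO \Leftrightarrow (b, c, a) \in \circO$, in order to apply the Robinson inequality with the correct relabeling so that the ``central'' element (the one appearing on both sides of the inequality) sits in the right slot of the 4-tuple. Once these bookkeeping points are settled, the argument closely mirrors the interval-hypergraph characterization in the linear case, with arcs replacing intervals and the $\min$-form of the Robinson inequality replacing the $\max$-form.
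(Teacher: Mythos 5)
Your argument is correct. Note first that the paper does not prove this proposition at all: it is imported verbatim as Proposition~5 of \cite{brucker2008hypercycles}, so there is no in-paper proof to compare against. On its own merits, your proof is the natural cyclic analogue of the interval-hypergraph characterization and both directions are sound: in $(1)\Rightarrow(2)$ you match the slots of the quadruple correctly (in $(u,w_1,y,w_2)$ the ``far'' element is $u$, so the Robinson inequality bounds $\dfun(y,u)$ from below by the minimum over the two neighbours $w_1,w_2$), and choosing $w_1,w_2$ as the extremal excluded points with respect to the linear order $\leq_{\circO_{\dfun},y}$ shows that the complement of the ball is exactly the arc between them avoiding $y$, hence the ball is an arc; in $(2)\Rightarrow(1)$ the separation observation is just the connectedness of arcs. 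The only substantive detail beyond the bookkeeping you already flag is in the converse: the hyperedge set $\mathbf{B}_{\dfun}$ contains only balls of radius $r>0$, so when $\dfun(y,w)=0$ your choice $r:=\dfun(y,w)$ is not an admissible radius; take instead any $0<r<\min\{\dfun(y,x),\dfun(y,z)\}$ (this minimum is strictly positive under the assumed failure of the inequality), which still places $y,w$ inside and $x,z$ outside the ball and yields the same contradiction.
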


% This shows how our definition of circular Robinson matrices nicely extends that of linear Robinson matrices: whereas the latter preserves intervals in a linearly ordered set, the former preserves arcs in a cyclically ordered set.

Similarly to the linear case, consistent orderings in the circular case are never unique. %preserving arcs is not sufficient to yield uniqueness of the consistent order. 
Let $\circO_{\dfun}$ be consistent with respect to \(\dfun\). In this case, its {\em reversal} \(\circO_{\dfun}'\) is the cyclic order such that \((x,y,z) \in \circO_{\dfun}'\) if and only if \((z,y,x)\in \circO_{\dfun}\). By definition, \(\dfun\) is circular Robinson with respect to \(\circO_{\dfun}\) if and only if it is so with respect to \(\circO_{\dfun}'\).

% We conclude our analysis of circular Robinson matrices by characterizing its symmetries.
% \begin{prop}
%     A dissimilarity $\dfun$ is circular Robinson if and only if $\dfun\circ\pi_r$ and $\dfun\circ\pi_s$ are circular Robinson matrices.
% \end{prop}

% \begin{proof}
% First, notice that the $(i,j)$ entry of $\Pi_sD\Pi_s^T$ and $\Pi_rD\Pi_r^T$ are $D(i+1\bmod n, j+1\bmod n)$ and $D(n-1-i,n- 1-j)$, respectively. Noticing that $\{D(i\bmod n, i+j\bmod n)\}_{j=0}^{n-1}$ is unimodal for all $i$, we have that $\{D(i+1\bmod n, i+j+1\bmod n)\}_{j=0}^{n-1}$ and $\{D(n-1-i\bmod n, n-1-i+j\bmod n)\}_{j=0}^{n-1}$ are unimodal. Therefore  $\Pi_rD\Pi_r^T$ and $\Pi_sD\Pi_s^T$ are also circular Robinson.
% \end{proof}

% Consequently, in the circular case we cannot recover the cyclic order. Instead, we can characterize consistent orderings up to reversals of its elements and cyclic shifts.

% Since \(\dihedral_n \cong \gen{\pi_r, \pi_s}\) it follows that circular Robinson matrices are invariant under the action of \(\dihedral_n\) by conjugation. This invariance is particular to our definition and should not be taken for granted. Other definitions proposed in the literature, e.g., \cite{recanati2018reconstructing}, do not enjoy this property. We believe that cyclic invariance makes our definition arguably more natural.

\subsection{Robinson matrices}
\label{sec:seriation_dissim:rob_mat}

\begin{figure}
    \centering
    %\subfloat[(Linear) Robinson dissimilarity matrix]{\label{fig:ejemproblin}{ }}%
    \includegraphics[width=4cm]{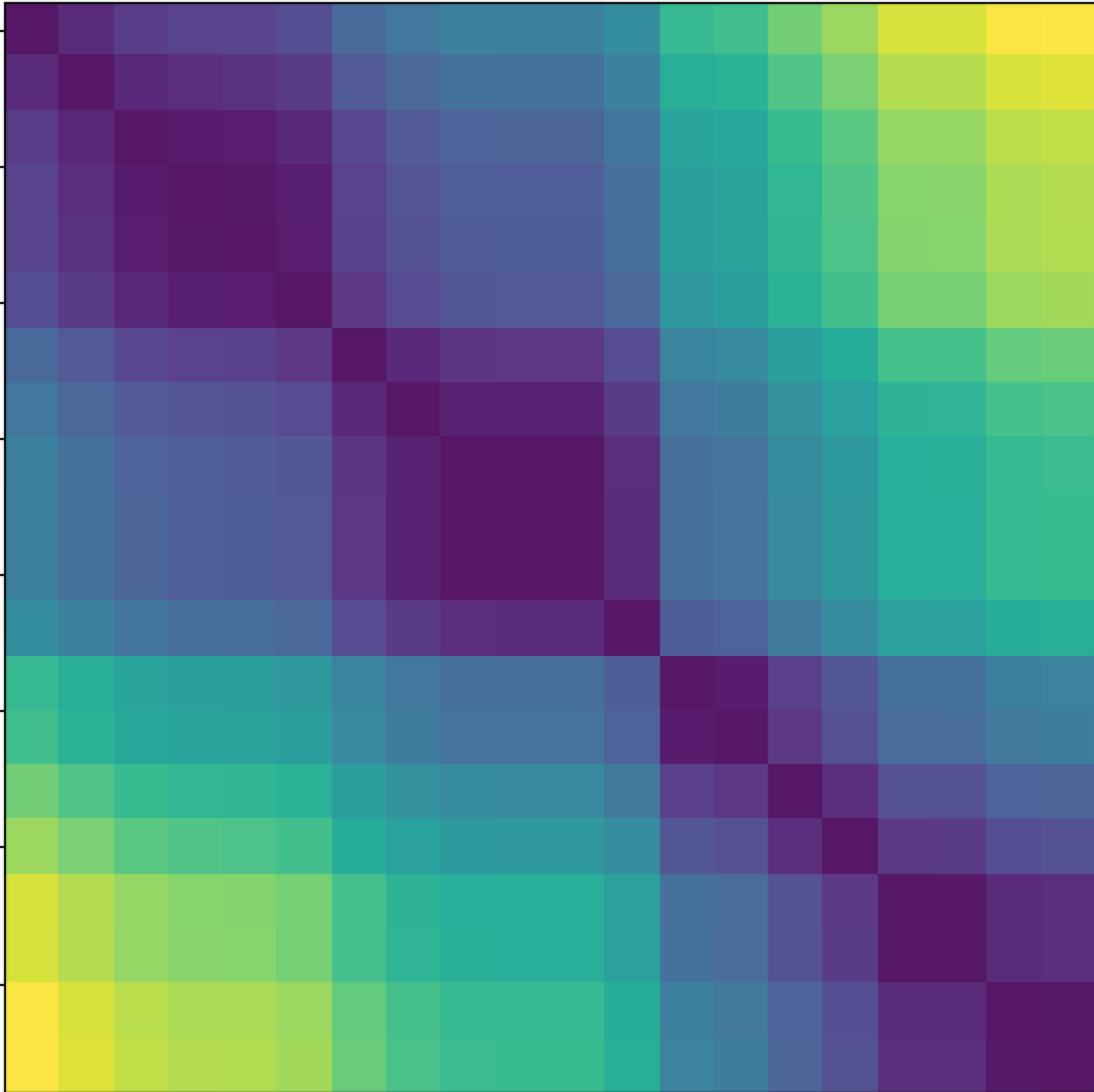}
    \qquad
    \includegraphics[width=4cm]{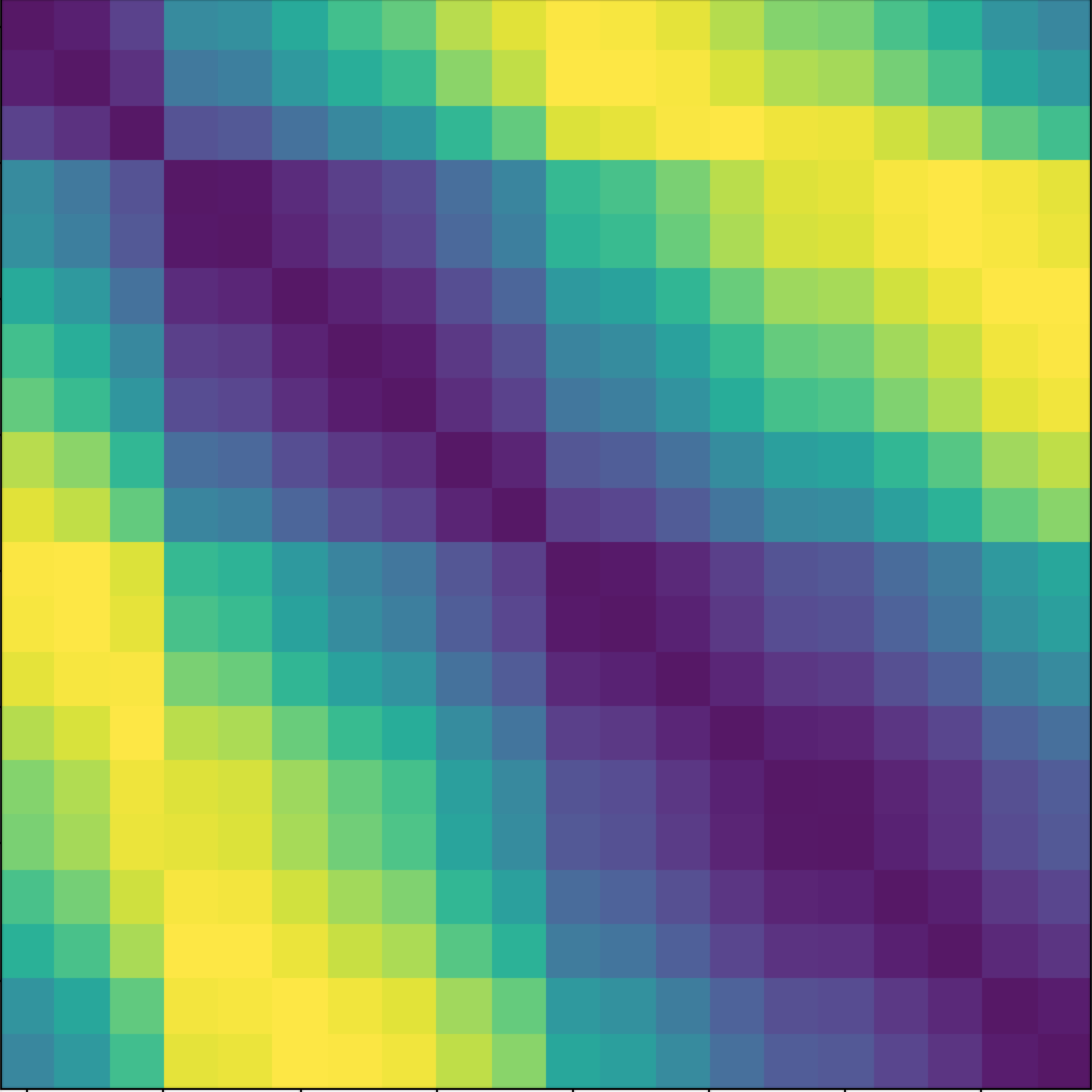}
    %\subfloat[Circular Robinson dissimilarity matrix]{\label{fig:ejemprobcirc}{\includegraphics[width=4cm]{Figs/circrob.png} }}%
    \caption{Example of a linear Robinson dissimilarity matrix (in the left) and a circular Robinson dissimilarity matrix (in the right)}\label{fig:ejemprob}
\end{figure}

Let \(\dfun\) be a dissimilarity on \(\X\). To any enumeration \(x:[n] \to \X\) we can associate the \(n\times n\) {\em dissimilarity matrix} \(D\) with entries \(D(i, j) := \dfun(x_i, x_j)\). It is always non-negative, symmetric, and with zero-diagonal. However, some enumerations will endow \(D\) with additional properties. This leads us to \cgp{the concept of} {\em Robinson matrices}.

\subsubsection{Linear Robinson matrices}

If \(\dfun\) is consistent with respect \cgp{to} \(\leq_{\dfun}\) there exists an enumeration of \(\X\) such that for \(i,j\in [n]\) we have \(i\leq j\) if and only if \(x_i \leq_{\dfun} x_j\). In this case, it follows that \(D\) induces a linear Robinson dissimilarity on \([n]\). 

\begin{defi}[Linear Robinson matrix]
\label{RM}
    A dissimilarity 
    matrix \(D\) is {\em linear Robinson} if
    \begin{equation}
    \label{eq:linrob}
        \forall\,\, \mbox{linearly ordered \(i,j,k\in [n]\)}:\,\,
        D(i,k) \geq \max\{D(j,i), D(j,k)\}.
    \end{equation}
    It is {\em strictly linear Robinson} if all the inequalities are strict.
\end{defi}

This implies \(D\) is consistent with the standard order on \([n]\) and \Cref{prop:lin} holds for \(D\) when \(\X = [n]\). 

%From the definition, we also deduce that
%\[
%    \forall\,\, \mbox{linearly ordered \(i,j,k\in[n]\)}:
%    \,\,\mbox{\(D(i,j) \leq D(i,k)\) and \(D(j,k) \leq D(i,k)\).}
%\]
When the dimension is understood from context, the set of linear and strictly linear Robinson dissimilarity matrices will be denoted $\linear$ and $\slinear$ respectively. Considering each one of these sets as \(\mathcal{M}\) leads to {\em linear seriation} and {\em strict linear seriation}, respectively. 

Note linear Robinson matrices inherit the symmetries from the dissimilarity. In fact, it can be verified that \(D\) is linear Robinson if and only if \(\Pi_r D\Pi_r^T\) is linear Robinson. \cgp{Since} $\dihedral_1\cong \gen{\pi_r}$ \cgp{then} linear Robinson matrices are invariant under the action of \(\dihedral_1\) by conjugation.

\subsubsection{Circular Robinson matrices}

For the circular case, we endow the set \([n]\) with the standard cyclic order \(\circO_n\)
\begin{equation}
\label{eq:circOn}
    (i,j,k)\in \circO_n \quad\Longleftrightarrow\quad (i<j<k) \,\,\vee\,\, (j<k<i) \,\,\vee\,\, (k<i<j).
\end{equation}
We still denote the standard linear order in \([n]\) as \(\leq\).

If \(\dfun\) is consistent with respect \cgp{to} \(\circO_{\dfun}\) there exists an enumeration of \(\X\) such that \((i,j,k)\in \circO_{n}\) if and only if \((x_i, x_j, x_k) \in \circO_{\dfun}\). Similarly to the linear case, this implies \(D\) induces a circular Robinson dissimilarity on \([n]\).

\begin{defi}[Circular Robinson matrix]\label{CRM}
%A dissimilarity is 
A dissimilarity matrix $D$ is {\em circular Robinson} if
\sac{
\[
    \forall\,\, \mbox{cyclically ordered \(i,j,k,\ell\in [n]\)}:\quad  D(i,k)\geq \min\{D(j,k), D(k,\ell)\}.
\]}
\end{defi}

Therefore, \(D\) is consistent %with respecto to
w.r.t.~\(\circO_n\) and \Cref{prop:arc_is_ball} holds for \(D\) when \(\X = [n]\). When the dimension is understood from context, the set of circular and strictly circular Robinson matrices \cgp{is denoted by}  $\circular$ and $\scircular$, respectively. Considering each one of these sets leads to the {\em circular seriation problem} and the {\em strict circular seriation} respectively. Comparing \cref{RM} and \cref{CRM}, it is apparent that every linear Robinson matrix is also circular Robinson. In this sense, the notion of circular Robinson extends that of linear Robinson. The difference between linear and circular Robinson matrices is illustrated in~\cref{fig:ejemprob}.

We provide an alternative definition for circular Robinson matrices that will be useful in what follows. Let \(f:[n]\to \R\). A {\em mode} is any \(m\in [n]\) such that
\[
    \forall\, i,j\in [n]:\,\,\cgp{\big(}\mbox{\(i \leq j \leq m\) or \(m \leq j \leq i\)}\cgp{\big)}\,\,\Rightarrow\,\, f_i\leq f_j \leq f_m.
\]
We say \(f\) is {\em unimodal} if it has a mode. We say \(f\) is {\em strictly unimodal} if it has at most two distinct, consecutive modes \(m_1 \leq m_2\) with \(f_{m_1} = f_{m_2}\) and
\[
    \forall\, i,j\in [n]:\,\, \cgp{\big(}\mbox{\(i < j < m_1\,\,\Rightarrow\,\, f_i< f_j< f_{m_1}\) \cgp{\big)} and \cgp{\big(}\(i > j > m_2\,\,\Rightarrow f_i < f_j < f_{m_2}\cgp{\big)}\).}
\]
\csl{Notice that when \(m_1 < m_2\) there are two modes, whereas when \(m_1 = m_2\) there is only one mode. Our definition above allows us to treat both cases simultaneously.}
From the definition it is clear that every subsequence of a strictly unimodal sequence is also strictly unimodal. The proof of the following \cgp{result} is deferred to \Cref{apx:proofs:circ}.

\begin{prop}\label{prop:rob_is_unimod}
Let \(D\) be a dissimilarity matrix. The following are equivalent:
\begin{enumerate}
    \item \(D\) is circular Robinson (resp. strict circular Robinson).
    \item For any \(i\in [n]\) the function \(j\to D(i, i+j\bmod n)\) is unimodal (resp. strict unimodal).
\end{enumerate}
\end{prop}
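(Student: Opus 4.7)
The plan is to recast the circular Robinson condition as a family of conditions on the ``row sequences'' of $D$ and then apply a standard no-valley characterization of unimodality.

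First, I would fix $i \in [n]$ and set $f^{(i)}_j := D(i, (i+j)\bmod n)$ for $j \in [n]$, so that $f^{(i)}_0 = 0$. Since the cyclic order $\circO_n$ from \eqref{eq:circOn} is invariant under integer translation modulo $n$ (equivalently, under conjugation by $\pi_s$), every cyclically ordered 4-tuple with third coordinate equal to $i$ has the form $(i+a, i+b, i, i+c)\bmod n$. A direct inspection against the definition of $\circO_n$ shows that $(a,b,0,c)$ is cyclically ordered precisely when $0 < c < a < b \leq n-1$. Under this correspondence, the defining inequality of \Cref{CRM} applied at anchor $i$ reduces to $f^{(i)}_a \geq \min\{f^{(i)}_b, f^{(i)}_c\}$ for every triple $(c,a,b)$ with $0 < c < a < b \leq n-1$.

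Next I would invoke the elementary no-valley characterization of unimodality: a function $f:[n]\to\mathbb{R}$ is unimodal if and only if $f_a \geq \min\{f_c, f_b\}$ for all $0 \leq c < a < b \leq n-1$, which is in turn equivalent to each super-level set $\{j: f_j \geq t\}$ being an interval of $[n]$. For $f = f^{(i)}$ the boundary case $c = 0$ is automatic because $f^{(i)}_0 = 0$. Combining with the previous step yields the non-strict equivalence.

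For the strict variant the corresponding step is to establish that strict unimodality, in the sense of the excerpt, is equivalent to the strict no-valley inequality $f_a > \min\{f_c, f_b\}$ for all $c < a < b$. The forward direction proceeds by case analysis on the position of $a$ relative to the (at most two) modes $m_1 \leq m_2$, invoking the strict monotonicity in each of the two monotone regions. The converse is by contrapositive: each way strict unimodality can fail (non-unimodality, three or more consecutive modes, or a non-strict step in a monotone region) produces a triple $c < a < b$ witnessing $f_a \leq \min\{f_c, f_b\}$. The main obstacle is the boundary case $c = 0$, where strict no-valley requires $f^{(i)}_a > 0$ for every $a \in [n]\setminus\{0\}$. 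For $a \in \{2,\ldots,n-2\}$ this follows by applying the strict Robinson inequality to a cyclic 4-tuple placing $i$ and $(i+a)\bmod n$ as non-adjacent coordinates, which forces $\min\{f^{(i)}_b, f^{(i)}_c\} \geq 0$ and hence $f^{(i)}_a > 0$; the residual cases $a \in \{1, n-1\}$, for which no such 4-tuple exists, must be handled via the standing convention that a strict dissimilarity is positive off the diagonal.
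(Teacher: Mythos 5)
Your reduction is the same as the paper's: translate each cyclically ordered quadruple anchored at a fixed index into a linearly ordered triple of offsets (the paper's \Cref{prop:cycleSeq}) and then apply the no-valley characterization of unimodality (the paper's \Cref{unimod}). The non-strict half of your argument is correct and, with the explicit disposal of the boundary triple $c=0$ via $f^{(i)}_0=0$ and non-negativity, is if anything more careful than the published version.

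The gap is in your last step. There is no ``standing convention that a strict dissimilarity is positive off the diagonal'': the paper only requires a dissimilarity to be non-negative, symmetric and zero on the diagonal, and strictness of the circular Robinson property refers only to the quadruple inequalities. As you yourself observe, those inequalities never place an adjacent pair $D(i,i+1\bmod n)$ on the large side of a strict comparison, so they cannot force it to be positive; on the other hand, strict unimodality of $j\mapsto D(i,i+j\bmod n)$ does force $D(i,i+1\bmod n)>0$ once $n\geq 3$. Hence the implication $(1)\Rightarrow(2)$ in the strict case is genuinely false without an added positivity hypothesis: for $n=4$ take $D(0,1)=0$, $D(1,2)=D(2,3)=D(3,0)=1$, $D(0,2)=D(1,3)=2$; every strict quadruple inequality $D(k,k+2)>\min\{D(k,k+1),D(k,k-1)\}$ holds, yet the sequence $\big(D(0,0),D(0,1),D(0,2),D(0,3)\big)=(0,0,2,1)$ is not strictly unimodal. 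To be fair, this defect lies in the statement itself --- the paper's own proof writes out only the non-strict case and dismisses the strict one with ``the same arguments'' --- so you have surfaced a real imprecision rather than introduced one; but as written, the residual cases $a\in\{1,n-1\}$ are assumed rather than proved, and closing them requires either adding off-diagonal positivity to the definition of a strict dissimilarity or weakening the strict unimodality claim accordingly.
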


This property is naturally invariant under cyclic permutations.
%An advantage of this matrix is it allows us to reveal additional symmetries.

\begin{prop}
    A dissimilarity matrix $D$ is circular Robinson if and only if $\Pi_rD\Pi_r^T$ and $\Pi_s D\Pi_s^T$ are circular Robinson matrices.
\end{prop}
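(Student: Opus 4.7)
The plan is to reduce the claim to the unimodality characterization of Proposition \ref{prop:rob_is_unimod}, which says that a dissimilarity matrix $A$ is circular Robinson iff every radial sequence $j \mapsto A(i, (i+j) \bmod n)$ is unimodal. Since $\Pi_r^2 = I$ and $\Pi_s$ has finite order $n$, it suffices to prove the forward direction of each equivalence---the reverse direction then follows by iterated conjugation.

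For the shift, I would compute directly that $(\Pi_s D \Pi_s^T)(a,b) = D((a-1)\bmod n,\, (b-1)\bmod n)$, so that
\[
    (\Pi_s D \Pi_s^T)(i, (i+j)\bmod n) \;=\; D\bigl((i-1)\bmod n,\, ((i-1)+j)\bmod n\bigr).
\]
The family of radial sequences for $\Pi_s D \Pi_s^T$ is therefore merely a reindexing (by $i \mapsto (i-1)\bmod n$) of the family for $D$, so unimodality is preserved sequence by sequence. A single invocation of Proposition \ref{prop:rob_is_unimod} then closes this case.

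For the reversal, $(\Pi_r D \Pi_r^T)(a,b) = D(n-1-a, n-1-b)$. Setting $i^* \defined n-1-i$ and $f(j') \defined D(i^*, (i^* + j')\bmod n)$, a short computation gives
\[
    (\Pi_r D \Pi_r^T)(i, (i+j)\bmod n) \;=\; f\bigl((n-j)\bmod n\bigr),
\]
so that the radial sequence for the conjugated matrix at index $i$ is the \emph{cyclic reversal} $h(j) \defined f((n-j)\bmod n)$ of the radial sequence for $D$ at $i^*$. The crux---and the main, though mild, obstacle---is to verify that this cyclic reversal preserves unimodality. The subtlety is that it is not the ordinary reversal $j \mapsto n-1-j$ of a sequence on $[n]$: it must fix $0$, since $h(0) = f(0) = D(i^*, i^*) = 0$.

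I would dispatch this last step with a direct case analysis. If $f$ has mode $m \geq 1$, then tracking the substitution $j' = (n-j)\bmod n$ shows that the increasing segment of $f$ on $[0, m]$ becomes the decreasing segment of $h$ on $[n-m,\, n-1]$, while the decreasing segment of $f$ on $[m, n-1]$ becomes the increasing segment of $h$ on $[0,\, n-m]$ (where the endpoint $h(0) = 0 \leq f(n-1) = h(1)$ is consistent with $h$ being increasing there). Thus $h$ is unimodal with mode at $n-m$, and Proposition \ref{prop:rob_is_unimod} completes the proof. The degenerate case $m = 0$ corresponds to $f \equiv 0$ (since $f(0)=0$ and $f_j \leq f_0$ for all $j$), which is trivial.
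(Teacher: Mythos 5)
Your proof is correct and follows essentially the same route as the paper's: both reduce the claim to the unimodality characterization of \cref{prop:rob_is_unimod} and observe that conjugation by $\Pi_s$ merely reindexes the family of radial sequences. For the reversal your treatment is in fact more careful than the paper's one-line argument, which writes the radial sequence of $\Pi_r D \Pi_r^T$ at row $i$ as $j \mapsto D(n-1-i,\, n-1-i+j \bmod n)$ where the correct index is $-j$; one therefore really does need your additional observation that the zero-fixing cyclic reversal $j \mapsto f((n-j)\bmod n)$ of a unimodal radial sequence with $f(0)=0$ is again unimodal (with mode moved from $m$ to $n-m$), a step you verify explicitly and the paper silently elides.
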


\begin{proof}
First, notice that the $(i,j)$ entry of $\Pi_sD\Pi_s^T$ and $\Pi_rD\Pi_r^T$ are $D(i+1\bmod n, j+1\bmod n)$ and $D(n-1-i,n- 1-j)$, respectively. Noticing that $\{D(i\bmod n, i+j\bmod n)\}_{j=0}^{n-1}$ is unimodal for all $i$, we have that $\{D(i+1\bmod n, i+j+1\bmod n)\}_{j=0}^{n-1}$ and $\{D(n-1-i\bmod n, n-1-i+j\bmod n)\}_{j=0}^{n-1}$ are unimodal. Therefore  $\Pi_rD\Pi_r^T$ and $\Pi_sD\Pi_s^T$ are also circular Robinson.
\end{proof}

Since \(\dihedral_n \cong \gen{\pi_r, \pi_s}\) it follows that circular Robinson matrices are invariant under the action of \(\dihedral_n\) by conjugation. This invariance is particular to the definition and should not be taken for granted. Other definitions proposed in the literature, e.g., \cite{recanati2018reconstructing}, do not enjoy this property. We believe that cyclic invariance makes the definition arguably more natural.

\subsection{Robinson orderings}
\label{sec:robinsonMatrices:robinsonOrderings}

The seriation problem does not assume we observe a linear or circular Robinson dissimilarity matrix, but instead its image under conjugation by an unknown permutation matrix. In other words, we observe matrices in $\preL$ and $\preC$. We call such matrices {\em Robinsonian matrices}.

Given a Robinsonian matrix and an algorithm for the corresponding seriation problem, the set of solutions may not be a singleton. In fact, the symmetries of linear and circular Robinson dissimilarity matrices \sac{{\em ensure}} they will never be a singleton. We call {\em Robinson orderings} \cgp{to} all the orderings  represented by the elements in the set of solutions.

Although there will never be a unique Robinson ordering, we can at least distinguish which ones are due to the natural symmetries of the problem. Therefore, for the linear seriation problem we call solutions in the same orbit under the action \(\dihedral_1\) the {\em trivial solutions} whereas those in different orbits {\em non-trivial solutions}. The same criteria applies for the circular seriation problem when the action of \(\dihedral_n\) is considered instead.

\section{The consecutive ones problem and \(PQ\)-trees}
\label{sec:consecutiveOnes}

Robinson matrices turn out to be natural to formulate the seriation problem. We now review the connection between this problem and the {\em consecutive ones problem}. This connection yields polynomial time algorithms for solving the seriation problem, and allows us to introduce \(PQ\)-trees, which will be extensively used in~\Cref{sec:algorithm}.

\subsection{The consecutive ones problem}
\label{sec:consecutiveOnes:consecutiveOnes}

The linear seriation problem is deeply connected to a combinatorial problem known as the {\em consecutive ones (C1) problem}. To introduce this problem, consider \cgp{an \(m\times n\)} binary matrix \(M\). The C1 problem is to find a permutation \(\Pi\) such that the entries of \(M\Pi\) equal to one appear consecutively along rows. We say \(M\) has the {\em consecutive ones (C1) property} if the C1 problem has a solution for \(M\). An example of such matrix can be found in \cref{ejem:pqtreeM}. The first linear time algorithm for the C1 problem was introduced by Booth and Lueker in~\cite{booth1975linear}. If $f$ is the number of ones in $M$ then their result states the C1 problem can be decided in \cgp{$\mathcal{O}(m+n+f)$} time.

%Consider a $\{0,1\}-$matrix in which each column represents an element of the set $\X$. We denote $Y_i$ as the set of elements (columns) in which there is a $1$ in the $i$-th row. We are interested in finding the set of all permutations of the columns of the matrix in which the elements of the sets $Y_i$ appear consecutively.
%
%A matrix of zeroes and ones is said to have the consecutive ones property \cop\, if there is a permutation of its rows such that the ones in each column appear consecutively (an example of such matrix can be found in \cref{ejem:pqtreeM}). The first linear algorithm to consecutive ones problem was introduced by Booth and Lueker in \cite{booth1975linear}. Their main result is as follows:
%Let $n$ be the larger dimension of matrix $A$, and let $f$ be the number of ones in $A$. Then the consecutive ones problem can be decided in $\mathcal{O}(n+f)$ time.

%\begin{defi}(Consecutive ones property %\cite{booth1975linear})\label{c1p}
%A matrix of zeroes and ones is said to have the consecutive ones property \cop\, if there is a permutation of its rows such that the ones in each column appear consecutively.
%\end{defi}

%The first linear algorithm to consecutive ones problem was introduced by Booth and Lueker in \cite{booth1975linear}. Their main result is as follows.
%\begin{theorem}{(\cite[Theorem 3.1]{booth1975linear})}
%Let $n$ be the larger dimension of matrix $A$, and let $f$ be the number
%of ones in $A$. The consecutive ones problem can be decided in $\mathcal{O}(n+f)$ time.
%\end{theorem}

An extension to this problem is the {\em circular ones (Cr1) problem}. The Cr1 problem is to find a permutation \(\Pi\) such that the entries of \(M\Pi\) equal to one appear consecutively {\em modulo \(n\)} along rows. We say \(M\) has the {\em circular ones (Cr1) property} if the Cr1 problem has a solution for \(M\)~ \cite{tucker1971matrix}. This problem can also be solved efficiently as it can be reduced to the C1 problem. Let \(\overline{M}\) be the matrix such that every row with a 1 on its first entry is complemented. Then \(M\) satisfies the C1 property if an only if \(\overline{M}\) satisfies the Cr1 property~\cite[Theorem 1]{tucker1971matrix}. Therefore, by forming the complement, the \(Cr1\) problem can be decided in polynomial time.

%A more general problem is the \textit{circular ones problem} \ccop, regarding whether a matrix has the circular ones property. %\cite{tucker1971matrix}.
%
%\begin{defi}(Circular ones property \cite{tucker1971matrix})
%%A $(0,1)$-matrix
%Let $M\in\{0,1\}^{m\times n}$. We say that $M$ satisfies the circular ones property \ccop\,\! if the columns can be permuted
%so that the ones in each row appear consecutively modulo $m$. %if we allow them to wrap around from the bottom to the top.
%\end{defi}

%This problem can also be efficiently solved since it can be efficiently reduced to (C1P).
%\begin{lemma}(\cite[Theorem 1]{tucker1971matrix})\label{tuc}
%Let $M_{1}$ be a $\{0,1\}$-matrix. Form the matrix $M_{2}$ from
%$M_{1}$ by complementing (interchanging 0's and 1's) those rows with a 1 in the first column of $M_{1} .$ Then $M_{1}$ satisfies \cop\,
%%has the circular ones property %for columns
%if and only if $M_{2}$ satisfies \ccop.
%%has the consecutive ones property. %for columns.
%\end{lemma}

Both problems are connected to the seriation problem through the \cgp{ball} \sac{hypergraph} \(\HGr_D\), \cgp{introduced in \Cref{sec:seriation_dissim:rob_dissim}.} In fact, interval and arc hypergraphs are precisely those for which their incidence matrices respectively satisfy the C1 and Cr1 properties~\cite{kobler2017circular}. This suggests how to efficiently solve the seriation problem for Robinson matrices.

\begin{theorem}(\cite{chepoi1997recognition,mirkin1984graphs})
The linear and circular seriation problem can be reduced in polynomial time and space to deciding respectively the C1 and Cr1 problem. Robinson matrices can be recognized in $\mathcal{O}\left(n^{3}\right)$ time and with $\mathcal{O}\left(n^{3}\right)$ space.
\end{theorem}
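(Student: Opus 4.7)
My plan is to establish the reduction through the ball hypergraph $\HGr_D$ introduced in \Cref{sec:seriation_dissim:rob_dissim}. By \Cref{prop:lin} (respectively \Cref{prop:arc_is_ball}), a dissimilarity $\dfun$ is linear (respectively circular) Robinson exactly when $\HGr_{\dfun}$ is an interval (respectively arc) hypergraph, which is equivalent to the incidence matrix of $\HGr_{\dfun}$ having the C1 (respectively Cr1) property. The conceptual core of the reduction is that a permutation $\Pi$ placing the ones of the incidence matrix consecutively (possibly modulo $n$) along rows is precisely one that makes every ball $B_r^{\dfun}(x)$ an interval (arc) under the induced enumeration, and hence, by the reverse direction of \Cref{prop:lin}/\Cref{prop:arc_is_ball}, makes $\Pi D \Pi^T$ satisfy \Cref{RM}/\Cref{CRM}.

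Algorithmically, the first step is to build the incidence matrix of $\HGr_D$ from a Robinsonian matrix $D$. For each center $i \in [n]$ the relevant radii are exactly $\{D(i,j) : j \in [n]\}$, so there are at most $O(n^2)$ distinct balls, yielding an incidence matrix of size $O(n^2) \times n$ with at most $f = O(n^3)$ ones. Concretely, sorting each row of $D$ in $O(n\log n)$ time and generating the $n$ nested balls per center produces this matrix in $O(n^3)$ time and $O(n^3)$ space. The second step, for the linear case, is to invoke the Booth--Lueker algorithm \cite{booth1975linear} on this matrix; its complexity $\mathcal{O}(m + n + f)$ evaluates to $\mathcal{O}(n^3)$. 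For the circular case, we apply Tucker's reduction~\cite{tucker1971matrix}: complement every row whose first entry is one, and solve the resulting C1 instance. Complementation changes the number of ones per row by at most $O(n)$, so the overall complexity remains $\mathcal{O}(n^3)$.

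To conclude recognition, we verify that the permutation returned by the C1/Cr1 solver, when applied to $D$, produces a matrix in $\linear$ or $\circular$ by checking \cref{eq:linrob} (or the cyclic analogue) directly in $O(n^3)$ time; if the check fails, the original matrix is not Robinsonian. This gives the claimed $\mathcal{O}(n^3)$ time and space bound both for the reduction and for recognition.

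The main obstacle I anticipate is the two-way correspondence between ``all balls simultaneously consecutive'' and ``the matrix is Robinson'' after permutation. One direction---Robinson implies C1/Cr1---is immediate from \Cref{prop:lin}/\Cref{prop:arc_is_ball} applied to the enumeration witnessing consistency. The reverse direction requires noting that if a permutation $\Pi$ makes every row of the ball incidence matrix consecutive, then the permuted balls $\Pi(B_r^{\dfun}(x))$ are intervals (arcs) in the standard order (cyclic order) on $[n]$, and since the hypergraph structure characterizes the Robinson property regardless of labeling, $\Pi D \Pi^T$ is Robinson. Once this equivalence is spelled out, the complexity accounting is routine.
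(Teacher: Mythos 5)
Your proposal is correct and follows essentially the same route the paper takes: reduce via the ball hypergraph $\mathcal{H}_D$, whose incidence matrix has $\mathcal{O}(n^2)$ rows (at most $n$ distinct radii per each of the $n$ centers) and hence $\mathcal{O}(n^3)$ entries, and then invoke Booth--Lueker for C1 and Tucker's row-complementation for Cr1. Your added final verification pass is redundant given the two-way equivalence in \Cref{prop:lin} and \Cref{prop:arc_is_ball}, but it is harmless and stays within the stated $\mathcal{O}(n^3)$ bounds.
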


The bounds above %for recognizing Robinson matrices 
follow from the worst case in which \(\HGr_D\) has $\mathcal{O}\left(n^{2}\right)$ different hyperedges. In this case, for each of the $n$ possible centers and each row $i\in [n]$ the matrix can take $\mathcal{O}(n)$ possible values. In this case, the incidence matrix has $\mathcal{O}(n^3)$ entries.

\subsection{$PQ$-Trees}
\label{sec:consecutiveOnes:pqTrees}

The algorithmic structure underlying the algorithm to solve the C1 problem is the $PQ$-tree. A $PQ$-tree $\mathcal{T}$ on a set $\X$ is a rooted tree with two types of internal nodes denoted by $P$, represented as circles, and $Q$, represented as rectangles, and where the leaves represent the elements in $\X$. The type of node represents admissible permutations on \(\X\): children of a \(P\)-node can be permuted arbitrarily, whereas children of a \(Q\)-node can only be reversed.  \cref{ejem:pqtreeP} shows an example of a $PQ$-tree.
%In this section we briefly review the main structures used to in the literature to solve the consecutive ones problem: $PQ$-Trees, which are a compact way of representing all solutions of the consecutive ones problem. More generally, let $\mathbf{Y}=\{Y_i\}_{i\geq 0}$ be a family of subsets of $\X$. The permissible permutations are those in which all $Y_i$ occur as consecutive subsequences. A $PQ$-tree represents a class of permissible permutations (see \cite{booth1975linear} and \cite{booth1976testing}).
%More generally, $PQ$-trees can be used to represent the permutations of a set $\X$ in which various subsets of $\X$ occur consecutively

%On a $P$-node, we can apply any permutation of its children (equivalently, its children are not ordered). The children of a $Q$-node are ordered, and the only permutation we can apply on them is to reverse the order. \cref{ejem:pqtreeP} displays an example of a $PQ$-tree.

%A $PQ$-tree $\mathcal{T}$ on a set $\X$ is a rooted tree with two types of internal nodes denoted $P$, represented as circles, and $Q$, represented as rectangles. The leaves represent the elements in $\X$. We use the general convention of representing $P$-nodes by circles and $Q$-nodes by rectangles. On a $P$-node, one can apply any permutation of its children (equivalently, its children are not ordered). The children of a $Q$-node are ordered, and the only permutation we can apply on them is to reverse the order. \cref{ejem:pqtreeP} displays an example of a $PQ$-tree.

\(PQ\)-trees are related to the C1 problem as follows. Let \(Y_i\) be the indices of the columns of \(M\) such that its \(i\)-th entry equal to one. Then $\mathbf{Y}=\{Y_i\}_{i\geq 0}$ is a collection of subsets of \([n]\). The C1 problem can be solved if we can permute the elements of \([n]\) so that every \(Y_i\) becomes an interval. The algorithm starts with a single set \(\mathbf{Y}_1 = \{Y_{i_1}\}\) and determines the set of {\em admissible permutations} such that \(Y_{i_1}\) becomes an interval. These can be represented by a $PQ$-tree \(\mathcal{T}_1\) (see~\cite{booth1975linear} and~\cite{booth1976testing}). The algorithm proceeds by adding a \(Y_{i_2}\) to form \(\mathbf{Y}_2 = \{Y_{i_1},Y_{i_2}\}\) and update the \(PQ\)-tree accordingly. The main contribution of \cite{booth1976testing} is an algorithm for updating $\mathcal{T}_k$ in a way that given any subset $Y_k \subseteq [n]$, the set of permutations represented by the updated tree \(\mathcal{T}_{k+1}\) is precisely the set of admissible permutations of $\mathbf{Y}_{k+1}\cup \{Y_k\}$. This is done in time linear in the size of $Y_k$. The algorithm finishes when \(\mathbf{Y}\) is attained.

As an example, by considering all rows of the binary matrix in \cref{ejem:pqtreeM}, the resulting $PQ$-tree at the final step would be the one in \cref{ejem:pqtreeP}, and the solution set would be the one in \cref{ejem:pqtreeS}. 

%
%A particular application is the consecutive ones problem. In this case, each set $Y_i$ correspond to the columns having a $1$ in the $i$-th row of the input matrix. The tree is updated by adding each $Y_i$ until all restrictions are taken in consideration. For instance, by considering all rows of the binary matrix in \cref{ejem:pqtreeM}, the resultant $PQ$-tree at the final step would be the one in \cref{ejem:pqtreeP}, and the solution set would be the one in \cref{ejem:pqtreeS}.

%Consider the $PQ$-tree $\mathcal{T}$ of permissible permutations of the family $\mathbf{Y}$. The main contribution of \cite{booth1976testing} is an algorithm for updating $\mathcal{T}$ a in a way that given a subset $Y^\prime \subseteq \X$ of elements, the set of permutations represented by the tree is exactly the set of permissible permutations of $\mathbf{Y}\cup \{Y^\prime\}$. This is done in time linear in the size of $Y^\prime$. A particular application is the consecutive ones problem. In this case, each set $Y_i$ correspond to the columns having a $1$ in the $i$-th row of the input matrix. The tree is updated by adding each $Y_i$ until all restrictions are taken in consideration. For instance, by considering all rows of the binary matrix in \cref{ejem:pqtreeM}, the resultant $PQ$-tree at the final step would be the one in \cref{ejem:pqtreeP}, and the solution set would be the one in \cref{ejem:pqtreeS}.

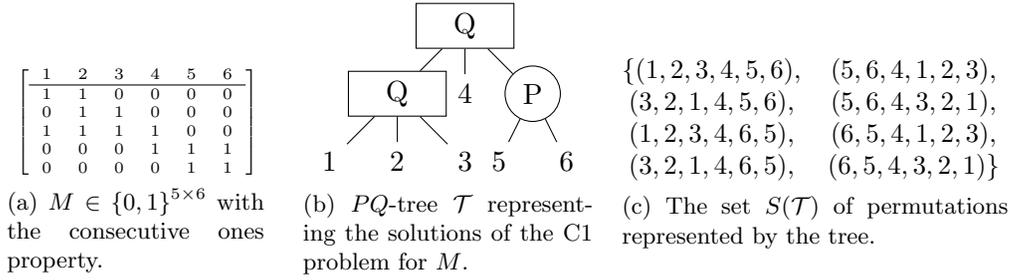
\begin{figure}
    \centering

    \subfloat[$M\in\{0,1\}^{5\times 6}$ with the consecutive ones property.]{\label{ejem:pqtreeM}
   \tiny $\left[\begin{array}{llllll}
1 & 2 & 3 & 4 & 5 & 6 \\
\hline
1 & 1 & 0 & 0 & 0 & 0 \\
0 & 1 & 1 & 0 & 0 & 0 \\
1 & 1 & 1 & 1 & 0 & 0 \\
0 & 0 & 0 & 1 & 1 & 1 \\
0 & 0 & 0 & 0 & 1 & 1
\end{array}\right]$
    }\quad
%\vphantom{
\adjustbox{valign=t, raise=1.65cm}{
    \subfloat[\sac{$PQ$-tree $\mathcal{T}$ representing the solutions of the \csl{C1 problem} for $M$.}]{\label{ejem:pqtreeP}
\begin{tikzpicture}[scale=0.6,
  level distance=1.5cm,
  level 1/.style={sibling distance=1.5cm},
  level 2/.style={sibling distance=1.5cm}]
  \node[rectangle,minimum width=1.3cm, draw] {Q}
    child {node[rectangle,minimum width=1.3cm, draw] {Q}
        child {node {1}}
        child {node {2}}
        child {node {3}}
        }
    child {node {4}}
    child {node[circle, draw] {P}
      child {node {5}}
      child {node {6}}
        };
\end{tikzpicture}
}}%
\quad
    \subfloat[The set $S(\mathcal{T})$ of permutations represented by the tree.]{\label{ejem:pqtreeS}\small
        $\begin{matrix}
     \{(1,2,3,4,5,6), & (5,6, 4, 1,2,3),\\
     (3,2,1,4,5,6), & (5,6, 4, 3,2,1),\\
     (1,2,3,4,6,5), & (6,5, 4, 1,2,3),\\
     (3,2,1,4,6,5), & (6,5, 4, 3,2,1)\}
    \end{matrix}$
    }%

    \caption{A $PQ$-tree of all solutions to the \csl{C1 problem} for a $\{0,1\}$-matrix.
    %associated with the matrix with the consecutive ones property
    }
    \label{ejem:pqtree}
\end{figure}

\section{Optimal algorithm for strict circular seriation}\label{sec:algorithm}

In this section, we present an optimal algorithm for circular seriation in the strict Robinson case. \csl{We note in passing that our algorithm works as well for the strict linear case, but we omit this variant}. Our algorithm runs in $\mathcal{O}(n^2)$ time and space, which is obviously optimal, since it is the time required to read the input and the space required to provide a strict Robinson dissimilarity.\footnote{Given the strict Robinson property, it is clear that the underlying matrix is dense, and therefore $\mathcal{O}(n^2)$ memory is required to even provide the input.}
The core algorithm relies \csl{on} two main ideas: merging nearest neighbors, and discarding forbidden arc reversals. We recursively merge nearest neighbors, using the fact that nearest neighbors are guaranteed to be consecutive elements in a strict Robinsonian ordering. Exploiting this fact we can obtain chains of consecutive elements which are stored in $Q$-nodes of $PQ$-trees. The most \csl{delicate} part of the algorithm consists in \cgp{efficiently deciding whether}  
%determining efficiently if
each $Q$-node can be \textit{uniquely oriented}. In that case, such $Q$-node can be deleted and its children merged to the parent $Q$-node. \sac{We will indistinctly refer to this operation as the $Q$-node being \textit{fixed} or \textit{oriented}}. The process of building chains of consecutive elements and deciding their orientation can be done in several ways. The advantage of our algorithm is that the total number \csl{of} comparisons to decide each orientation is bounded by $\mathcal{O}(n)$, which leads to a running time ${\cal O}(n^2)$.

\csl{We provide preliminary results in~\Cref{sec:algo:prelimI} and~\Cref{sec:algo:prelimII} about the arc structure of the nearest-neighbour graph, and the treatment of arc reversals. Then, in~\Cref{sec:algo:main} we develop our algorithm, whose correctness and optimality are proved in \Cref{subsec:Analysis_Rec_Seriation}.}

%The crucial element to run this recursion is the idea of merging dissimilarities, presented in Section ??.
%The intuition behind the first part of the algorithm is that the set of nearest neighbours allow us retrieve consecutive elements of a Robinson ordering. Exploiting this fact we can obtain chains of consecutive elements. This chains can be seen as elements, and by defining an appropriate dissimilarity we can obtain a Robinson dissimilarity over such elements (consistent with the original one). This suggests a recursive approach in which we first generate chains of elements, then chains of chains of elements and so on.
%This will lead to a
%At the end of the first part of the algorithm, we end up with a data structure equivalent to a $PQ$-tree of Robinson orderings that only contains $Q$-nodes (corresponding to each chain obtained by the merge procedure). In the second part, we proceed from the root to the leaves determining if each $Q$-node can be \textit{uniquely oriented}. In that case such $Q$-node can be deleted and its children merged to the parent $Q$-node. The algorithm does not depend on $PQ$-trees, but its analysis benefits from using these data structures. %by taking them into account.

%#################%#################
%#################%#################
%#################%#################
%#################%#################
% PARTE 1 DEL ALGORITMO

\subsection{Preliminaries part I: nearest neighbours graph in strict Robinson dissimilarities}
\label{sec:algo:prelimI}

%The main goal of the first section is to construct a structure that we named \textit{arc partition}. As the name suggests, an arc partition is a partition of a cyclically ordered set of elements in which each set in the partition is an arc.

\cgp{One of our algorithmic building blocks is based on the idea that in the case of {\em strict dissimilarities}, a pair of nearest neighbours must lie consecutively in {\em any} Robinson ordering. In this subsection, we prove this fact and use it to justify the first step of our algorithm, which is based on merging nearest neighbors to reduce the size of the instance.} %\csl{We first show that in the strict Robinson case, the nearest-neighbour graph is composed of {\em arcs} of nearest neighbours.} 

Given $D\in \preC$, a collection of subsets $\mathcal{P}=\{\mathcal{I}_i\}_{i\geq 0}$ of $\X$ is said to be an arc partition if $\mathcal{P}$ is a partition of $\X$, %(i.e. $\mathcal{I}_i\cap \mathcal{I}_j = \emptyset$ when $i\neq j$ and $\bigcup_{i\geq 0}\mathcal{I}_i= \X$), 
and every set $\mathcal{I}_i$ is an arc of consecutive elements in any Robinson ordering. % $\sigma \in S_\circular(D)$.
%A crucial part of the algorithm consists in building arc partitions by merging nearest neighbours. 
The set of nearest neighbours of $x\in \X$ is defined as $\operatorname{NN}(x)\defined  \arg\min_{y\in \X\setminus \{x\}} \mathbf{d}(x,y)$. The nearest-neighbours graph is an undirected graph $\NNG(\X,\mathbf{d}) =(\X,\mathcal{E})$ such that $\{x,y\} \in \mathcal{E}$ iff $x\in \NN(y)$ or $y\in \NN(x)$. %  It is clear that $\NNG$ is invariant to enumerations of the elements. Therefore, this graph will be the core of our order reconstruction algorithm.
An essential condition of strict Robinson dissimilarities is what we called the \textit{nearest-neighbour condition}, which implies that the connected components of the nearest-neighbours graph correspond to arcs of consecutive elements, and since connected components form a partition, such collection corresponds to an arc partition.

 %A dissimilarity matrix $D\in\mathbb{R}^{n\times n}$ is said to have the \textit{linear nearest-neighbour condition} if for every $i\in [n]$ it holds that $\NN(i) \subseteq V_i^L\defined  \{ i-1, i+1 \}\cap [n]$ and the \textit{circular nearest-neighbour condition} if it holds that $\NN(i) \subseteq V_i^C\defined \{i-1\bmod n, i+1\bmod n\}$.

 \begin{defi}{(Nearest-neighbour condition)}
 A dissimilarity matrix $D\in\mathbb{R}^{n\times n}$ is said to have the \textit{nearest-neighbour condition} if it holds that $\NN(i) \subseteq V_i^C\defined \{i-1\bmod n, i+1\bmod n\}$.\footnote{Given some enumeration $\enu$ and $i\in [n]$, when we write $\NN(i)$, we refer to the set $\enu(\NN(x_i))$.}
\end{defi}

%\begin{defi}(Set of nearest neighbours)
% Given $A\subseteq \X$ and a dissimilarity $\mathbf{d}$, the set of nearest neighbours is defined as $\operatorname{NN}(A)\defined  \arg\min_{y\in \X\setminus A, x\in A} \mathbf{d}(x,y)$. In the case of a single element set $\{x\}$ we write $\NN(x)$ instead of $\NN(\{x\})$. Given some enumeration $\enu$ and $i\in [n]$, we write $\NN(i)$ to refer to the set $\enu(\NN(x_i))$.
%\end{defi}

It is immediate to verify that {\em strict} circular Robinson dissimilarities satisfy the nearest-neighbour condition, which is not necessarily true in the non-strict case.
%the strict property is crucial.

%\begin{lemma} %OMITIDO EL LEMA (en el parrafo ya se explica que es inmediato)
%If $D\in \slinear$ (repectively $D\in \scircular$) the linear (resp. circular) \textit{nearest-neighbour condition} holds.
%\end{lemma}
%\begin{proof}
%For $n<4$ the result is evident. If $n\geq 4$, the claim follows directly from the strict Robinson property:
%\begin{itemize}
%    \item Linear Case: Let $D\in \slinear$. For any $i< n-2$ we have that $D(i,i+1) < D(i,i+2)$. For any $i>1$ we have that $D(i,i-1) < D(i,i-2)$. Therefore $\NN(x_i)\subseteq V_i^L$.
%    \item Circular Case: Let $D\in \scircular$. For any $i\in [n]$ we have by \cref{lemma:borders} that $\arg\min_{j\in [n]\setminus \{i\}}D(j,i)\subseteq V_i^C$, since $[n]\setminus \{i\}$ is an arc.
%\end{itemize}
%\end{proof}

%By a slight abuse of notation, along this section tuples will also be endowed with the set operations of $Im(\sigma_\mathcal{I}) = \{a_{0},a_{2},a_{3},\ldots ,a_{l-1}\}$.

%\begin{defi}{(Directed Hypergraph)}
%Given a set $V$ of elements and a set $E$ of tuples of elements in $V$, the pair $\mathcal{H}=(V,E)$ is said to be a directed hypergraph
%\end{defi}

%A special kind of directed hypergraph is of special interest in the context of seriation.

%Exploiting this idea, the following algorithm returns a ciruclar-arc hypergraph, where the edges form a partiton.

%\begin{defi}

We also recall from graph theory that given a graph $G=(\X,\mathcal{E})$ and a node $x\in \X$, the set of adjacent nodes to $x$ is denoted as $\mathcal{N}_G(x)\defined  \{y\in \X: \{x,y\}\in \mathcal{E} \}$. The function $x\mapsto \mathcal{N}_G(x)$ is called the neighbourhood.
%\end{defi}
The \sac{cycle graph $\cycle{n}=([n], \mathcal{E})$ is} the graph with edge set $\mathcal{E}=\{\{i,(i+1)\mod n\}: i\in [n]\}$. If a graph $G$ is a subgraph of $\cycle{n}$, then it is clear that its connected components correspond to arcs of $([n], \mathscr{C}_n)$. A direct consequence of the nearest-neighbour condition is that the nearest-neighbours \sac{graphs} of \textit{strict} circular Robinson dissimilarity matrices correspond to subgraphs of $\cycle{n}$.

Our algorithm relies crucially on the fact that strict dissimilarities must respect nearest neighbors in any Robinson ordering. This is not necessarily true in the non-strict case.
%if the dissimilarity is not strict.
\begin{lemma}\label{lema:consecutive}
Let $D\in \presC$ and $i\in [n]$. Suppose that $j\in \NN(i)$, then in any Robinson ordering $\sigma$, the elements $i$ and $j$ are consecutive.
\end{lemma}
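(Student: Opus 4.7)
The plan is to reduce the claim to the canonical case in which the matrix is already in $\scircular$, and then appeal to the nearest-neighbour condition for such matrices.

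First, I would unwind the definitions. By hypothesis $D\in\presC$ and $\sigma$ is a Robinson ordering of $D$, which means that the conjugated matrix $D_\sigma$ defined by $D_\sigma(a,b) = D(\sigma^{-1}(a),\sigma^{-1}(b))$ lies in $\scircular$. Since nearest-neighbour sets are defined purely in terms of minimisation of the dissimilarity values of a given row, they transform equivariantly under relabelings: for every $k\in[n]$,
\[
    \NN_{D_\sigma}(\sigma(k)) \;=\; \sigma\bigl(\NN_D(k)\bigr).
\]
In particular, $j\in\NN_D(i)$ if and only if $\sigma(j)\in\NN_{D_\sigma}(\sigma(i))$. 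Thus the statement reduces to showing that, whenever $D_\sigma\in\scircular$, the nearest-neighbour condition holds, i.e.\ $\NN_{D_\sigma}(\sigma(i))\subseteq\{\sigma(i)-1\bmod n,\sigma(i)+1\bmod n\}$, so that $\sigma(i)$ and $\sigma(j)$ are cyclically adjacent in $[n]$ and therefore $i,j$ are consecutive in the ordering $\sigma$.

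Second, I would establish the nearest-neighbour condition for any $D'\in\scircular$. By \cref{prop:rob_is_unimod}, for a fixed row index $k$ the sequence
\[
    f(j) \defined D'(k,\; k+j\bmod n), \qquad j\in[n],
\]
is strictly unimodal with $f(0)=0$. Writing $m_1\le m_2$ for the (at most two, consecutive) modes guaranteed by the definition of strict unimodality, one has
\[
    0 = f(0) < f(1) < \cdots < f(m_1) = f(m_2) > f(m_2+1) > \cdots > f(n-1).
\]
Consequently, over $\{1,\ldots,n-1\}$ the minimum of $f$ is attained only at $j=1$ or $j=n-1$, giving $\NN_{D'}(k)\subseteq\{k-1\bmod n,\; k+1\bmod n\}$, which is the nearest-neighbour condition. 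Combining with the first step yields the lemma.

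I do not anticipate a real obstacle: the only point requiring care is the verification of the nearest-neighbour condition from the strict unimodality definition, particularly that the boundary values $f(1)$ and $f(n-1)$ are strictly smaller than any intermediate value, which follows directly from the strict monotonicity between $0$ and $m_1$ and between $m_2$ and $n-1$. The strictness hypothesis is essential here: without it, an interior value could tie with $f(1)$ or $f(n-1)$ and $\NN(k)$ could contain non-adjacent indices, breaking the conclusion.
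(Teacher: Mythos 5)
Your proof is correct, and it lands on the same underlying fact as the paper — that in the ordering $\sigma$ the nearest neighbours of $i$ must be the two elements adjacent to $i$ — but by a somewhat different route. The paper argues by contradiction: if $i$ sits between $k_1,k_2\neq j$ in $\sigma$, then the ball $B(i,r)$ with $r=\mathbf{d}(i,j)$ is an arc (\cref{prop:arc_is_ball}) containing the non-adjacent pair $i,j$, hence it must pick up $k_1$ or $k_2$, which clashes with the nearest-neighbour condition. You instead prove the nearest-neighbour condition from scratch via the strict-unimodality characterization (\cref{prop:rob_is_unimod}) and transport it to $\sigma$ through the equivariance $\NN_{D_\sigma}(\sigma(k))=\sigma(\NN_D(k))$. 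This buys two things: it supplies the verification that the paper dismisses as ``immediate to verify'' (that strict circular Robinson matrices satisfy the nearest-neighbour condition), and it makes the argument direct rather than by contradiction, bypassing the ball/arc machinery entirely. The one point requiring care — which you correctly flag — is that strictness is exactly what forces $\argmin_{j\neq 0} f(j)\subseteq\{1,n-1\}$; in the non-strict case an interior value can tie with a boundary one and the lemma fails, consistent with the paper's remark that the nearest-neighbour condition need not hold without strictness.
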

%\begin{proof}
%\sac{DEMO INCORRECTA: Let $\sigma$ be any Robinson ordering. Let $j\in \NN(i)$ and let $r\defined  D(i,j)$. Let $k_1\neq k_2\in [n]\setminus \{i\}$ such that $k_1, i, k_2$ are consecutive in $\sigma$. Suppose by contradiction that $j\neq k_1, k_2$. By \cref{prop:arc_is_ball}, the ball $B_r(i)$ is an arc, which means that in $\sigma$ its elements must occur consecutively. Since $i,j\in B_r(i)$, this is only possible if also $k_1,k_2$ are also in $B_r(i)$. But then we have found 3 distinct elements in the set of nearest neighbours $\NN(i)$, which contradicts the nearest-neighbour condition.}
%\end{proof}

\begin{proof}
Let $\sigma$ be any Robinson ordering. Let $j\in \NN(i)$ and let $r\defined  D(i,j)$. This implies that for any $k\in B(i,r)\setminus \{i\}$, $D(i,k)=D(i,j)$. Suppose by contradiction that there exist $k_1,i,k_2$ consecutive in $\sigma$, with $j\neq k_1,k_2$. Since in any Robinson ordering balls are arcs, this implies that either $k_1\in B(i,r)$ or $k_2\in B(i,r)$. Any of the two cases is a contradiction with the nearest-neighbour condition, proving the result.
\end{proof}

Since nearest-neighbours must be consecutive, we get that the connected components of the nearest-neighbours graph of a \textit{strict} circular Robinson dissimilarity correspond to arcs of any Robinson ordering. Hence, the set of connected components constitute an arc partition. The fact that this graph is a subgraph of the cycle graph makes computationally efficient finding the order intrinsic to each component, and the task is divided in two steps:
\begin{enumerate}[leftmargin=24pt]
    \item Find all degree $1$ nodes. These correspond to the borders of the components.
    \item Perform Depth-First Search \csl{(\DFS)} (\cref{alg:dfs}) starting at each non visited degree $1$ node. The order \cgp{of visits} 
    %in which the nodes are visited 
    will follow the Robinson ordering (or backwards).
\end{enumerate}
If there are no degree one nodes,  \sac{then\footnote{We use the symbol $\cong$ to either denote graph and group isomorphism} $\NNG \cong \cycle{n}$} and therefore we can start at any node. For an algorithmic implementation, tuples can be used to represent the local fragments of Robinson orderings ($Q$-nodes). A tuple is an ordered set $\alpha=(a_{0},a_{1},a_{2},\ldots ,a_{k-1})$. We write $\alpha(i)$ to denote $a_i$, the $i$-th element of $\alpha$. Each connected component will be stored in a tuple $\alpha$, where $\alpha(j)$ is the $j$-th element visited by performing a \DFS. The procedure is summarized in the procedure \partition (\cref{alg:gac}), whose correctness is stated in the following Proposition (the proof of the next result is omitted for brevity).

\begin{prop}\label{prop:dfs}
Given \cgp{a matrix} %any dissimilarity matrix 
$D\in \presC$, by performing \partition\, (\cref{alg:gac}) with input $([n],D)$ the resulting tuples follow an arc ordering for every Robinson ordering. 
\end{prop}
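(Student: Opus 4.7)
The plan is to reduce the proposition to two structural facts already established in the excerpt: (i) strict circular Robinson dissimilarities satisfy the nearest-neighbour condition, and (ii) by \Cref{lema:consecutive}, any edge of $\NNG$ forces its endpoints to be consecutive in every Robinson ordering $\sigma$. From (i) we immediately get that $\NNG$ is a subgraph of the cycle graph $\cycle{n}$, so each connected component is either a simple path (when it contains nodes of degree at most one) or the whole cycle. The entire argument then amounts to showing that the \DFS traversal used inside \partition\, visits the vertices of each connected component in the order they appear along the unique arc of $\sigma$ supporting that component, possibly reversed.

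First I would formalize the arc structure. Fix any Robinson ordering $\sigma$ and a connected component $C$ of $\NNG$. Using (ii), every edge $\{i,j\}\in \NNG$ is a pair consecutive in $\sigma$, so starting from any vertex in $C$ and following edges, we only move to positions adjacent in $\sigma$. Therefore the underlying vertex set of $C$ is an arc of $([n],\mathscr{C}_n)$ under $\sigma$, and its internal $\NNG$-structure is exactly the induced subpath (or subcycle) of $\cycle{n}$.

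Next I would analyse the output of \DFS on a single component. Two cases arise. If $C$ has a vertex $v$ of degree at most one, then $C$ is a simple path and \partition\, launches \DFS from such an endpoint $v$. Because each internal vertex of a path has exactly two neighbours (one already visited, one not), \DFS is forced to trace the path in order, and by the arc property above this order is precisely the arc order induced by $\sigma$ from $v$ (or its reversal, depending on which endpoint is chosen as the starting point). If, on the other hand, every vertex of $\NNG$ has degree two, then $\NNG \cong \cycle{n}$, the whole of $[n]$ is a single arc (trivially the entire cycle), and \DFS starting from an arbitrary vertex again produces a cyclic traversal that coincides with $\sigma$ up to cyclic shift and reversal.

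I would then conclude by observing that cyclic shifts and reversals are exactly the natural symmetries of the dihedral action on circular Robinson orderings discussed in \Cref{sec:robinsonMatrices:robinsonOrderings}, so the tuple produced for each component is an arc ordering relative to every Robinson ordering $\sigma$, and the collection of tuples is an arc partition in the sense defined at the beginning of \Cref{sec:algo:prelimI}. The only delicate point, and the one I would be most careful about, is the transition between the two \DFS cases: I would verify that the algorithmic choice of launching \DFS at every unvisited degree-one node (and only defaulting to an arbitrary vertex when none exist) is consistent with the fact that $\NNG$-components are either strict subpaths of $\cycle{n}$ or the whole cycle, ensuring the output is well-defined regardless of tie-breaking in the argmin defining $\NN(\cdot)$.
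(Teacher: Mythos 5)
Your proof is correct and follows exactly the route the paper intends: the authors omit the formal proof "for brevity," but the surrounding text of \Cref{sec:algo:prelimI} sketches precisely your argument — the nearest-neighbour condition plus \Cref{lema:consecutive} make $\NNG$ a subgraph of $\cycle{n}$ under every Robinson ordering, so each component is a path (or the whole cycle) whose forced \DFS traversal from a degree-one endpoint reproduces the arc order up to reversal (or up to a dihedral symmetry in the full-cycle case). Your attention to the degenerate cases (no degree-one nodes, and the fact that $\NN(\cdot)$ is a set so no tie-breaking issue arises) fills in the only details the paper leaves implicit.
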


%\begin{prop}\label{prop:dfs}
%By performing a Depth-First Search (\cref{alg:dfs}) over a sub-graph of the ring (path) graph, %starting at a degree $1$ node
%the %nodes are visited following
%visited nodes follow an arc %(interval) ordering.
%\end{prop}

%\begin{proof}
%To prove the circular case we proceed by induction. The linear case follows analogously. Let $i$ be the starting node.
% \begin{itemize}
%     \item Since the node has degree $1$ and since it is a subgraph of the ring, we get that $\mathcal{N}_G(i)=\{i-1\mod n\}$ or $\mathcal{N}_G(i)=\{i+1\mod n\}$.
%     \item Suppose we have visited $(i, i+1, \ldots, i+k) .$ Then at $i+k,$ we can only visit nodes in $\{i+k-1, i+k+1\mod n\}$. Since $i+k-1$ has already been visited, it can only visit $i+k+1\mod n$ (or stop extending such component). Similarly, if we have visited $(i+k, i+k-1, \ldots, i),$ then at $i$ we can only visit nodes in $\{i-1\mod n, i+1\} .$ Since $i+1$ has already been visited, it can only visit $i-1$ (or stop extending such component).
%0\end{itemize}

%\end{proof}

%#################%#################
%#################%#################
%#################%#################
%#################%#################
% PARTE 2 DEL ALGORITMO

\subsection{Preliminaries part II: orienting arcs}\label{sec:orienting}
\label{sec:algo:prelimII}

The previous section tell us that nearest-neighbours must be consecutive in the strict Robinson case. By exploiting this idea we can obtain ordered sequences of elements stored in $Q$-nodes of a $PQ$-tree. Notice however that $Q$-nodes are allowed to be reversed, which at this point of the algorithm is not guaranteed to lead to Robinson orderings.
%Since $Q$-nodes can be reversed one must check if both orientations correspond to Robinson orderings.
If this is not the case, the inconsistent ordering must be discarded, which corresponds to removing the $Q$-node \sac{and merging its children} directly to the parent $Q$-node. We call this process \cgp{\em orientation}. %orienting. 
In this section we provide computationally efficient conditions to determine when a $Q$-node must be oriented. Each $Q$-node $\alpha$ in a tree $\mathcal{T}$ can be associated with an arc $\mathcal{I}_\alpha$ in $\X$: the arc of all leaves in $\X$ which are descendants of $\alpha$. Reversing $\alpha$ corresponds to reversing $\mathcal{I}_\alpha$. The first relevant concept to determine when it is possible to reverse each arc is the strictly overlapping condition, which has been studied  for instance in \cite{quilliot1984circular} and in \cite{kobler2017circular}. An example of the property can be seen in \cref{fig:orient}(a).

%Let $\mathcal{T}$ be the $PQ$-tree associated to the structure obtained in the first part of the algorithm. At this point, there are some $Q$-nodes in the tree such that by reversing its children the resulting permutation does not correspond to a Robinson ordering. In such case, one of the two orderings must be chosen and the $Q$-node removed. We call this process orienting. In this section we find computationally efficient conditions to determine when each $Q$-node must be oriented. Each $Q$-node $\alpha \in \mathcal{T}$ can be associated with an arc $\mathcal{I}_\alpha$ in $\X$: the arc of all leaves in $\X$ which are descendants of $\alpha$. Reversing $\alpha$ corresponds to reversing $\mathcal{I}_\alpha$. The first relevant concept to determine when it is possible to reverse each arc is the strictly overlapping condition, which has been studied  for instance in \cite{quilliot1984circular} and in \cite{kobler2017circular}.

\begin{defi}\label{defi:overlap}
Two arcs $\mathcal{I}$ and $\mathcal{J}$ are said to strictly overlap, denoted by $\mathcal{I} \between^* \mathcal{J}$, if
\[ 
    1.\,\, \mathcal{I}	\not\subset \mathcal{J};
    \qquad 2.\,\, \mathcal{J}	\not\subset \mathcal{I};
    \qquad 3.\,\, \mathcal{I}^c	\not\subset \mathcal{J};
    \quad \mbox{and} \quad 4.\,\, \mathcal{J}	\not\subset \mathcal{I}^c.
\]
\end{defi}
%\begin{enumerate}
%    \item $\mathcal{I}	\not\subset \mathcal{J}$
%    \item $\mathcal{J}	\not\subset \mathcal{I}$
%    \item $\mathcal{I}^c	\not\subset \mathcal{J}$
%    \item $\mathcal{J}	\not\subset \mathcal{I}^c$
%\end{enumerate}
%In that case, following the notation in \cite{kobler2017circular} we write $\mathcal{I} \between^* \mathcal{J}$.

\begin{obs}\label{inter}
The relation $\between^*$ is symmetric and equivalent to
\[
1.\, \mathcal{I}		\cap  \mathcal{J}^c \neq \emptyset;
\qquad   2.\, \mathcal{J}	\cap  \mathcal{I}^c \neq \emptyset;
\qquad 3.\, \mathcal{I}^c	\cap \mathcal{J}^c \neq \emptyset;
\quad\mbox{and}\quad 4.\, \mathcal{I}	\cap \mathcal{J} \neq \emptyset.
\]
\end{obs}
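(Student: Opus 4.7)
The plan is to establish the equivalence first, and then derive symmetry as a consequence of the symmetry of the four intersection conditions.

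For the equivalence, the key elementary fact is that for any two subsets $A, B$ of a common universe, $A \not\subset B$ holds if and only if $A \cap B^c \neq \emptyset$. Applying this to each of the four clauses in \Cref{defi:overlap}, I would rewrite them one by one: $\mathcal{I}\not\subset \mathcal{J}$ becomes $\mathcal{I}\cap \mathcal{J}^c\neq\emptyset$ (clause 1); $\mathcal{J}\not\subset \mathcal{I}$ becomes $\mathcal{J}\cap \mathcal{I}^c\neq\emptyset$ (clause 2); $\mathcal{I}^c\not\subset \mathcal{J}$ becomes $\mathcal{I}^c\cap \mathcal{J}^c\neq\emptyset$ (clause 3); and $\mathcal{J}\not\subset \mathcal{I}^c$ becomes $\mathcal{J}\cap (\mathcal{I}^c)^c = \mathcal{J}\cap \mathcal{I}\neq\emptyset$ (clause 4), where the last step uses $(\mathcal{I}^c)^c=\mathcal{I}$. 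Each implication is reversible, so this yields a clause-by-clause equivalence with the four conditions in the observation.

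For symmetry, once the reformulation in terms of intersections is available, the argument is immediate: swapping $\mathcal{I}$ and $\mathcal{J}$ in the list (1)--(4) simply permutes the clauses. Clauses 1 and 2 get exchanged, and clauses 3 and 4 are invariant under the swap because set intersection is commutative. Thus the conjunction of the four conditions is unchanged when $\mathcal{I}$ and $\mathcal{J}$ are interchanged, so $\mathcal{I}\between^*\mathcal{J}$ if and only if $\mathcal{J}\between^*\mathcal{I}$.

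I do not anticipate any obstacle: the proof is essentially a bookkeeping exercise relying on the dictionary $A\not\subset B \Leftrightarrow A\cap B^c \neq \emptyset$ and the involution $(\cdot)^c\circ(\cdot)^c = \mathrm{id}$. The only point requiring minor care is to verify that the ambient universe over which complements are taken is the same in the definition and in the observation (i.e., the ground set $\X$ or $[n]$), so that the identity $(\mathcal{I}^c)^c=\mathcal{I}$ is used consistently in clause 4.
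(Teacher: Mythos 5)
Your proof is correct, and it is precisely the routine verification the paper leaves implicit by stating this as an Observation without proof: the dictionary $A\not\subset B \Leftrightarrow A\cap B^c\neq\emptyset$ turns each clause of \cref{defi:overlap} into the corresponding clause of the observation, and symmetry follows since swapping $\mathcal{I}$ and $\mathcal{J}$ merely exchanges clauses 1 and 2 while fixing 3 and 4. No gaps.
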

%\begin{enumerate}
%    \item $\mathcal{I}		\cap  \mathcal{J}^c \neq \emptyset$
%    \item $\mathcal{J}	\cap  \mathcal{I}^c \neq \emptyset$
%    \item  $\mathcal{I}^c	\cap \mathcal{J}^c \neq \emptyset$
%    \item  $\mathcal{I}	\cap \mathcal{J} \neq \emptyset$
%\end{enumerate}

\begin{lemma}\label{lema:borders}
Let $\mathcal{I}$ and $\mathcal{J}$ be two arcs. Let $a,b$ and $a^\prime,b^\prime$ be the borders of $\mathcal{I}$ and $\mathcal{I}^{c}$, respectively, where $a$ and $a^\prime$ \sac{($b$ and $b^\prime$)} are consecutive in the cyclic order.
%Let $a$ and $b$ be the elements in the borders of $\mathcal{I}$ and next to them, respectively let $a^\prime$ and $b^\prime$ be the elements in the borders of $\mathcal{I}^c$.
Then $\mathcal{I} \between^* \mathcal{J}$ if and only if one of the following conditions holds
$$
(i)\, \{a,a^\prime\}\subset \mathcal{J} \textit{ and } \{b,b^\prime\}\subset \mathcal{J}^c;
\qquad \mbox{or} \qquad   (ii)\, \{b,b^\prime\}\subset \mathcal{J} \textit{ and } \{a,a^\prime\}\subset \mathcal{J}^c.
$$

\end{lemma}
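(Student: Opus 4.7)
The plan is to reduce everything to \cref{inter}, which characterizes $\between^*$ by the non-emptiness of the four intersections $\mathcal{I}\cap\mathcal{J}$, $\mathcal{I}^c\cap\mathcal{J}$, $\mathcal{I}\cap\mathcal{J}^c$, and $\mathcal{I}^c\cap\mathcal{J}^c$. Throughout, the key geometric fact I will use is that as we traverse the cyclic order once, the only adjacencies between $\mathcal{I}$ and $\mathcal{I}^c$ occur at the two consecutive pairs $(a',a)$ and $(b,b')$, since by hypothesis $a,b$ are the borders of $\mathcal{I}$ and $a',b'$ are the borders of $\mathcal{I}^c$.

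For the sufficiency direction ($\Leftarrow$), I would handle case $(i)$ by exhibiting explicit witnesses for each of the four non-emptiness conditions: $a\in\mathcal{I}\cap\mathcal{J}$, $a'\in\mathcal{I}^c\cap\mathcal{J}$, $b\in\mathcal{I}\cap\mathcal{J}^c$, and $b'\in\mathcal{I}^c\cap\mathcal{J}^c$. The hypothesis $\{a,a'\}\subset\mathcal{J}$ and $\{b,b'\}\subset\mathcal{J}^c$ places each border on the required side, while $a,b\in\mathcal{I}$ and $a',b'\in\mathcal{I}^c$ come from the border assumption. Invoking \cref{inter} then yields $\mathcal{I}\between^*\mathcal{J}$. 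Case $(ii)$ is symmetric, obtained by interchanging the roles of $\mathcal{J}$ and $\mathcal{J}^c$ in the four witnesses.

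For the necessity direction ($\Rightarrow$), the main intermediate step is: \emph{any arc $\mathcal{A}$ that meets both $\mathcal{I}$ and $\mathcal{I}^c$ entirely contains at least one of the consecutive pairs $\{a,a'\}$ or $\{b,b'\}$.} I would prove this by selecting $x\in\mathcal{A}\cap\mathcal{I}$ and $y\in\mathcal{A}\cap\mathcal{I}^c$; since $\mathcal{A}$ is an arc containing both, it contains a sub-arc joining them, and inside that sub-arc there must be two consecutive elements straddling the boundary between $\mathcal{I}$ and $\mathcal{I}^c$. By the characterization of borders, such a straddling pair must be $\{a,a'\}$ or $\{b,b'\}$. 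I would then apply this step to both $\mathcal{J}$ and $\mathcal{J}^c$ (both are arcs meeting $\mathcal{I}$ and $\mathcal{I}^c$, by the four non-emptiness conclusions of \cref{inter}): each must contain one of the two pairs, and because $\mathcal{J}$ and $\mathcal{J}^c$ are disjoint, they must contain different pairs, which is precisely condition $(i)$ or $(ii)$.

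The main technical obstacle I anticipate is formalizing the ``straddling consecutive pair'' step purely from the cyclic order axioms, rather than from an informal picture. Concretely, I would argue as follows: on the finite sub-arc joining $x\in\mathcal{I}$ to $y\in\mathcal{I}^c$, label each element by its membership in $\mathcal{I}$ or $\mathcal{I}^c$; since the endpoints have different labels, there exists an index at which the label switches, producing a pair of consecutive elements with one in $\mathcal{I}$ and one in $\mathcal{I}^c$, and the definition of border forces this pair to coincide with $(a,a')$ or $(b,b')$. Once this discrete ``intermediate value'' argument is in place, everything else reduces to bookkeeping on top of \cref{inter}.
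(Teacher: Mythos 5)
Your proof is correct and follows essentially the same route as the paper: both directions reduce to \cref{inter}, the sufficiency by exhibiting the four border witnesses exactly as the paper does, and the necessity by the same arc-connectivity observation that the only $\mathcal{I}$/$\mathcal{I}^c$ boundary crossings are the consecutive pairs $\{a,a'\}$ and $\{b,b'\}$ (the paper tracks $\mathcal{J}$ alone and rules out $b,b'\in\mathcal{J}$ via the non-containment conditions, whereas you apply a symmetric straddling argument to $\mathcal{J}$ and $\mathcal{J}^c$ and invoke their disjointness --- a cosmetic difference). The one detail the paper makes explicit that you elide is that the overlap hypothesis forces $|\mathcal{I}|,|\mathcal{I}^c|\ge 2$, so that $a,b,a',b'$ exist and are pairwise distinct and the two pairs $\{a,a'\}$ and $\{b,b'\}$ are genuinely different sets, a fact your final disjointness step tacitly uses.
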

\begin{proof} We first prove ($\Leftarrow$).  
    %\begin{itemize}
        %\item[($\Leftarrow$)] 
        Suppose (i) holds (the other case follows analogously). Then since $a\in \mathcal{I}$ and $a^\prime \in \mathcal{I}^c$ we get conditions 2 and 4 of \cref{inter}. Now, since $b\in \mathcal{I}$ and $b^\prime \in \mathcal{I}^c$ we get conditions 1 and 3 of  \cref{inter}.

        %\item[($\Rightarrow$)] 
        Next we prove ($\Rightarrow$). First we notice that there are at least two elements in $\mathcal{I}$ and two elements in $\mathcal{I}^c$ (otherwise containing a single element of these arcs would imply containing the whole set, contradicting one of the conditions in \cref{defi:overlap}). Hence, the elements $a,b,a^\prime$ and $b^\prime$ exist and are distinct. %We proceed by cases:
        %\begin{itemize}
            %\item
            Suppose $a\in \mathcal{J}$ (the case  $a\in \mathcal{J}^c$ is analogous), and let $z\in \mathcal{J}\setminus \mathcal{I}$ (exists by hypothesis).
            %Then since $\mathcal{J}\cap \mathcal{I}^c\neq \emptyset$ there is an element $z\notin \mathcal{I}$ such that $z\in \mathcal{J}$.
            Since $\mathcal{J}$ is an arc, it must contain one of the two paths connecting $a$ and $z$. Since it does not contain the whole $\mathcal{I}$ it must be the path that covers $a^\prime$, therefore $\{a,a^\prime\}\subset \mathcal{J}$ and $b\notin \mathcal{J}$. On the other hand, since it does not contain the whole $\mathcal{I}^c$, \sac{$b^\prime \notin \mathcal{J}$}. It follows that $\{b,b^\prime\}\subset \mathcal{J}^c$.
    %\end{itemize}
\end{proof}

Given an arc $\mathcal{I}=\{a_0,\ldots,a_{k-1}\}$ (where elements are indexed following the cyclic order), we define the {\em permutation that reverses ${\cal I}$} as
the permutation $\sigma$ s.t.~$\sigma(a_j)=a_{k-j-1}$ for $j\in\{0,\ldots,k-1\}$, and $\sigma(x)=x$ if $x\notin {\cal I}$.
%of the finite and cyclically ordered set $(\X, \mathscr{C})$ with $|\mathcal{I}|=k$,
%let $a_i$ be the $i$-th element of $\mathcal{I}$ reached by travelling trough $\mathcal{I}$ from the border $a$ to $b$ for every $i\leq \frac{k}{2}$. %Similarly denote as $b_j$ the $j$-th element of $\mathcal{I}$ reached by travelling from the $b$ to $a$.
%The permutation $\sigma$ that reverses the elements of $\mathcal{I}=[a_0,\ldots,a_{k-1}]$ is defined by $\sigma(a_j)=a_{k-j-1}$ for $j\in[k]$ and $\sigma(x)=x$ if $x\notin {\cal I}$.
%$$\sigma(x) =\begin{cases}
%a\quad\text{if }x=b\\
%b\quad\text{if }x=a\\
%a_i\quad\text{if }x=b_i\\
%b_j\quad\text{if }x=a_j\\
% x \quad \sim \\
%\end{cases}$$

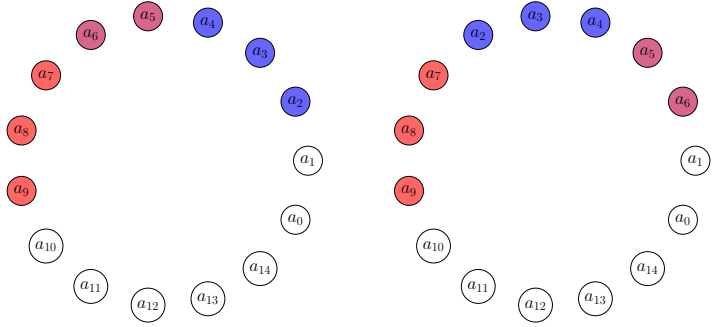
\begin{figure}
    \centering
    % ORIGINAL

\scalebox{0.385}{

    \begin{tikzpicture}[thick]
 
\def \n {15}
\def \radius {5cm}
\def \margin {7} % margin in angles, depends on the radius
\node[draw, circle, minimum size=6pt, thick,text opacity = 1,  fill opacity=0.6, fill=white ] at ({360/\n * (0- 1)}:\radius) {\textcolor{black}{\LARGE $a_{0}$}};
\node[draw, circle, minimum size=6pt, thick,text opacity = 1,  fill opacity=0.6, fill=white ] at ({360/\n * (1- 1)}:\radius) {\textcolor{black}{\LARGE $a_{1}$}};
\node[draw, circle, minimum size=6pt, thick,text opacity = 1,  fill opacity=0.6, fill=blue ] at ({360/\n * (2- 1)}:\radius) {\textcolor{black}{\LARGE $a_{2}$}};
\node[draw, circle, minimum size=6pt, thick,text opacity = 1,  fill opacity=0.6, fill=blue ] at ({360/\n * (3- 1)}:\radius) {\textcolor{black}{\LARGE $a_{3}$}};
\node[draw, circle, minimum size=6pt, thick,text opacity = 1,  fill opacity=0.6, fill=blue ] at ({360/\n * (4- 1)}:\radius) {\textcolor{black}{\LARGE $a_{4}$}};
\node[draw, circle, minimum size=6pt, thick,text opacity = 1,  fill opacity=0.6, fill=purple ] at ({360/\n * (5- 1)}:\radius) {\textcolor{black}{\LARGE $a_{5}$}};
\node[draw, circle, minimum size=6pt, thick,text opacity = 1,  fill opacity=0.6, fill=purple ] at ({360/\n * (6- 1)}:\radius) {\textcolor{black}{\LARGE $a_{6}$}};
\node[draw, circle, minimum size=6pt, thick,text opacity = 1,  fill opacity=0.6, fill=red ] at ({360/\n * (7- 1)}:\radius) {\textcolor{black}{\LARGE $a_{7}$}};
\node[draw, circle, minimum size=6pt, thick,text opacity = 1,  fill opacity=0.6, fill=red ] at ({360/\n * (8- 1)}:\radius) {\textcolor{black}{\LARGE $a_{8}$}};
\node[draw, circle, minimum size=6pt, thick,text opacity = 1,  fill opacity=0.6, fill=red ] at ({360/\n * (9- 1)}:\radius) {\textcolor{black}{\LARGE $a_{9}$}};
\node[draw, circle, minimum size=6pt, thick,text opacity = 1,  fill opacity=0.6, fill=white ] at ({360/\n * (10- 1)}:\radius) {\textcolor{black}{\LARGE $a_{10}$}};
\node[draw, circle, minimum size=6pt, thick,text opacity = 1,  fill opacity=0.6, fill=white ] at ({360/\n * (11- 1)}:\radius) {\textcolor{black}{\LARGE $a_{11}$}};
\node[draw, circle, minimum size=6pt, thick,text opacity = 1,  fill opacity=0.6, fill=white ] at ({360/\n * (12- 1)}:\radius) {\textcolor{black}{\LARGE $a_{12}$}};
\node[draw, circle, minimum size=6pt, thick,text opacity = 1,  fill opacity=0.6, fill=white ] at ({360/\n * (13- 1)}:\radius) {\textcolor{black}{\LARGE $a_{13}$}};
\node[draw, circle, minimum size=6pt, thick,text opacity = 1,  fill opacity=0.6, fill=white ] at ({360/\n * (14- 1)}:\radius) {\textcolor{black}{\LARGE $a_{14}$}};
\end{tikzpicture}}
\qquad
    % PERMUTED
\scalebox{0.385}{

    \begin{tikzpicture}[thick]
 
\def \n {15}
\def \radius {5cm}
\def \margin {7} % margin in angles, depends on the radius

\node[draw, circle, minimum size=6pt, thick,text opacity = 1,  fill opacity=0.6, fill=white ] at ({360/\n * (0- 1)}:\radius) {\textcolor{black}{\LARGE $a_{0}$}};
\node[draw, circle, minimum size=6pt, thick,text opacity = 1,  fill opacity=0.6, fill=white ] at ({360/\n * (1- 1)}:\radius) {\textcolor{black}{\LARGE $a_{1}$}};
\node[draw, circle, minimum size=6pt, thick,text opacity = 1,  fill opacity=0.6, fill=purple ] at ({360/\n * (2- 1)}:\radius) {\textcolor{black}{\LARGE $a_{6}$}};
\node[draw, circle, minimum size=6pt, thick,text opacity = 1,  fill opacity=0.6, fill=purple ] at ({360/\n * (3- 1)}:\radius) {\textcolor{black}{\LARGE $a_{5}$}};
\node[draw, circle, minimum size=6pt, thick,text opacity = 1,  fill opacity=0.6, fill=blue ] at ({360/\n * (4- 1)}:\radius) {\textcolor{black}{\LARGE $a_{4}$}};
\node[draw, circle, minimum size=6pt, thick,text opacity = 1,  fill opacity=0.6, fill=blue ] at ({360/\n * (5- 1)}:\radius) {\textcolor{black}{\LARGE $a_{3}$}};
\node[draw, circle, minimum size=6pt, thick,text opacity = 1,  fill opacity=0.6, fill=blue ] at ({360/\n * (6- 1)}:\radius) {\textcolor{black}{\LARGE $a_{2}$}};
\node[draw, circle, minimum size=6pt, thick,text opacity = 1,  fill opacity=0.6, fill=red ] at ({360/\n * (7- 1)}:\radius) {\textcolor{black}{\LARGE $a_{7}$}};
\node[draw, circle, minimum size=6pt, thick,text opacity = 1,  fill opacity=0.6, fill=red ] at ({360/\n * (8- 1)}:\radius) {\textcolor{black}{\LARGE $a_{8}$}};
\node[draw, circle, minimum size=6pt, thick,text opacity = 1,  fill opacity=0.6, fill=red ] at ({360/\n * (9- 1)}:\radius) {\textcolor{black}{\LARGE $a_{9}$}};
\node[draw, circle, minimum size=6pt, thick,text opacity = 1,  fill opacity=0.6, fill=white ] at ({360/\n * (10- 1)}:\radius) {\textcolor{black}{\LARGE $a_{10}$}};
\node[draw, circle, minimum size=6pt, thick,text opacity = 1,  fill opacity=0.6, fill=white ] at ({360/\n * (11- 1)}:\radius) {\textcolor{black}{\LARGE $a_{11}$}};
\node[draw, circle, minimum size=6pt, thick,text opacity = 1,  fill opacity=0.6, fill=white ] at ({360/\n * (12- 1)}:\radius) {\textcolor{black}{\LARGE $a_{12}$}};
\node[draw, circle, minimum size=6pt, thick,text opacity = 1,  fill opacity=0.6, fill=white ] at ({360/\n * (13- 1)}:\radius) {\textcolor{black}{\LARGE $a_{13}$}};
\node[draw, circle, minimum size=6pt, thick,text opacity = 1,  fill opacity=0.6, fill=white ] at ({360/\n * (14- 1)}:\radius) {\textcolor{black}{\LARGE $a_{14}$}};
\end{tikzpicture}}

\caption{
\cgp{In the left, two strictly overlapping arcs $\mathcal{I} = \{a_i\}_{i=2}^6$ (blue and purple) and $\mathcal{J}= \{a_i\}_{i=5}^9$ (red and purple). The intersection $\mathcal{I}\cap \mathcal{J} = \{a_5, a_6\}$ in purple. In the right, we have the ordering after the action of the permutation $\sigma$ that reverses the elements of ${\cal I}$.}}
\label{fig:orient}%
\end{figure}

\begin{lemma}\label{lema:canreverse}
Let $\mathcal{I}$ and $\mathcal{J}$ be two arcs and let $\sigma$ be the permutation that reverses the elements of $\mathcal{I}$. Then $\mathcal{I}\between^*\mathcal{J}$ iff the permutation of ${\cal J}$ by $\sigma$ is not an arc.
%does not remains an arc after permuting its elements following $\sigma$.

\end{lemma}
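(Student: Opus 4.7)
The plan is to prove both directions via a case analysis driven by \Cref{lema:borders} and \Cref{defi:overlap}. For the reverse implication I would argue the contrapositive: if $\mathcal{I}\not\between^*\mathcal{J}$, one of the four containment relations in \Cref{defi:overlap} fails, and in each case $\sigma(\mathcal{J})$ remains an arc. When $\mathcal{J}\subset \mathcal{I}^c$, the permutation $\sigma$ fixes $\mathcal{J}$ pointwise, and when $\mathcal{I}\subset \mathcal{J}$ it fixes $\mathcal{J}\setminus \mathcal{I}$ pointwise while permuting $\mathcal{I}$ among itself, so in both cases $\sigma(\mathcal{J})=\mathcal{J}$. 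When $\mathcal{J}\subset \mathcal{I}$, the image $\sigma(\mathcal{J})$ is the mirror of $\mathcal{J}$ inside the sub-arc $\mathcal{I}$, hence still a consecutive sub-arc of $\mathcal{I}$. The remaining case $\mathcal{I}^c\subset \mathcal{J}$ reduces to the previous one applied to $\mathcal{J}^c\subset \mathcal{I}$, using that $\sigma$ is a bijection (so $\sigma(\mathcal{J})^c=\sigma(\mathcal{J}^c)$) together with the fact that the complement of an arc in a cyclic order is again an arc.

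For the forward direction, I would invoke \Cref{lema:borders} to assume, without loss of generality, that $\{a,a'\}\subset \mathcal{J}$ and $\{b,b'\}\subset \mathcal{J}^c$. Writing $\mathcal{I}=\{a,x_1,\ldots,x_{k-2},b\}$ and $\mathcal{I}^c=\{b',y_1,\ldots,y_{m-2},a'\}$ in cyclic order, the sets $\mathcal{J}\cap \mathcal{I}$ and $\mathcal{J}\cap \mathcal{I}^c$ are sub-arcs of $\mathcal{I}$ and $\mathcal{I}^c$ anchored at $a$ and $a'$, respectively, and missing $b$ and $b'$. Applying $\sigma$, the sub-arc inside $\mathcal{I}$ gets flipped to a sub-arc anchored at $b$ and missing $a$, whereas the sub-arc inside $\mathcal{I}^c$ is left fixed pointwise. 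The image $\sigma(\mathcal{J})$ therefore contains $b$ but misses its cyclic neighbour $b'\in\mathcal{J}^c$, and contains $a'$ but misses its cyclic neighbour $a=\sigma(b)$. Consequently, $\sigma(\mathcal{J})$ splits into two maximal cyclic runs separated by two non-empty gaps, so it cannot be an arc.

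The main subtlety I anticipate is the preliminary structural fact that, under strict overlap, each of $\mathcal{I}\cap \mathcal{J}$, $\mathcal{I}\cap \mathcal{J}^c$, $\mathcal{I}^c\cap \mathcal{J}$, $\mathcal{I}^c\cap \mathcal{J}^c$ is a single sub-arc rather than two disjoint pieces (which is possible a priori for two arcs on a cycle). This is what makes the parameterisation used in the forward direction legitimate and lets me identify which pieces sit in which half of $\mathcal{I}$ and $\mathcal{I}^c$. If, for instance, $\mathcal{I}\cap \mathcal{J}$ were to split into two pieces, then the arc $\mathcal{J}$ would have to bridge them through the entirety of $\mathcal{I}^c$, forcing $\mathcal{I}^c\subset \mathcal{J}$ and contradicting condition 3 of \Cref{defi:overlap} (equivalently condition 3 of \Cref{inter}). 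Once this observation is in place, the gap analysis above handles all edge cases uniformly, since each of the two gaps is guaranteed to contain at least one of the fixed elements $b'$ or $a$.
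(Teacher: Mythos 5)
Your proof is correct and follows essentially the same route as the paper: the reverse direction by contraposition through the four failed containments of \cref{defi:overlap}, and the forward direction via \cref{lema:borders}, tracking the borders $a,b,a',b'$ under $\sigma$ and noting that $\sigma(\mathcal{J})$ contains $b$ and $a'$ but neither $a$ nor $b'$. If anything, your case analysis in the reverse direction is more careful than the paper's one-line claim that $\sigma(\mathcal{J})=\mathcal{J}$ in every non-overlapping case, which is literally false when $\mathcal{J}\subsetneq\mathcal{I}$ (there $\sigma(\mathcal{J})$ is the mirror of $\mathcal{J}$ inside $\mathcal{I}$ --- still an arc, but generally a different set), exactly as you observe.
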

\begin{proof} We first prove ($\Leftarrow$). By contraposition, assume any of the conditions in \cref{defi:overlap} do not hold, then it is easy to see that $\sigma(\mathcal{J})=\mathcal{J}$, which is an arc. 
Now we prove ($\Rightarrow$). If $\mathcal{I}\between^*\mathcal{J}$, then at least one of the conditions of \cref{lema:borders} hold. %Let $\sigma$ be the permutation that reverses $\mathcal{I}$. 
Since $\sigma(a)=b$, $\sigma(b)=a$, $\sigma(a^\prime)=a^\prime$ and $\sigma(b^\prime)=b^\prime$ then $\sigma(\mathcal{J})$ is not connected and thus it is not an arc.
\end{proof}

As an example, consider the two strictly overlapping arcs $\mathcal{I}$ and $\mathcal{J}$ in \cref{fig:orient}. \cgp{By reversing ${\cal I}$, the arc ${\cal J}$ gets ripped appart into two disconnected pieces: namely its red nodes and its purple nodes.} %the blue arrow (the arc $\mathcal{I}$) the red arrow (the arc $\mathcal{J}$) 
%gets ripped apart into two disconnected pieces. 
Recall from \cref{prop:arc_is_ball} that a dissimilarity matrix is circular Robinson iff each ball $\mathcal{J}$ is an arc. Therefore, any arc $\mathcal{I}$ cannot be arbitrarily reversed \sac{to produce} a new Robinson ordering iff there is some ball that strictly overlaps with $\mathcal{I}$. In terms of $PQ$-trees, a necessary and sufficient condition for a $Q$-node $\alpha$ to be orientable is the existence of some $z\in \X$ and $r>0$ such that the ball $B_r(z)$ strictly overlaps with $\mathcal{I}_\alpha$. In such case, one of the two orientations of the node is not compatible with a Robinson ordering since in one of these orientations the ball gets disconnected. By \cref{lema:borders}, to determine the orientation of the arc $\mathcal{I}$ one could equivalently check whether there exists $z\in {\cal X}$ and $r>0$ such that
\[
\big[ \{a,a^\prime\}\subset B_r(z) \,\,\wedge\,\, \{b,b^\prime\}\subset B_r(z)^c \big]
\quad \vee \quad
\big[\{b,b^\prime\}\subset B_r(z) \,\,\wedge\,\, \{a,a^\prime\}\subset B_r(z)^c \big],
\]
If none of this conditions hold, then we say the arc is \textit{not orientable} which means that the $Q$-node in the tree must be preserved. Notice that this requires knowing that $a,b$ ($a^\prime,b^\prime$) are the borders of $\mathcal{I}$ (resp. $\mathcal{I}^c$) in advance. \brute (\cref{alg:bruteforce}) is an efficient way for orienting the arc $\mathcal{I}$ with respect to the dissimilarity $\mathbf{d}$ when we have \textit{border candidates} but the actual borders within the candidates are unknown. 

%Our algorithm will work with lists of {\em border candidates}, which will be useful to count the number of operations involved in the orientation subroutine.

\begin{defi}[Border candidates \textit{of an arc}]\label{defi:candidates}
An 4-tuple of sets $(\mathcal{A}^\prime,\mathcal{A},\mathcal{B},\mathcal{B}^\prime)$ are said to be border candidates of the arc $\mathcal{I}$, if the following properties hold:
\begin{enumerate}
    \item $\mathcal{A},\mathcal{B}\subset \mathcal{I}$ and $\mathcal{A}^\prime ,\mathcal{B}^\prime \subset \mathcal{I}^c$.
    \item The sets are pairwise disjoint.
    \item If $a,b$ are the borders of $\mathcal{I}$ and $a^\prime,b^\prime$ are the borders of $\mathcal{I}^c$, then: $a\in \mathcal{A}, b\in \mathcal{B},a^\prime\in \mathcal{A}^\prime$ and $b^\prime\in \mathcal{B}^\prime$.
    \item Either $(\mathcal{A}^\prime,\mathcal{A},\mathcal{B},\mathcal{B}^\prime)$ or $(\mathcal{A}^\prime,\mathcal{B},\mathcal{A}, \mathcal{B}^\prime)$ is cyclically ordered.\footnote{Formally, the ordered collection is a consistent cyclic quasi order, see \cref{defi:quasiorder}}
\end{enumerate}
\end{defi}

%The following Lemma whose proof is deferred to the Appendix states the correctness of \cref{alg:bruteforce}.
The next result provides correctness for \brute (\cref{alg:bruteforce}) (see its proof in \Cref{app:bruteiscorrect}).
%. We defer its proof to the Appendix.

\begin{lemma}\label{lema:bruteiscorrect}
Let $\mathcal{I}\subset \X$ be an arc in any Robinson ordering. Suppose $a,b$ are the borders of $\mathcal{I}$ and $a^\prime,b^\prime$ are the respective borders of $\mathcal{I}^c$. Additionally suppose that $(\mathcal{A}^\prime,\mathcal{A},\mathcal{B},\mathcal{B}^\prime)$ are border candidates for $\mathcal{I}$. Then, for every $z\in \X$, the following statements are equivalent:
\begin{enumerate}
    \item Both $\{a,a^\prime\}\subset B_r(z)$ and $\{b,b^\prime\}\subset B_r(z)^c$ hold \sac{for some $r>0$.}
    %$\max \{ f_z(a), f_z(a^{\prime})\} < \min \{ f_z(b), f_z(b^{\prime})\}$.
    \item There exist $(x^\prime,x,y,y^\prime) \in \mathcal{A}^\prime\times \mathcal{A}\times\mathcal{B}\times\mathcal{B}^\prime$ s.t.~$\max \{ f_z(x), f_z(x^{\prime})\} < \min \{ f_z(y), f_z(y^{\prime})\}$.
  % \item It holds that $\max\big\{\min\{f_z(\mathcal{A})\},\min\{f_z(\mathcal{A}^\prime )\}\big\}
   % < \min\big\{\max\{f_z(\mathcal{B})\},\max\{f_z(\mathcal{B}^\prime )\}\big\}$.
        \item It holds that $\max\big\{\min f_z(\mathcal{A}),\min f_z(\mathcal{A}^\prime )\big\}
    < \min\big\{\max f_z(\mathcal{B}),\max f_z(\mathcal{B}^\prime ) \big\}$.
\end{enumerate}
Above $f_z(\cdot) \defined \mathbf{d}(z,\cdot)$ and given any $U\subset \X$, \sac{$\min f_z(U) \defined \min_{y\in U}f_z(y)$ (similar for $\max$)}.
\end{lemma}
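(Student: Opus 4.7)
The plan is to prove the three statements are equivalent via the chain (1) $\Rightarrow$ (2) $\Rightarrow$ (1) together with a separate (2) $\Leftrightarrow$ (3). The first implication is essentially by substitution, the second relies crucially on the arc structure of balls in a circular Robinson dissimilarity, and the last is a direct optimization argument.

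For (1) $\Rightarrow$ (2), I would simply pick $(x',x,y,y') = (a',a,b,b')$. By \cref{defi:candidates} the actual borders satisfy $a \in \mathcal{A}$, $a'\in \mathcal{A}'$, $b \in \mathcal{B}$, $b'\in \mathcal{B}'$, so this tuple is admissible. The hypothesis $\{a,a'\}\subset B_r(z)$ and $\{b,b'\}\subset B_r(z)^c$ then gives $\max\{f_z(a),f_z(a')\} \leq r < \min\{f_z(b),f_z(b')\}$, which is the strict inequality in (2).

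The main effort is in (2) $\Rightarrow$ (1). Given a tuple $(x',x,y,y')$ satisfying (2), I would choose any $r$ in the nonempty interval $[\max\{f_z(x),f_z(x')\},\min\{f_z(y),f_z(y')\})$, so that $\{x,x'\}\subset B_r(z)$ and $\{y,y'\}\subset B_r(z)^c$. By \cref{prop:arc_is_ball}, $B_r(z)$ is an arc, and therefore so is its complement. Assume without loss of generality that the candidates are cyclically ordered as $(\mathcal{A}',\mathcal{A},\mathcal{B},\mathcal{B}')$ (the alternative case is symmetric). The only transitions between $\mathcal{I}$ and $\mathcal{I}^c$ along the circle occur at the consecutive border pairs $(a',a)$ between $\mathcal{A}'$ and $\mathcal{A}$, and $(b,b')$ between $\mathcal{B}$ and $\mathcal{B}'$. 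Since $B_r(z)$ is an arc containing $x' \in \mathcal{A}'$ and $x \in \mathcal{A}$, it must contain one of the two connecting arcs between them: the short arc through $\{a',a\}$ or the long arc traversing all of $\mathcal{B}\cup\mathcal{B}'$. The long option is ruled out because it would contain $y$ and $y'$, contradicting $\{y,y'\}\subset B_r(z)^c$. Hence $\{a,a'\}\subset B_r(z)$. A symmetric argument applied to the arc $B_r(z)^c$, which contains $y \in \mathcal{B}$ and $y'\in \mathcal{B}'$, yields $\{b,b'\}\subset B_r(z)^c$, completing (1).

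Finally, (2) $\Leftrightarrow$ (3) is a routine optimization: the existential condition in (2) is feasible if and only if the maximum gap is positive, and this maximum is attained by taking $x, x'$ as minimizers of $f_z$ over $\mathcal{A}, \mathcal{A}'$ and $y, y'$ as maximizers of $f_z$ over $\mathcal{B},\mathcal{B}'$, which reproduces precisely the inequality in (3). I expect the hard part to be the arc-crossing argument in (2) $\Rightarrow$ (1), where one must formally justify that an arc meeting two adjacent pieces of a cyclic quasi-order but avoiding the other two must cover the common boundary elements; once this is made precise using \cref{defi:candidates}, the rest is straightforward bookkeeping.
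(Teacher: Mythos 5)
Your proposal is correct and follows essentially the same route as the paper: (1)$\Rightarrow$(2) by substituting the actual borders $(a',a,b,b')$, (2)$\Rightarrow$(1) by choosing a radius $r$ so that $\{x,x'\}\subset B_r(z)$ and $\{y,y'\}\subset B_r(z)^c$ and then invoking the arc structure of balls (\cref{prop:arc_is_ball}) to force the boundary elements $a,a'$ into the ball and $b,b'$ out of it, and (2)$\Leftrightarrow$(3) by optimizing the choice of $(x',x,y,y')$. Your write-up is, if anything, slightly more explicit than the paper's about why the arc through $\{a',a\}$ rather than the one through $\{y,y'\}$ must be the one contained in $B_r(z)$.
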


\begin{coro}\label{coro:brutecorrect}
Let $\mathcal{I}\subset \X$ be an arc in any Robinson ordering and suppose the sets $(\mathcal{A}^\prime,\mathcal{A},\mathcal{B},\mathcal{B}^\prime)$ are \textit{border candidates} for the arc $\mathcal{I}$. Then, \cref{alg:bruteforce} correctly determines if $\mathcal{I}$ must be fixed, reversed or if it is not orientable.
\end{coro}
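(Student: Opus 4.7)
The plan is to combine \cref{lema:bruteiscorrect} with \cref{lema:borders,lema:canreverse,prop:arc_is_ball} to translate the numerical test performed by \brute into a statement about admissible reorientations of $\mathcal{I}$. First I would observe that \cref{lema:bruteiscorrect} already gives the full logical content of one ``side'' of the orientation test: for a fixed $z\in\X$, condition (3) can be evaluated in $\mathcal{O}(|\mathcal{A}|+|\mathcal{A}'|+|\mathcal{B}|+|\mathcal{B}'|)$ time, and by the equivalence with condition (1) it certifies the existence of a radius $r>0$ with $\{a,a'\}\subset B_r(z)$ and $\{b,b'\}\subset B_r(z)^c$, i.e.~case (i) of \cref{lema:borders}. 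Combining this with \cref{lema:borders} then yields $\mathcal{I}\between^{*} B_r(z)$, and by \cref{lema:canreverse} the permutation that reverses $\mathcal{I}$ would turn $B_r(z)$ into a non-arc. Since $\mathbf{d}$ is circular Robinson, \cref{prop:arc_is_ball} forces balls to be arcs in any Robinson ordering, so the reversed orientation of $\mathcal{I}$ is forbidden and the $Q$-node associated to $\mathcal{I}$ must be \emph{fixed} in its current orientation.

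Next I would apply the same reasoning after exchanging the roles of $(\mathcal{A},\mathcal{B})$ and of $(\mathcal{A}',\mathcal{B}')$, which is exactly case (ii) of \cref{lema:borders}. \brute (cf.~\cref{alg:bruteforce}) performs both checks while sweeping $z$ over $\X$: detecting case (i) for some $z$ certifies that the current orientation is the only admissible one (\emph{fix}), while detecting case (ii) certifies that the stored orientation is incompatible with an existing Robinson ordering and must be reversed (\emph{reverse}). Because cases (i) and (ii) of \cref{lema:borders} are mutually exclusive under the hypotheses of \cref{defi:candidates} (the border candidates fix which side of $\mathcal{I}$ contains $a$ vs.~$b$), the two outcomes cannot both trigger simultaneously, which rules out inconsistencies in the algorithm's decision.

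Finally, I would argue the ``not orientable'' case by contraposition. Suppose that for every $z\in\X$ and every $r>0$ neither case (i) nor case (ii) of \cref{lema:borders} holds with $\mathcal{J}=B_r(z)$. Then no ball strictly overlaps $\mathcal{I}$, and by \cref{lema:canreverse} reversing $\mathcal{I}$ preserves every ball as an arc. Using \cref{prop:arc_is_ball} once more, both orientations of $\mathcal{I}$ are compatible with some circular Robinson ordering, so the $Q$-node must be preserved, matching the algorithm's ``not orientable'' output.

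The only delicate step in the plan is verifying that the border candidates are rich enough to justify the application of \cref{lema:borders} at the true borders $a,b,a',b'$: property (3) of \cref{defi:candidates} guarantees $a\in\mathcal{A}$, $b\in\mathcal{B}$, $a'\in\mathcal{A}'$, $b'\in\mathcal{B}'$, so condition (2) of \cref{lema:bruteiscorrect} indeed witnesses (1) via the quadruple $(a',a,b,b')$, and property (4) ensures the cyclic layout required to identify case (i) vs.~case (ii) unambiguously. Once this bookkeeping is in place, the correctness of \brute follows directly from chaining the above equivalences, which is exactly the content of \cref{coro:brutecorrect}.
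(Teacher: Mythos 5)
Your overall strategy --- chaining \cref{lema:bruteiscorrect} with \cref{lema:borders}, \cref{lema:canreverse} and \cref{prop:arc_is_ball} --- is the right one, but there is a genuine error in how you assign the algorithm's outputs to the two cases of \cref{lema:borders}. Exchanging the roles of $(\mathcal{A},\mathcal{B})$ and of $(\mathcal{A}',\mathcal{B}')$ \emph{simultaneously} produces the test $O_2(z)$ of \cref{alg:bruteforce}, which is exactly case (ii) of \cref{lema:borders}, namely $\{b,b'\}\subset B_r(z)$ and $\{a,a'\}\subset B_r(z)^c$. Both case (i) and case (ii) certify $\mathcal{I}\between^{*} B_r(z)$ \emph{with the stored pairing of borders} ($a$ adjacent to $a'$ and $b$ adjacent to $b'$), hence both lead to the verdict \texttt{correct}; indeed the algorithm returns \texttt{correct} on $O_1(z)\vee O_2(z)$. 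Your proof instead declares case (ii) to mean \texttt{reverse}, which would make the algorithm's answer wrong in a situation where it is in fact right.

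The missing idea is that the \texttt{reverse} verdict is detected by the \emph{cross}-pairings $O_3(z)$ and $O_4(z)$, in which $\mathcal{A}$ is matched with $\mathcal{B}'$ and $\mathcal{B}$ with $\mathcal{A}'$. By property 4 of \cref{defi:candidates}, exactly one of $(\mathcal{A}',\mathcal{A},\mathcal{B},\mathcal{B}')$ or $(\mathcal{A}',\mathcal{B},\mathcal{A},\mathcal{B}')$ is cyclically ordered. In the first situation the consecutive border pairs are $(a,a')$ and $(b,b')$, and \cref{lema:borders} together with \cref{lema:bruteiscorrect} shows that a strictly overlapping ball is witnessed by $O_1$ or $O_2$ (stored orientation is the true one: fix). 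In the second situation the consecutive pairs are $(b,a')$ and $(a,b')$, and the same lemmas show the witness is $O_3$ or $O_4$ (the arc is stored in the orientation opposite to the true one: reverse). One also needs the easy observation that the two families of tests cannot fire simultaneously: if, say, $(\mathcal{A}',\mathcal{A},\mathcal{B},\mathcal{B}')$ is cyclically ordered, then any arc containing both $a$ and $b'$ must contain $a'$ or $b$, so $O_3$ and $O_4$ are vacuous. With that correction, your treatment of the \texttt{not orientable} branch (no ball strictly overlaps $\mathcal{I}$, so by \cref{lema:canreverse} and \cref{prop:arc_is_ball} both orientations remain admissible) goes through, as does the reduction from the true borders $a,b,a',b'$ to the candidate sets via property 3 of \cref{defi:candidates}.
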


\begin{algorithm}%{BCO} 
 \caption{\brute}\label{alg:bruteforce}
\begin{algorithmic}[1]
\small
\STATE{\textbf{Input}: A sequence of sets $(\mathcal{A}^\prime,\mathcal{A},\mathcal{B},\mathcal{B}^\prime)$}

\STATE{Let $f_z(x)\defined  \mathbf{d}(z,x)$ for every $z\in \X$}
\STATE{Let $O_i: \X \rightarrow \{True,False\}$ for $i={1,2,3,4}$ be defined by}
\sac{
\STATE{$O_1(z)\defined 
\max \big\{  \min f_z(\mathcal{A}),\min f_z(\mathcal{A}^\prime ) \big\}
< \min  \big\{ \max f_z(\mathcal{B}),\max f_z(\mathcal{B}^\prime )\big\}$}
\STATE{$O_2(z)\defined 
\max \big\{ \min f_z(\mathcal{B}),\min f_z(\mathcal{B}^\prime )\big\}
< \min \big\{ \max f_z(\mathcal{A}),\max f_z(\mathcal{A}^\prime ) \big\}$}
\STATE{$O_3(z)\defined 
\max \big\{ \min f_z(\mathcal{A}),\min f_z(\mathcal{B}^\prime )\big\}
< \min \big\{ \max f_z(\mathcal{A}^\prime),\max f_z(\mathcal{B} ) \big\}$}
\STATE{$O_4(z)\defined 
\max \big\{ \min f_z(\mathcal{B}),\min f_z(\mathcal{A}^\prime )\big\}
< \min \big\{ \max f_z(\mathcal{B}^\prime),\max f_z(\mathcal{A} ) \big\}$}}
\FOR{$z\in \X$}
    \IF{$O_1(z)\vee O_2(z)$}
        \RETURN{\texttt{`correct'}}
    \ELSIF{$O_3(z)\vee O_4(z)$}
         \RETURN{\texttt{`reverse'}}
    \ENDIF
\ENDFOR
\RETURN{\texttt{`not orientable'}}
\STATE{\textbf{Output}: A string determining the orientation of the input}
\end{algorithmic}
\end{algorithm}%

\begin{obs} \label{obs:complexitybrute}
The time complexity of \cref{alg:bruteforce} with input $(\mathcal{A}^\prime,\mathcal{A},\mathcal{B},\mathcal{B}^\prime)$ is $\mathcal{O}(|\mathcal{X}|\cdot\max\{|\mathcal{A}^\prime|,|\mathcal{A}|,|\mathcal{B}|,|\mathcal{B}^\prime|\} ),$
%${\displaystyle\mathcal{O}(|\mathcal{X}|\cdot \max_{C\in \{\mathcal{A}^\prime,\mathcal{A},\mathcal{B},\mathcal{B}^\prime\}} |C| )}$,
thus it is an efficient way of orienting a $Q$-node $\alpha$ whenever the sets of border candidates for $\mathcal{I}_\alpha$ is not too big.
\end{obs}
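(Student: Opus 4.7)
The plan is to read off the cost of Algorithm~\ref{alg:bruteforce} directly from its pseudocode: there is a single \texttt{for}-loop over $z\in\X$, and the body evaluates at most four Boolean predicates $O_1(z),\ldots,O_4(z)$, each built from quantities of the form $\min f_z(U)$ and $\max f_z(U)$ with $U\in\{\mathcal{A},\mathcal{A}',\mathcal{B},\mathcal{B}'\}$. So the entire argument reduces to bounding the per-iteration cost and multiplying by $|\X|$.

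First I would argue that, for any fixed $z$ and finite $U\subseteq\X$, computing $\min_{y\in U}\mathbf{d}(z,y)$ (respectively the max) requires a single linear pass through $U$, each step of which queries one entry of the stored dissimilarity matrix in constant time. Here I use the standard random-access model implicit throughout the paper, where the input $D\in\mathbb{R}^{n\times n}$ is given as an array and $\mathbf{d}(z,y)$ is readable in $\mathcal{O}(1)$. Hence each of $\min f_z(U),\max f_z(U)$ is obtained in $\mathcal{O}(|U|)$ time, and any single predicate $O_i(z)$ costs
\[
\mathcal{O}(|\mathcal{A}|+|\mathcal{A}'|+|\mathcal{B}|+|\mathcal{B}'|)=\mathcal{O}(\max\{|\mathcal{A}'|,|\mathcal{A}|,|\mathcal{B}|,|\mathcal{B}'|\}),
\]
since the number of sets is a constant (four). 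Evaluating all four predicates in an iteration therefore leaves the per-iteration cost at $\mathcal{O}(\max\{|\mathcal{A}'|,|\mathcal{A}|,|\mathcal{B}|,|\mathcal{B}'|\})$.

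Second, summing this bound over the $|\X|$ iterations of the outer \texttt{for}-loop immediately yields the advertised bound $\mathcal{O}(|\X|\cdot\max\{|\mathcal{A}'|,|\mathcal{A}|,|\mathcal{B}|,|\mathcal{B}'|\})$, and the observation's concluding remark about efficiency when the border candidates are small is a direct corollary. I do not foresee any genuine obstacle: the only non-trivial point is to make explicit the constant-time dissimilarity-access model, and to note that the early-termination \texttt{return} statements inside the loop can only decrease the running time, so they do not affect the worst-case upper bound.
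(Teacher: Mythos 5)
Your argument is correct and matches what the paper intends: the observation is stated without proof precisely because it follows from a direct operation count of the single loop over $z\in\X$, with each iteration costing $\mathcal{O}(|\mathcal{A}|+|\mathcal{A}'|+|\mathcal{B}|+|\mathcal{B}'|)=\mathcal{O}(\max\{|\mathcal{A}'|,|\mathcal{A}|,|\mathcal{B}|,|\mathcal{B}'|\})$ under constant-time access to the dissimilarity. Your explicit notes on the access model and on early termination only tightening the bound are accurate and complete the verification.
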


\subsection{The recursive seriation algorithm}
\label{sec:algo:main}

%\csl{We now develop our algorithm and show its optimality. In general terms, the algorithm attempts to merge nearest neighbours into $Q$-nodes, to then orient these nodes whenever this is possible.} For this recursive algorithm to work, we need an appropriate data structure that maintains arcs certified by nearest neighbour conditions. We find convenient for this purpose to use trees, which keep track of the nearest neighbours obtained at different steps of the recursion. \sac{For simplicity, the algorithm has been broken into several subroutines  which we explain separately in what follows.}

%[PRIMERO ALGO EN TERMINOS GENERALES]

\cgp{
We devote the next subsection to describing our main algorithm: \main (\Cref{alg:Rec_seriation}). This algorithm starts from the singleton elements of ${\cal X}$, and proceeds by recursively detecting nearest neighbors among previously computed arcs, merging them, and deciding whether the merged sets must be oriented. An important ingredient of our algorithm is the use of $Q$-trees: this is a useful data structure to merge arcs, as well as fixing the orientations when they are detected.
In order to run in total quadratic time, great care is needed regarding the number of comparisons used to decide the orientations, for which we find useful to compute {\em border candidates} for $Q$-trees.
}
%. To achieve this goal, %we must decide these orientations in every step of the recursion, and to use the least number of dissimilarity evaluations for this purpose. Key to this challenge is the computation of we make use of a set of {\em border candidates} for trees: since only border elements are needed to decide orientations and computing distances between the resulting trees, the border candidates effectively reduce the number of comparisons needed.} 

%[AHORA DESCRIPCION DE SUBRUTINAS (Y PSEUDOCODIGO?)]

\cgp{For the sake of readibility, we dissect the main algorithm in terms of various subroutines, which are presented and analyzed separately. Hence, we structure our presentation as follows. In \Cref{sssec:Initialization} we specify the initialization of \main. In \Cref{sssec:Border_Candidates}, we describe the key operation of computing the border candidates of a $Q$-tree, summarized in the procedure \bordercandidates (\Cref{alg:bc}). Next, in \Cref{sssec:Tree_Dissimilarity} we describe the operation of computing dissimilarities among $Q$-trees, and how this leads to the recursion behind \main. In \Cref{sssec:Consecutive_Orientation} we study the process of deciding the orientation among consecutive children of a given $Q$-node, including a pseudocode of this procedure \orient (\Cref{alg:orient}). Next, in \Cref{sssec:Complete_Internal_Orientation} we use the previous subroutine and complement it with the additional steps required to completely orient the internal nodes of a $Q$-tree; the associated subroutines here are in the procedures \complete (\Cref{alg:complete}), and \completefinal (\Cref{alg:completefinal}). 
In \Cref{sssec:External_Orientation} we provide a method to decide the orientation between trees that have been detected to be nearest neighbours; the useful subroutine here is the \external (\Cref{alg:external}).
}
%\cgp{In \Cref{alg:Rec_seriation}, we provide a high level description of \cref{alg:Rec_seriation}, that we also describe next. For the recursion to work, the input must be an arbitrary family of $Q$-trees, but the actual initialization of the algorithm is given by a collection of singletons, $\mathbf{T}=\X$. Given its input, the algorithm first computes the set of nearest neighbor trees, which is done in the \external (\Cref{alg:external}) subroutine, based on the set of border candidates for each tree (ESTO DEBE SER ENTREGADO EN EL INPUT). Next, the algorithm uses an \partition subroutine (\Cref{alg:gac}), which merges the nearest neighbors found. If the family of trees has been reduced to a singleton, then that algorithm calls a FO ... ToDo}
%we present a high level pseudocode and describe its main steps. For simplicity, we describe the orienting steps at the end.

\begin{algorithm}
\small
 \caption{\main}
\begin{algorithmic}[1]
\label{alg:Rec_seriation}
\STATE{\textbf{Input}: $({\X},\mathbf{d}, \mathbf{T})$: $\mathbf{d}$ dissimilarity over ${\X}$; $\mathbf{T}$ family of $Q$-trees with leaves given by elements of $\X$}
%\STATE{ ${\cal B}({\cal T})=\bordercandidates({\cal T})$} %\cgp{Por legibilidad: Mejor usar \bordercandidates}
\FOR{${\cal T}\in \mathbf{T}$}
\FOR{${\cal T}^\prime \in \mathbf{T}$}
\STATE{$\dmin(\mathcal{T},\mathcal{T}^\prime)$,  $\argdmin(\mathcal{T},\mathcal{T}^\prime)= \dissimilarity (\mathcal{T},\mathcal{T}^\prime, \mathbf{d})$} %  
\STATE{$\external({\cal T},{\cal T}^\prime, \argdmin (\mathcal{T},\mathcal{T}^\prime))$}
\ENDFOR
\ENDFOR
%Perform \cref{alg:external} of each $\mathcal{T}\in \mathbf{T}$ \cgp{To Do: hacer for}}
\STATE{ %Compute a new family of $Q$-trees $\mathbf{T}^\prime$ using 
$\mathbf{T}^\prime = \partition(\dmin, \mathbf{T})$}
\IF{ $|\mathbf{T}^\prime|=1$}
%\STATE{Let $\mathcal{T}$ be the unique tree in $ \mathbf{T}^\prime$}
\STATE{ $\completefinal (\mathcal{T})$} \quad \COMMENT{where $\mathbf{T}^\prime=\{\mathcal{T}\}$}
\RETURN{ $\mathcal{T}$ }
\ELSE
\FOR{ $\mathcal{T}\in \mathbf{T}^\prime$}
\STATE{%\cref{alg:complete} 
\complete$(\mathcal{T})$}
\ENDFOR
%\STATE{\textbf{step 4:} Perform \cref{alg:complete} in each $\mathcal{T}\in \mathbf{T}^\prime$}
\RETURN{\main$({\X},\mathbf{d},\mathbf{T}^{\prime})$}
\ENDIF
\STATE{\textbf{Output:} A $Q$-tree ${\cal T}$ containing all Robinson orderings}
%The unique $Q$-tree $\mathcal{T}\in \mathbf{T}^\prime$ containing all Robinson orderings \cgp{A ESTAS ALTURAS, $\mathbf{T}^{\prime}$ NO SIGNIFICA NADA. MEJOR BORRAR $\in \mathbf{T}^{\prime}$.}}
\end{algorithmic}
\end{algorithm}

\subsubsection{Initialization} \label{sssec:Initialization}

\sac{\main receives as input a family $\mathbf{T}$ of $Q$-trees (which are $PQ$-trees composed solely by $Q$-nodes), with leaves corresponding to elements of $\X$. Given an instance $(\X,\mathbf{d})$, we initialize  the algorithm with $(\X,\mathbf{d},\mathbf{T})$, where $\mathbf{T}=\{x: x\in \X\}$.} 
%Let us consider a family $\mathbf{T}$ of $Q$-trees (which are $PQ$-trees composed solely by $Q$-nodes). Initially we have an abstract set $\X$ (or indices) and some dissimilarity $\mathbf{d}$ among its elements. Given \cgp{these} objects, we will initialize the algorithm with the collection of single element trees $\mathbf{T}=\X$. 
For each $\mathcal{T}\in \mathbf{T}$ we write as $\leaves{\mathcal{T}}$ the set of leaves of $\mathcal{T}$. Hence, in the initial case we get $\leaves{x}\defined \{x\}$.% for every $x\in \X$.

%Along the algorithm, by merging nearest-neighbours trees into $Q$-nodes created. The final recursion of the algorithm is when we obtain a unique connected component from \recursive\!, from which we construct a unique tree $\mathcal{T}$ such that $\leaves{\mathcal{T}}=\X$. This final tree represents all Robinson orderings of $\X$.

\subsubsection{Computing border candidates}
\label{sssec:Border_Candidates}

%\csl{We now introduce some concepts that will help us describe the state of the trees, and the operations we perform on them at each step.}

We endow each $\mathcal{T}\in \mathbf{T}$ with a set $\borders{\mathcal{T}}$ of \textit{border candidates}\footnote{We emphasize the distinction of the border candidates \textit{of a tree}, \cgp{which we are about to introduce}; and the border candidates \textit{of an arc}, introduced in \cref{defi:candidates}}, which are all leaves of $\mathcal{T}$ that appear in the extreme left or right under some configuration of the tree. Whenever $|\leaves{\mathcal{T}}|\geq 2$, \sac{and conditionally on an given orientation of the root,} the set of border candidates $\borders{\mathcal{T}}$ can be split in two: left and right. The set of left border candidates, denoted as $\lborders{\mathcal{T}}$, are all elements in $\borders{\mathcal{T}}$ that appear in the extreme left under some configuration of the tree.
\cgp{The definition of $\rborders{\mathcal{T}}$ is entirely analogous with appearing on the right.} For instance, in the tree $\mathcal{T}$ appearing in \cref{fig:nspoon}, we have $\lborders{\mathcal{T}}=\{a_3,b_3,b_2,b_1\}$ and $\rborders{\mathcal{T}}=\{b_0\}$.

%Similarly, $\rborders{\mathcal{T}}$ denotes the set of all right border candidates, which are are all elements in $\borders{\mathcal{T}}$ that appear in the extreme right \cgp{under some configuration of the tree}. %\cgp{For this definition to work, we note that left and right border candidates are defined conditionally on {\em the same orientation of the root}}. 

Let $\alpha$ denote the root of $\mathcal{T}$. Whenever $\depth(\mathcal{T})>1$, where $\depth(\mathcal{T})$ denotes the tree-depth of $\mathcal{T}$, %it will be convenient to highlight two of $\alpha$'s children: $\mathcal{T}^L$ and $\mathcal{T}^R$. Here 
$\mathcal{T}^L$ (resp. $\mathcal{T}^R$) denotes the subtree of $\mathcal{T}$ whose root is the first (resp. last) $Q$-node among the direct descendants of $\alpha$. \sac{Notice that for the computation of the border candidates of $\mathcal{T}$ it is convenient to consider the relations: $\lborders{\mathcal{T}} = \borders{\mathcal{T}^L}$ and $\rborders{\mathcal{T}} =\borders{\mathcal{T}^R}$. Thus, to obtain $\borders{\mathcal{T}}  = \borders{\mathcal{T}^L} \cup \borders{\mathcal{T}^R} $ it suffices to recursively call for the border candidates of the subtrees $\mathcal{T}^L$ and $\mathcal{T}^R$. \bordercandidates (\cref{alg:bc}) is an straightforward implementation of this idea that runs in time $\mathcal{O}(|\borders{\mathcal{T}}|)$.}

\begin{algorithm}%{BC}
 \caption{\bordercandidates}\label{alg:bc}
\begin{algorithmic}[1]
\small
\STATE{\textbf{Input}:  A $Q$-tree $\mathcal{T}$}
\IF{$\depth(\mathcal{T}) = 0$}
\RETURN{$\mathcal{T}$}
\quad \COMMENT{Return the single element from the tree}
\ELSE
\RETURN{$\bordercandidates(\mathcal{T}^L)\cup \bordercandidates(\mathcal{T}^R)$}
\ENDIF
\STATE{\textbf{Output}: The set $\borders{\mathcal{T}}$ of border candidates of $\mathcal{T}$}
\end{algorithmic}
\end{algorithm}%

\subsubsection{Computing the minimum pairwise dissimilarity among trees}

\label{sssec:Tree_Dissimilarity} 
 
%IDEA DE LA SECCION: 
% 1- explicar porque necesitamos un dis sobre arboles
% 2- explicar cual seria una propiedad deseable y pq 
% 3- presentar el candidato y demostrar que sirve
% 4- implementacion de bajo costo computacional
% 5- pseudocodigo

% PASO 1 : NECESIDAD DE D AMONG TREES
\sac{
A second key step is to define an appropriate dissimilarity between trees. This allows us to solve the problem recursively by decreasing the number of objects we need to sort in each iteration. In the initial case we have that this dissimilarity corresponds to $\mathbf{d}$, and since it is strict circular Robinson,} \partition (available in \Cref{app:subroutines}) with input $(\mathbf{T}, \mathbf{d})$ returns an arc partition of $\mathbf{T}$ stored in tuples. The elements in each tuple $\alpha$ must be consecutive. Therefore, for each $\alpha$ we build a new tree $\mathcal{T}_\alpha$ with a $Q$-node in the root whose $i$-th \sac{child} corresponds to $\alpha(i)$. \cgp{The recursion works by repeating} the process over the smaller family of trees $\mathbf{T}^\prime = \{\mathcal{T}_\alpha\}$, \cgp{until}  $|\mathbf{T}^\prime|=1$. 

% PASO 2 : PROPIEDAD DESEABLE y% PASO 3 : PRESENTAR Y DEMOSTRAR

\sac{For this to work we need to define a dissimilarity $\mathbf{d}^\prime$ over $\mathbf{T}^\prime$ in a way that sorting $\mathbf{T}^\prime$ with respect to $\mathbf{d}^\prime$, help us in our goal of sorting $\mathbf{T}$ with respect to $\mathbf{d}$. We now present a dissimilarity that does exactly that:} given two trees $\mathcal{T}_1,\mathcal{T}_2$, consider the dissimilarity $\dmin(\mathcal{T}_1,\mathcal{T}_2)\defined\min \{ \mathbf{d}(x,y) : x\in\leaves{\mathcal{T}_1},y\in\leaves{\mathcal{T}_2}\}$. Also let $ \argdmin (\mathcal{T}_1,\mathcal{T}_2)$ be the collection of all minimizers of this problem in $\leaves{\mathcal{T}_1}\times  \leaves{\mathcal{T}_2}$. The following \sac{Lemma justifies this choice, as it ensures that} by 
sorting $\mathbf{T}^\prime$ we obtain \cgp{a quasi-order among the elements of} 
%quasi-orders among 
our original set $\mathbf{T}$.

\begin{defi}[Quasi-order\footnote{This extends the definition introduced in \cite{chepoi1997recognition} for linear orders to cyclic orders.}]\label{defi:quasiorder}
Let $(\X,\mathscr{C})$ be a cyclically ordered set. An ordered partition $\{ A_{0}, \dots, A_{m-1} \}$ is a (consistent) cyclic quasi order if for all $(i,j,k)\in \mathscr{C}_m$, $x\in A_i$, $y\in A_j$ and $z\in A_k$ we have that $(x,y,z)\in \mathscr{C}$.

\end{defi}
\begin{lemma}\label{lema:dmintree}
Let $D\in \mathbb{R}^{n\times n} $ be a (strict) circular Robinson dissimilarity and let $\{A_i\}_{i\in [m]}$ be a cyclic quasi-order in $[n]$. The matrix $\Dmin(A_i,A_j)\defined \min \{D(k,l) :k\in A_i, l\in A_j \}$ is a (strict) circular Robinson dissimilarity.
\end{lemma}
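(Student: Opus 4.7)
The plan is to verify the circular Robinson inequality for $\Dmin$ by choosing optimal representatives on the left-hand side and arbitrary representatives on the right-hand side, then transfer the property from $D$. First, $\Dmin$ is a valid dissimilarity matrix on $[m]$: it is non-negative and symmetric (inherited from $D$), and $\Dmin(A_i, A_i) = 0$ since the pair $(k,k)$ with $k \in A_i$ contributes $D(k,k) = 0$. So only the circular Robinson inequality remains.

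For the main inequality, fix $(i, j, k, \ell) \in \mathscr{C}_m$. Choose $p \in A_i$ and $q \in A_k$ attaining $D(p, q) = \Dmin(A_i, A_k)$, and pick arbitrary $p' \in A_j$ and $q' \in A_\ell$. Since $\{A_r\}_{r \in [m]}$ is an ordered partition and $i, j, k, \ell$ are distinct (they form a cyclic order in $\mathscr{C}_m$), the sets $A_i, A_j, A_k, A_\ell$ are pairwise disjoint, so the four chosen elements are distinct. Applying the quasi-order property of $\{A_r\}$ to the triples $(i, j, k)$, $(i, j, \ell)$, $(i, k, \ell)$, $(j, k, \ell)$ in $\mathscr{C}_m$, each of the triples $(p, p', q), (p, p', q'), (p, q, q'), (p', q, q')$ lies in $\mathscr{C}_n$; this is precisely the condition for $p, p', q, q'$ to be cyclically ordered in $[n]$.

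Applying the circular Robinson property of $D$ to this cyclically ordered quadruple yields $D(p, q) \geq \min\{D(p', q), D(q, q')\}$. Since $p' \in A_j$, $q' \in A_\ell$, $q \in A_k$ give $D(p', q) \geq \Dmin(A_j, A_k)$ and $D(q, q') \geq \Dmin(A_k, A_\ell)$, combining with $D(p, q) = \Dmin(A_i, A_k)$ yields $\Dmin(A_i, A_k) \geq \min\{\Dmin(A_j, A_k), \Dmin(A_k, A_\ell)\}$, the desired inequality. In the strict case, the inequality from $D$ is strict (invoking the distinctness of $p, p', q, q'$), and strictness propagates to give strict circular Robinson for $\Dmin$.

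The only subtle step is the promotion from the three-element quasi-order property to four-element cyclic ordering in $[n]$, which rests on the elementary fact that a quadruple is cyclically ordered exactly when all four of its ordered triples lie in the cyclic relation. Beyond that, the argument is a direct application of the Robinson property of $D$ at representatives chosen through the $\min$ operator, and no additional combinatorial structure of the partition is needed.
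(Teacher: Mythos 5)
Your proof is correct and follows essentially the same route as the paper's: pick the minimizing pair for the "far" entry $\Dmin(A_i,A_k)$, arbitrary representatives in $A_j$ and $A_\ell$, observe the quadruple is cyclically ordered by the quasi-order property, apply the Robinson inequality for $D$, and lower-bound the right-hand side by the corresponding $\Dmin$ entries. The only difference is that you spell out the promotion from triples to a cyclically ordered quadruple and the dissimilarity axioms, which the paper leaves implicit.
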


Since nearest neighbours must be consecutive, the family of $Q$-trees $\mathbf{T}$ at each recursion is guaranteed to satisfy that $\{\leaves{\mathcal{T}}\}_{\mathcal{T}\in \mathbf{T}}$ is an arc partition. Therefore, by the previous Lemma we have that $D^\prime (i,j)\defined  \dmin (\mathcal{T}_i, \mathcal{T}_j)\in \presC$ ($\preC$) whenever the original dissimilarity matrix $D$ is in $\presC$ ($\preC$) and a Robinson ordering for $D^\prime$ yields a quasi order for $D$. %In fact, $D^\prime$ corresponds to a permuted submatrix of $D$. 

% PASO 4: implementacion de bajo costo computacional

 %Moreover, for every Robinson ordering of $\X$ we assume that there is a configuration of each $\mathcal{T}\in \mathbf{T}$ in a way that the leaves of $\mathcal{T}$ follow an arc ordering.
 
%Notice that since $\{\leaves{\mathcal{T}}\}_{\mathcal{T}\in \mathbf{T}}$ is an arc partition, by computing the minimum dissimilarity among all pairs of leaves we can solve the problem recursively.

%Notice that since at the initialization of the algorithm, $\mathbf{T}=\X$, $\mathbf{d}$ corresponds to a strict circular Robinson dissimilarity over $\mathbf{T}$ and therefore, \cref{alg:gac} (available in \Cref{app:subroutines}) with input $(\mathbf{T}, \mathbf{d}$ returns a collection $\{\alpha_i\}$ of tuples that 
%arc partition of $\mathbf{T}$ stored in ordered tuples.

%Since for any enumeration of $\mathbf{T}$ we have that $D^\prime (i,j)\defined  \dmin (\mathcal{T}_i, \mathcal{T}_j)\in \presC$, \cref{prop:dfs} guarantees that 

%In each iteration we repeat this process until we end up with a unique connected component, yielding a $Q$-tree.

%We would like this property to be preserved in the more general tree families that will be built during the execution of the algorithm.

 %Computing dissimilarities among trees allows us to solve the problem recursively. In each step, the objects we reorder will be $Q$-trees and 

%\subsubsection{Connected components of the nearest-neighbours graph}
%\label{sssec:Connected_Components}

%%% COMPLEJIDAD DE DMIN DARGMIN
\sac{A naïve computation of $\dmin$ does not lead to a global $\mathcal{O}(n^2)$ time complexity. This is why our next goal is to find a workaround.} The next result, which is a direct consequence of \Cref{unimod} in \Cref{apx:proofs:circ}, implies that we can reduce this search by only considering border candidates of each tree. \cgp{In particular, $\argdmin (\mathcal{T}_1,\mathcal{T}_2)\subseteq \borders{\mathcal{T}_1}\times \borders{\mathcal{T}_2}.$}
%The next Lemma justifies only considering border candidates in the minimization.

%\defined  \argmin\{ \mathbf{d}(x,y):(x,y)\in \borders{\mathcal{T}_1}\times  \borders{\mathcal{T}_2}\}$.
%and let $${\displaystyle \argdmin (\mathcal{T}_1,\mathcal{T}_2) \defined  \argmin_{(x,y)\in \borders{\mathcal{T}_1}\times  \borders{\mathcal{T}_2}} \mathbf{d}(x,y)\subset \borders{\mathcal{T}_1}\times  \borders{\mathcal{T}_2}}$$

\begin{lemma}\label{lemma:borders}
  Suppose $D\in \scircular$ and let $\mathcal{I} = [a,b]$ be an arc in $([n], \mathscr{C}_n)$. Then \sac{for every $i\in \mathcal{I}^c$}, all minimizers of $\min \{D(i,j):j\in \mathcal{I}\}$ are contained in $ \{a, b\}.$
  %for every  $i\in \mathcal{I}^c$ it holds that  
   % $$(\forall i\in \mathcal{I}^c)\qquad \min_{j\in \mathcal{I}}D(i,j) = \min_{j\in \{a, b\}}D(i,j).$$
   % Moreover, if $D\in \scircular$, then any 
    %then the minimizer above is o attained in $\{a,b\}$. %then we get that
    %
\end{lemma}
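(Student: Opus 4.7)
The plan is to reduce the claim to a one-dimensional statement about strictly unimodal sequences, by using the characterization of strict circular Robinson matrices given in \Cref{prop:rob_is_unimod}.

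First, I would fix an arbitrary $i \in \mathcal{I}^c$ and apply \Cref{prop:rob_is_unimod}: since $D \in \scircular$, the sequence $j \mapsto D(i, i+j \bmod n)$, for $j \in [n]$, is strictly unimodal. Next, because $i \notin \mathcal{I} = [a,b]$, starting the enumeration at $i$ we see that the arc $\mathcal{I}$ corresponds to a \emph{contiguous} block of indices $[j_a, j_b] \subseteq [n]$, with $j_a$ and $j_b$ corresponding to the borders $a$ and $b$ of $\mathcal{I}$ (in some order). By the observation stated right after the definition of strict unimodality, the restriction of a strictly unimodal sequence to a contiguous sub-block is itself strictly unimodal; this is the content of \Cref{unimod} in \Cref{apx:proofs:circ}. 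Hence $j \mapsto D(i, j)$ is strictly unimodal along $\mathcal{I}$.

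It then remains to argue that the minimum of a strictly unimodal sequence on a block $[j_a, j_b]$ can only be attained at the endpoints $j_a$ or $j_b$. This is a direct consequence of the definition of strict unimodality: if $m_1 \leq m_2$ are the (at most two) modes of the restricted sequence, then for any interior index $j_a < j < j_b$, either $j < m_1$, in which case $D(i,j_a) < D(i,j)$, or $j > m_2$, in which case $D(i,j_b) < D(i,j)$, or $m_1 \leq j \leq m_2$, in which case $D(i,j)$ equals the maximum value of the sequence and is in particular not smaller than $D(i,j_a)$ (using strict unimodality on the increasing side). In all three cases, $j$ cannot be a minimizer, so any minimizer must lie in $\{j_a, j_b\}$, i.e., the set of minimizers in $\mathcal{I}$ is contained in $\{a,b\}$.

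The main obstacle is essentially bookkeeping: ensuring that the arc $\mathcal{I}$ is indeed a contiguous block in the cyclic enumeration started at $i$ (which is why the hypothesis $i \in \mathcal{I}^c$ is crucial; if $i \in \mathcal{I}$ the arc would wrap around and the argument would break). Once this is in place, the rest is a clean application of strict unimodality and its stability under taking contiguous subsequences.
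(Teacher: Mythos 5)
Your proof is correct and matches the paper's intent exactly: the paper offers no separate argument for this lemma, stating only that it is a direct consequence of \cref{unimod}, and your argument --- strict unimodality of $j \mapsto D(i, i+j \bmod n)$, contiguity of $\mathcal{I}$ in that enumeration because $i \in \mathcal{I}^c$, and exclusion of interior minimizers --- is precisely that consequence spelled out. The only quibble is your third case ($m_1 \le j \le m_2$), where ``not smaller than $D(i,j_a)$'' is a weak inequality and does not by itself rule out $j$ being a minimizer; it is cleaner to apply \cref{unimod} directly to the triple $(a, j, b)$, which gives $D(i,j) > \min\{D(i,a), D(i,b)\}$ for every interior $j$ and settles all three cases at once.
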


%Notice that as a consequence of \cref{lemma:borders} and since $\leaves{\mathcal{T}_1}$ and $\leaves{\mathcal{T}_2}$ are disjoint arcs, 

%\begin{equation}\label{eq:minborder}
%\begin{split}
%        \dmin (\mathcal{T}_1,\mathcal{T}_2) &=  \min_{x\in \leaves{\mathcal{T}_1},y\in \leaves{\mathcal{T}_2}} \mathbf{d}(x,y)\\
%\argdmin (\mathcal{T}_1,\mathcal{T}_2) &=  \argmin_{x\in %\leaves{\mathcal{T}_1},y\in \leaves{\mathcal{T}_2}} %\mathbf{d}(x,y).
%\end{split}
%\end{equation}

\begin{obs}\label{obs:complexitydmin}

\sac{Let \dissimilarity be the procedure which receives a pair $(\mathcal{T}_1,\mathcal{T}_2)$ of $Q$-trees and a dissimilarity $\mathbf{d}$ to return $\dmin(\mathcal{T}_1,\mathcal{T}_2)$ and $ \argdmin (\mathcal{T}_1,\mathcal{T}_2)$ by brute force comparisons among border candidates.  It takes $\mathcal{O}(|\borders{\mathcal{T}_1}|\cdot |\borders{\mathcal{T}_2}|)$ operations and $\mathcal{O}(1)$ space, since by \cref{lemma:borders}, $|\displaystyle \argdmin (\mathcal{T}_1,\mathcal{T}_2)|\leq 4$. We omit its pseudocode for brevity.}
\end{obs}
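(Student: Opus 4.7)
The plan is to present \dissimilarity as a brute-force sweep over border candidate pairs and then account for time and space separately, leveraging \cref{lemma:borders} twice: once for correctness of the reduction to border candidates, and once for the cardinality bound on the output.

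First I would specify the procedure: it iterates over every pair $(x,y)\in\borders{\mathcal{T}_1}\times\borders{\mathcal{T}_2}$, looks up $\mathbf{d}(x,y)$, and maintains two running objects, the smallest dissimilarity $m$ seen so far and a list $S$ of pairs attaining $m$. A strictly smaller value resets both $m$ and $S$; a tie appends the pair to $S$; a larger value is ignored. At termination, the algorithm outputs $m=\dmin(\mathcal{T}_1,\mathcal{T}_2)$ and $S=\argdmin(\mathcal{T}_1,\mathcal{T}_2)$.

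The reduction from searching over $\leaves{\mathcal{T}_1}\times\leaves{\mathcal{T}_2}$ to $\borders{\mathcal{T}_1}\times\borders{\mathcal{T}_2}$ is exactly where \cref{lemma:borders} is needed. By construction, the family $\mathbf{T}$ at every recursion of \main is associated with an arc partition, so the arcs $\leaves{\mathcal{T}_1}$ and $\leaves{\mathcal{T}_2}$ are disjoint; in particular every $y\in\leaves{\mathcal{T}_2}$ lies in $\leaves{\mathcal{T}_1}^c$. Applying \cref{lemma:borders} with $i=y$ and $\mathcal{I}=\leaves{\mathcal{T}_1}$ shows that any $x$ minimizing $\mathbf{d}(\cdot,y)$ over $\leaves{\mathcal{T}_1}$ must coincide with one of the two actual borders of $\leaves{\mathcal{T}_1}$, and these actual borders are by definition elements of $\borders{\mathcal{T}_1}$. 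Exchanging the roles of the two trees yields the symmetric statement. Combining the two observations, $\argdmin(\mathcal{T}_1,\mathcal{T}_2)$ is contained in the Cartesian product of two actual-border pairs, proving both that the brute-force search over border candidates is complete and that $|\argdmin(\mathcal{T}_1,\mathcal{T}_2)|\leq 4$.

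The time bound is then immediate: the outer loop performs $|\borders{\mathcal{T}_1}|\cdot|\borders{\mathcal{T}_2}|$ iterations, each of which executes a constant number of operations (a table lookup for $\mathbf{d}(x,y)$, a comparison against $m$, and an update of either $m$ and $S$ or of $S$ alone). For the space bound, the returned object has size at most four by the above argument, and the running list $S$ can be capped at four entries without affecting correctness: any overflow at the final minimum value would contradict the cardinality bound just derived, and any overflow at a non-final value is irrelevant since those entries are discarded upon the next strict decrease of $m$. Together with the constant overhead for $m$, the loop counters and the current pair, this yields $\mathcal{O}(1)$ auxiliary space. No serious obstacle is expected; the only conceptual point is the double application of \cref{lemma:borders}, and the rest is routine algorithmic bookkeeping.
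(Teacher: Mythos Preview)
Your proposal is correct and follows the same approach as the paper: a brute-force scan over $\borders{\mathcal{T}_1}\times\borders{\mathcal{T}_2}$, with \cref{lemma:borders} invoked in each coordinate to justify both the restriction to border candidates and the bound $|\argdmin(\mathcal{T}_1,\mathcal{T}_2)|\leq 2\cdot 2=4$. If anything, you are more careful than the paper, which leaves the intermediate size of the running list~$S$ implicit; your remark about capping $S$ at four entries is a legitimate refinement that the paper glosses over.
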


Now we introduce the main procedure required for orienting $Q$-nodes within the trees.
\subsubsection{Consecutive $Q$-nodes orientation}
\label{sssec:Consecutive_Orientation}

Recall from \cref{coro:brutecorrect} that in order to orient the root of a $Q$-tree $\mathcal{T}_2$, it suffices to find border candidates for the arc $\leaves{\mathcal{T}_2}$. \cgp{Let ${\cal T}_1,{\cal T}_2,{\cal T}_3$ be children of a $Q$-node $\alpha$, such that %If$\mathcal{T}_2$ is the children of a $Q$-node $\alpha$ in which 
$\mathcal{T}_2$ succeeds  $\mathcal{T}_1$ and precedes  $\mathcal{T}_3$, then we decide their orientation with the
%propose the procedure 
\orient procedure (\cref{alg:orient}).}

\begin{algorithm}%{CO}
 \caption{\orient}\label{alg:orient}
\begin{algorithmic}[1]
\small
\STATE{\textbf{Input}: Three consecutive subtrees $(\mathcal{T}_1,\mathcal{T}_2,\mathcal{T}_3)$ of a $Q$-node $\alpha$}
\STATE{Let $\mathcal{A}^\prime \defined   \borders{\mathcal{T}_1}$, $\mathcal{B}^\prime \defined  \borders{\mathcal{T}_3}$, $\mathcal{A}\defined  \lborders{\mathcal{T}_2}$ and $\mathcal{B} \defined  \rborders{\mathcal{T}_2}$. }
\STATE{\cgp{$x=\brute(\mathcal{A}^\prime,\mathcal{A},\mathcal{B},\mathcal{B}^\prime)$}}
%\STATE{Run \cref{alg:bruteforce} with input $(\mathcal{A}^\prime,\mathcal{A},\mathcal{B},\mathcal{B}^\prime)$. Let $x$ be the output}
\IF{ $x=$ \texttt{correct}}
    \STATE{Fix \sac{ the root} of $\mathcal{T}_2$ \sac{in $\alpha$}}
\ELSIF{ $x=$ \texttt{reverse}}
 \STATE{Reverse \sac{the root} of $\mathcal{T}_2$, then fix it \sac{in $\alpha$}}
\ELSE
\STATE{Label $\mathcal{T}_2$'s root as non-orientable and continue the algorithm as if the root of $\mathcal{T}_2$ \sac{were} fixed. This node is an actual $Q$-node of the tree of Robinson orderings.}
\ENDIF
\STATE{\textbf{Result:} \sac{The direct children of the root of $\mathcal{T}_2$ had been directly connected to $\alpha$}}
\end{algorithmic}
\end{algorithm}%

Since $\mathcal{A}^\prime \defined   \borders{\mathcal{T}_1}$, $\mathcal{B}^\prime \defined  \borders{\mathcal{T}_3}$, $\mathcal{A}\defined  \lborders{\mathcal{T}_2}$ and $\mathcal{B} \defined  \rborders{\mathcal{T}_2}$ are border candidates for the arc $\leaves{\mathcal{T}_2}$, the correctness of the procedure is due to the correctness of \cref{alg:bruteforce}. By \cref{obs:complexitybrute}, the complexity is given by $\mathcal{O}(n\cdot\max\{|\borders{\mathcal{T}_1}|,|\borders{\mathcal{T}_2}|,|\borders{\mathcal{T}_3}|\} )$.

\sac{In a computational implementation of the algorithm, $Q$-nodes can be represented by tuples, whereas $Q$-trees are simply nested tuples. For clarity, we present an example of the $Q$-node \textit{fixing} procedure mentioned in \cref{alg:orient}.}

\sac{In this example we consider $Q$-node $\alpha = ((0, 1, 2) ,(6, 5, (3,4)) , (7, (8,9,10)))$ taking values over $\X=[11]$. There are three $Q$-trees, which are consecutive in $\alpha$, which are $\mathcal{T}_1 = (0, 1, 2)$, $\mathcal{T}_2 = (6, 5, (3,4))$ and $\mathcal{T}_3 = (7, (8,9,10))$. An example of a possible outcome of $\orient(\mathcal{T}_1,\mathcal{T}_2,\mathcal{T}_3)$ would be that the root of $\mathcal{T}_2$ is reversed and fixed into $\alpha$. In such case $\alpha$ is modified into $((0, 1, 2) ,(3,4), 5, 6 , (7, (8,9,10)))$.}

\subsubsection{Complete internal orientation of a connected component and final orientation}
\label{sssec:Complete_Internal_Orientation}
A complete internal orientation of a \sac{$Q$-tree $\mathcal{T}$, is a process in which we determine the orientation of all nodes present in $\mathcal{T}$ except from those present in $\mathcal{T}^L$ and $\mathcal{T}^R$}. For this task we propose the procedure \complete (\cref{alg:complete}).

\begin{algorithm}%{CIO}
\small
 \caption{\complete}\label{alg:complete}
\begin{algorithmic}[1]
\STATE{\textbf{Input}: A $Q$-tree $\mathcal{T}$ with root $\alpha$ \cgp{and} $\depth(\mathcal{T})>1$}
\STATE{For $i\in[k]$, let $\alpha(i)$ be the $i$-th children of $\alpha$ and let $\mathcal{T}_{i}$ be the subtree whose root is $\alpha(i)$}
\STATE{\{\sac{Notice that ${\cal T}_{0}={\cal T}^L$ and ${\cal T}_{k-1}={\cal T}^R$}\}}
\WHILE{ $\exists\,\,1\leq j\leq  k-2$ such that the node $\alpha(j)$ has not been fixed into $\alpha$}
\FOR{$i=1\dots k-2$}
\STATE{$\orient({\cal T}_{i-1},{\cal T}_i,{\cal T}_{i-1})$}
\ENDFOR
\ENDWHILE
\STATE{\textbf{Result:} All elements in \cgp{$\leaves{\mathcal{T}}\setminus (\leaves{\mathcal{T}^L}\cup\leaves{\mathcal{T}^R})$} are directly connected to $\alpha$.}
\end{algorithmic}
\end{algorithm}%

The \sac{orientation over all the nodes} is done in a breadth-first search fashion\footnote{This way, trees of same depth are compared in \cref{alg:orient} (excluding comparisons with $\mathcal{T}^L$ and $\mathcal{T}^R$).}. Once again, the correctness of the procedure is due to the correctness of \cref{alg:bruteforce}. As an example, consider \cref{fig:examplecomplete}. Here, the tree at the top was constructed at the second recursion of the algorithm and represents a connected component of the nearest-neighbours graph over a family of $Q$-nodes. $\mathcal{T}^L$ corresponds to the tree with root in $Q_1$ and $\mathcal{T}^R$ corresponds to the tree with root in $Q_k$. The tree in the bottom corresponds to the tree after \cref{alg:complete}.

In the final recursion of \main we obtain a unique connected component from \partition, from which we construct a unique tree $\mathcal{T}$ such that $\leaves{\mathcal{T}}=\X$. Here, the cyclic order of $\X$ implies that the subtrees $\mathcal{T}^L$ and $\mathcal{T}^R$ are consecutive. Hence, to orient these trees we make a slight variation in the procedure \completefinal (\cref{alg:completefinal}). This is equivalent to consider the $Q$-node as a ring rather than as a list.

\begin{algorithm}%{FO}
 \caption{ \completefinal}\label{alg:completefinal}
\begin{algorithmic}[1]
\small
\STATE{\textbf{Input}: A $Q$-tree $\mathcal{T}$ with root $\alpha$.}
%\STATE{Let $\mathbf{T}\defined\{\mathcal{T}_{i}\}_{i\in [k]}$ be all subtrees of $\alpha$ (including $\mathcal{T}^L$ and $\mathcal{T}^R$) \cgp{Deber\'iamos especificar que son los children subtrees? Y que est\'an ordenados?}}
\STATE{Let $\alpha(i)$ be the $i$-th children of $\alpha$ and let $\mathcal{T}_{i}$ be the subtree whose root is $\alpha(i)$}
%\STATE{Let $\mathbf{T}\defined\{\mathcal{T}_{i}\}$}
\WHILE{$\exists j \in [|\alpha|]$ such that the node $\alpha(j)$ has not been fixed into $\alpha$}

\IF{$|\alpha|>2$}
\STATE{Let $\mathcal{T}_{-1}\defined  \mathcal{T}^R$ and $\mathcal{T}_{|\alpha|}\defined  \mathcal{T}^L$}
\FOR{$i\in [|\alpha|]$}
\STATE{$\orient(\mathcal{T}_{i-1}, \mathcal{T}_{i}, \mathcal{T}_{i+1})$}
\ENDFOR
\ELSIF{$|\alpha|=2$}
%\STATE{ Let $\mathbf{T}=\{\mathcal{T}_1, \mathcal{T}_2\}$}
\STATE{$\orient(\mathcal{T}_1^R,\mathcal{T}_0,\mathcal{T}_1^L)$} %\cgp{\COMMENT{Recall defn.~of ${\cal T}^L$ and ${\cal T}^R$ in \Cref{sssec:Border_Candidates}}} }
\STATE{Fix $\alpha(1)$ into $\alpha$}
%\COMMENT{Notice \sac{it suffices to} orient the root of $\mathcal{T}_1$ to determine the orientation of both roots}
\ENDIF
\ENDWHILE
\STATE{\textbf{Result:} All $Q$-nodes in $\mathcal{T}$ are oriented.}
\end{algorithmic}
\end{algorithm}%

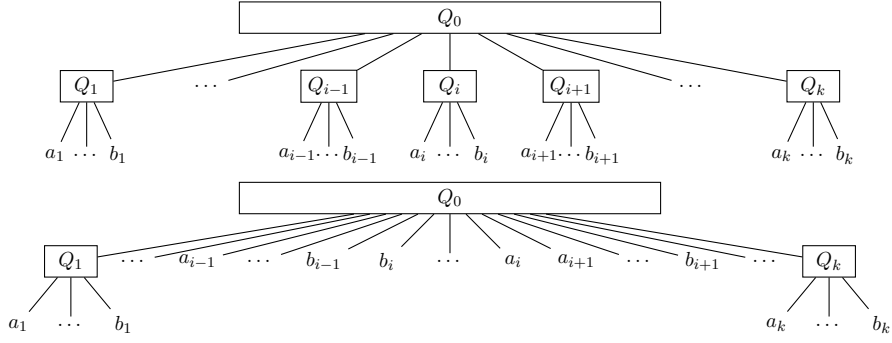
\begin{figure}
    \centering

        %\subfloat[Pre]{\label{fig:preorientation}
        {\begin{tabular}[b]{c}%

         \begin{tikzpicture}[scale=0.7,transform shape,
  level distance=1.3cm,
  level 1/.style={sibling distance=2.3cm},
  level 2/.style={sibling distance=0.6cm},
  level 3/.style={sibling distance=0.6cm}]
  \node[rectangle,minimum width=8cm, draw] {$Q_0$}
    child{node [rectangle,minimum width=1cm, draw] {$Q_1$}
        child {node {$a_1$}}
         child{node{$\dots$}}
        child {node {$b_1$}}
        }
    child{node{$\dots$}}
    child{node [rectangle,minimum width=1cm, draw] {$Q_{i-1}$}
         child {node {$a_{i-1}$}}
          child{node{$\dots$}}
         child {node {$b_{i-1}$}}
        }
     child{node [rectangle,minimum width=1cm, draw] {$Q_{i}$}
         child {node {$a_{i}$}}
          child{node{$\dots$}}
         child {node {$b_{i}$}}
        }
     child{node [rectangle,minimum width=1cm, draw] {$Q_{i+1}$}
         child {node {$a_{i+1}$}}
          child{node{$\dots$}}
         child {node {$b_{i+1}$}}
        }
    child{node{$\dots$}}
    child{node [rectangle,minimum width=1cm, draw] {$Q_{k}$}
        child {node {$a_k$}}
         child{node{$\dots$}}
        child {node {$b_k$}}
        };
\end{tikzpicture}\\

           %\subfloat[Post]{\label{fig:postorientation}
         \begin{tikzpicture}[scale=0.7,transform shape,
  level distance=1.2cm,
  level 1/.style={sibling distance=1.2cm},
  level 2/.style={sibling distance=1cm},
  level 3/.style={sibling distance=1cm}]
  \node[rectangle,minimum width=8cm, draw] {$Q_0$}
    child{node [rectangle,minimum width=1cm, draw] {$Q_1$}
        child {node {$a_1$}}
         child{node{$\dots$}}
        child {node {$b_1$}}
        }
    child{node{$\dots$}}
    child {node {$a_{i-1}$}}
    child{node{$\dots$}}
    child {node {$b_{i-1}$}}
    child {node {$b_{i}$}}
    child{node{$\dots$}}
    child {node {$a_{i}$}}
    child {node {$a_{i+1}$}}
    child{node{$\dots$}}
    child {node {$b_{i+1}$}}
    child{node{$\dots$}}
    child{node [rectangle,minimum width=1cm, draw] {$Q_{k}$}
        child {node {$a_k$}}
         child{node{$\dots$}}
        child {node {$b_k$}}
        };
\end{tikzpicture}
\end{tabular}}
    \caption{Example of a connected component of the nearest neighbours graph at the second recursion of \main before and after \complete.}\label{fig:examplecomplete}

\end{figure}

\subsubsection{External orientation of trees}
\label{sssec:External_Orientation}
Since for each $\mathcal{T} \in \mathbf{T}$, the set $\leaves{\mathcal{T}}$ corresponds to an arc and as a consequence of \cref{lemma:borders}, $\dmin (\mathcal{T},\mathcal{T}^\prime)$ is attained at some $x\in \borders{\mathcal{T}}$ and $y\in \borders{\mathcal{T}^\prime}$ which are guaranteed to be borders of $\leaves{\mathcal{T}}$ and $\leaves{\mathcal{T}^\prime}$, respectively. Therefore, we must arrange some of their internal nodes in a way that $x$ and $y$ lie at the borders. \cgp{With this purpose in mind,} we propose the procedure \external (\cref{alg:external}).

\begin{algorithm}%{EO}
 \caption{\external}\label{alg:external}
\begin{algorithmic}[1]
\small
\STATE{\textbf{Input}: $Q$-trees $\mathcal{T}$ and $\mathcal{T}^\prime$. The set $\argdmin (\mathcal{T},\mathcal{T}^\prime)$}
\FOR{$(x,y)\in \argdmin (\mathcal{T},\mathcal{T}^\prime)$}
\IF{$x\in \lborders{\mathcal{T}}$}
\STATE{}
\COMMENT{Fix every $Q$-node in $\mathcal{T}^L$ containing $x$ as a descendant from the root until the $Q$-node $\alpha$ where $x$ lies in a way such that $x$ is placed on the left}
\STATE{$\mathcal{J} = \mathcal{T}^L$}
\WHILE{$\depth(\mathcal{J})>0$}
\IF{$x\in \lborders{\mathcal{J}}$}
\STATE{Fix the root of $\mathcal{J}$ into its parent}
\ELSE
\STATE{Reverse and then fix the root of $\mathcal{J}$ into its parent}
\ENDIF
\STATE{$\mathcal{J} \leftarrow \mathcal{J}^L$}
\ENDWHILE 
\ELSE
\STATE{\cgp{Proceed analogously, with ${\cal B}^R({\cal T})$, instead of ${\cal B}^L({\cal T})$; and placing $x$ at the right, instead of the left}}
%\COMMENT{Fix every $Q$-node in $\mathcal{T}^R$ containing $x$ as a descendant from the root until the $Q$-node $\alpha$ where $x$ lies in a way such that $x$ is placed on the right}
%\STATE{$\mathcal{J} = \mathcal{T}^R$}
%\WHILE{$\depth(\mathcal{J})>0$}
%\IF{$x\in \rborders{\mathcal{J}}$}
%\STATE{Fix the root of $\mathcal{J}$ into its parent}
%\ELSE
%\STATE{Reverse and then fix the root of $\mathcal{J}$ into its parent}
%\ENDIF
%\STATE{$\mathcal{J} \leftarrow \mathcal{J}^R$}
%\ENDWHILE 
\ENDIF
\ENDFOR
%\STATE{\cgp{$\complete({\cal T})$}}
\STATE{Repeat the same procedure with $\mathcal{T}^\prime$ and $y$}
\STATE{\textbf{Result:} Either all $Q$-nodes in $\mathcal{T}^L$ or $\mathcal{T}^R$ (resp. $\mathcal{T}^{\prime L}$ or $\mathcal{T}^{\prime R}$) are oriented}
\end{algorithmic}
\end{algorithm}%

An important observation is that in the tree $\mathcal{T}$ resulting from the first part of this procedure we have that $\mathcal{T}^L = \{x\}$ (assuming for simplicity that $x\in \lborders{\mathcal{T}}$). In the second part, we execute \complete (\cref{alg:complete}) with input $\mathcal{T}$. Since $\mathcal{T}^L = \{x\}$ at the end the only $Q$-nodes remaining to be oriented are the ones present in $\mathcal{T}^R$. As an example, we consider the $Q$-trees in \cref{fig:externalorientation}. Let $\mathcal{T}$ be the tree in \cref{fig:nspoon}. In this example, $\mathcal{T}^L$ is the subtree with root in $Q_1$ and $\mathcal{T}^R$ is the singleton $\{b_0\}$. Suppose by computing $\dmin (\mathcal{T},\mathcal{T}^\prime)$ for some other $\mathcal{T}^\prime$ we get that $\dmin$ is attained at $b_2\in \lborders{\mathcal{T}}$. In that case, we must fix $\mathcal{T}^L$ following the algorithm. Since $b_2$ is a left border in $Q_1$, this node is correctly oriented. However, since $b_2$ appears in the right of $Q_2$, we must reverse $Q_2$ as in \cref{fig:nspoonoriented1}. The resulting tree is the one in \cref{fig:nspoonoriented2}. Next, we perform a complete orientation and the resulting tree is the one in \cref{fig:nspoonoriented3}.

Notice that excluding the running time of \cref{alg:complete}, the number of operations required for this procedure is bounded by  $\mathcal{O}(\depth(\mathcal{T}^L))$. 

\begin{obs}\label{obs:logborder}
If $\mathcal{T}$ is a tree built at the $k$-th recursion of the algorithm, then clearly $\depth(\mathcal{T})\leq k$. We claim that since after this process either $\mathcal{T}^L$ or $\mathcal{T}^R$ gets completely oriented, then it holds that $|\borders{\mathcal{T}}| \leq k+1$.
%We prove this second claim.
%\end{obs}
%\begin{proof}
We prove this by induction on $k$.
%We proceed by induction.
Notice that if $\mathcal{T}$ is composed by a single $Q$-node in the root, then $|\borders{\mathcal{T}}|\leq 2$. Now let $\mathcal{T}$ be a tree instantiated at the $k$-th recursion of the algorithm. W.l.o.g. assume $\mathcal{T}^L$ gets completely oriented. Then $\borders{\mathcal{T}} = \borders{\mathcal{T}^R}\cup \{x\}$. Hence, $|\borders{\mathcal{T}} | = |\borders{\mathcal{T}^R}|+1$. The claim follows by 
inducting on $\mathcal{T}^R$.
%from applying the inductive hypothesis on $\mathcal{T}^R$.
%\end{proof}
\end{obs}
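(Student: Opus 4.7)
The plan is to prove the bound $|\borders{\mathcal{T}}| \leq k+1$ by induction on the recursion depth $k$, exactly as hinted in the statement. The key structural fact to exploit is that after the \external procedure is applied to a pair of nearest-neighbour trees $(\mathcal{T}, \mathcal{T}^\prime)$, one of the two sides of each tree (either $\mathcal{T}^L$ or $\mathcal{T}^R$) has all its $Q$-nodes oriented down to the point where that subtree collapses to a single leaf, namely the border candidate $x$ achieving $\dmin$.

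For the base case ($k=1$), the tree $\mathcal{T}$ consists of a single $Q$-node whose children are all singleton leaves built during initialization, so $\mathcal{T}^L$ and $\mathcal{T}^R$ are themselves singleton leaves and $|\borders{\mathcal{T}}|\leq 2 = k+1$. For the inductive step, I would fix a tree $\mathcal{T}$ built at the $k$-th recursion and note that, by the recursive structure of \main, the children of the root of $\mathcal{T}$ are trees $\mathcal{T}_i$ that were built at earlier recursions (hence have depth at most $k-1$). Applying \external to $\mathcal{T}$ and its nearest-neighbour partner from the $k$-th iteration, the definition of left/right border candidates gives $\borders{\mathcal{T}} = \borders{\mathcal{T}^L} \cup \borders{\mathcal{T}^R}$; moreover, the procedure has collapsed (without loss of generality) $\mathcal{T}^L$ to a single leaf $x$. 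Therefore $\borders{\mathcal{T}} = \{x\} \cup \borders{\mathcal{T}^R}$, and since $\mathcal{T}^R$ is a tree instantiated in a strictly earlier recursion, the inductive hypothesis yields $|\borders{\mathcal{T}^R}| \leq k$, from which $|\borders{\mathcal{T}}| \leq k+1$ follows immediately.

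The main subtlety — and the step I expect to be the most delicate to justify rigorously — is ensuring that \external truly collapses one of $\mathcal{T}^L$ or $\mathcal{T}^R$ to a singleton every time we move from one recursion level to the next. This uses two facts already established in the algorithm description: first, \cref{lemma:borders} guarantees that $\argdmin(\mathcal{T}, \mathcal{T}^\prime) \subseteq \borders{\mathcal{T}}\times \borders{\mathcal{T}^\prime}$, so the minimizer $x$ is necessarily a border candidate and hence sits in either $\mathcal{T}^L$ or $\mathcal{T}^R$; second, the while loop in \external walks from the root of that side down to the leaf $x$, fixing (possibly after reversing) every $Q$-node along the way, so the corresponding subtree becomes exactly $\{x\}$. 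Once this is confirmed, the induction closes cleanly and produces the announced bound.
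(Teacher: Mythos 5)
Your proposal is correct and follows essentially the same inductive argument as the paper: base case of a single $Q$-node giving $|\borders{\mathcal{T}}|\leq 2$, and the inductive step using the decomposition $\borders{\mathcal{T}} = \{x\}\cup\borders{\mathcal{T}^R}$ after one side is collapsed by \external. The extra justification you supply (via \cref{lemma:borders} and the descent loop in \external) only fleshes out what the paper leaves implicit.
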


%\begin{lemma}\label{lemma:correctexternal}
%All $Q$-nodes fixed in this process (excluding the complete orientation) are correctly oriented
%\end{lemma}
%\begin{proof}W.l.o.g assume for $(x,y)\in \argdmin (\mathcal{T},\mathcal{T}^\prime)$ it holds that $x\in \lborders{\mathcal{T}}$ and let $r\defined  \dmin (\mathcal{T},\mathcal{T}^\prime)$. Let $\mathcal{T}^S$ be any subtree of $\mathcal{T}$ containing $x$. Suppose by contradiction that we can orient the root of $\mathcal{T}^S$ in a way that $x$ is placed. The connectivity of $B_r(x)$ implies that every element in between $x$ and $y$ is in $B_r(z)$. Also, by \cref{lemma:borders} we have that $| B_r(z)\cap \leaves{\mathcal{T}}|\leq 2$. Icontradiction, place any $Q$-nodes in $\mathcal{T}$ containing $x$ in a way that $x$ is placed in the right. Let $(x,y)\in \argdmin (\mathcal{T},\mathcal{T}^\prime)$ and let $r\defined  \dmin (\mathcal{T},\mathcal{T}^\prime)$. By Observation  $|B_r(x)|\leq 2$. To prove that this nodes are correctly oriented we will show that there is a strictly overlapping ball with the descendants of the node.  The ball $\mathcal{B}_r(y)$ contains $x\in \leaves{\mathcal{T}}$ and does not contain. Let $\alpha$ be a \end{proof}

%% FIGURA PASOS EXTERNAL ORIENTATION
\begin{figure}
    \centering

    \subfloat[Step 1]{\label{fig:nspoon}
         \begin{tikzpicture}[scale=0.4,transform shape,
  level distance=1.7cm,
  level 1/.style={sibling distance=3cm},
  level 2/.style={sibling distance=2cm},
  level 3/.style={sibling distance=1cm}]
  \node[rectangle,minimum width=5cm, draw] (r1) {\huge $Q$}
        child{node [rectangle,minimum width=2.5cm, draw] {\huge $Q_{1}$}
            child{node [rectangle,minimum width=1.4cm, draw] {\huge $Q_{2}$}
                child{node [rectangle,minimum width=1cm, draw] {\huge $Q_{3}$}
                    child {node{\huge $a_3$}}
                    child {node{$\dots$}}
                    child {node {\huge $b_3$}}}
                child {node{$\dots$}}
                child {node{$\dots$}}
                child {node{\huge $b_2$}}}
            child {node{$\dots$}}
            child {node{$\dots$}}
            child {node {\huge $b_1$}}}
    child{node{$\dots$}}
    child {node{$\dots$}}
    child {node{\huge $b_0$}};
\end{tikzpicture}}
    \qquad
    \subfloat[Step 2]{\label{fig:nspoonoriented1}

         \begin{tikzpicture}[scale=0.4,transform shape,
  level distance=1.7cm,
  level 1/.style={sibling distance=3cm},
  level 2/.style={sibling distance=2cm},
  level 3/.style={sibling distance=1.5cm}]
  \node[rectangle,minimum width=5cm, draw] (r1){\huge$Q$}
            child{node [rectangle,minimum width=1.4cm, draw] {\huge$Q_{2}$}
                child {node{\huge $b_2$}}
                child {node{$\dots$}}
           child{node [rectangle,minimum width=1cm, draw] {\huge$Q_{3}$}
                child {node{\huge $a_3$}}
                child {node{$\dots$}}
                child {node{\huge $b_3$}}}
                }
            child {node{$\dots$}}
    child {node{\huge $b_1$}}
    child {node{$\dots$}}
    child {node{\huge $b_0$}};
\end{tikzpicture}

    }
      \qquad
    \subfloat[Step 3]{ \label{fig:nspoonoriented2}

      \begin{tikzpicture}[scale=0.4,transform shape,
  level distance=1.7cm,
  level 1/.style={sibling distance=2cm},
  level 2/.style={sibling distance=1cm},
  level 3/.style={sibling distance=1cm}]
  \node[rectangle,minimum width=5cm, draw] (r1){\huge $Q$}
        child {node{\huge $b_2$}}
        child {node{$\dots$}}
            child{node [rectangle,minimum width=1cm, draw] {\huge $Q_{3}$}
                child {node{\huge $a_3$}}
                child {node{$\dots$}}
                child {node{\huge $b_3$}}}
        child {node{$\dots$}}
        child {node{\huge $b_1$}}
        child {node{$\dots$}}
        child {node{\huge $b_0$}};
\end{tikzpicture}
    }
      \qquad
    \subfloat[Step 4]{    \label{fig:nspoonoriented3}

         \begin{tikzpicture}[scale=0.4,transform shape,
  level distance=1.7cm,
  level 1/.style={sibling distance=1.5cm},
  level 2/.style={sibling distance=1cm},
  level 3/.style={sibling distance=1cm}]
  \node[rectangle,minimum width=5cm, draw] (r1){\huge $Q$}
        child {node{\huge $b_2$}}
        child {node{$\dots$}}
        child {node{\huge $a_3$}}
        child {node{$\dots$}}
        child {node{\huge $b_3$}}
        child {node{$\dots$}}
        child {node{\huge $b_1$}}
        child {node{$\dots$}}
        child {node{\huge $b_0$}};
\end{tikzpicture}
    }

    \caption{\cgp{ \external over tree $\mathcal{T}$ in which $\dmin (\mathcal{T},\mathcal{T}^\prime)$ is attained at $b_2\in \lborders{\mathcal{T}}$.}}\label{fig:externalorientation}

\end{figure}

\subsection{Analysis of the recursive seriation algorithm}

\label{subsec:Analysis_Rec_Seriation}

\begin{theorem}
Given $D\in \presC$, let $\mathcal{T}$ be the $PQ$-tree obtained from \cref{alg:Rec_seriation} with input $D$. Let $S(\mathcal{T})$ the \sac{set of all orderings} of $\X$ (permutations) represented by the tree. Then, $S_\scircular(D) =\dihedral_n \circ S(\mathcal{T})$, i.e. it solves the strict circular seriation problem.
\end{theorem}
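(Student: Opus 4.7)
The plan is to argue by induction on the recursion depth of \main, tracking an invariant that characterizes the orderings encoded by the current family of trees. Concretely, after $k$ recursive calls on input $D\in\presC$, let $\mathbf{T}^{(k)}$ be the current family. The invariant I would maintain is: (i) $\{\leaves{\mathcal{T}}:\mathcal{T}\in\mathbf{T}^{(k)}\}$ is an arc partition of $\X$ with respect to every Robinson ordering of $D$; (ii) for each $\mathcal{T}\in \mathbf{T}^{(k)}$, the set $S(\mathcal{T})$ equals the collection of all restrictions to $\leaves{\mathcal{T}}$ of Robinson orderings of $D$, i.e., no valid Robinson restriction has been discarded and no invalid arrangement has been introduced; and (iii) the induced tree-dissimilarity $\dmin$ on $\mathbf{T}^{(k)}$ lies in $\presC$ (applying \Cref{lema:dmintree} in conjunction with \Cref{lema:consecutive} and \Cref{prop:dfs}). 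The base case $k=0$ is immediate since $\mathbf{T}^{(0)}=\{\{x\}:x\in\X\}$ and singleton leaves encode a trivially consistent ordering.

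For the inductive step, I would argue in two directions, separately showing soundness and completeness of the orientation subroutines. For soundness (the algorithm only fixes orientations that are forced), the key tool is \Cref{lema:canreverse}: whenever \orient returns \texttt{correct} or \texttt{reverse}, \brute has identified some $z\in\X$ and radius $r>0$ such that $B_r(z)$ strictly overlaps $\leaves{\mathcal{T}_2}$; by \Cref{coro:brutecorrect} this corresponds exactly to the cyclic ordering conditions from \Cref{lema:borders}, and the alternative orientation would destroy the arc-hypergraph property required by \Cref{prop:arc_is_ball}. For completeness (the algorithm never discards a valid orientation), I would use \Cref{lema:bruteiscorrect} to show that failing to find such a witness $z$ really means both orientations extend to Robinson orderings, so labelling the node non-orientable and keeping it as a $Q$-node is correct. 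The fact that $(\mathcal{A}^\prime,\mathcal{A},\mathcal{B},\mathcal{B}^\prime)$ in \orient are legitimate border candidates of $\leaves{\mathcal{T}_2}$ follows from the definition of $\borders{\cdot}$ and the fact that $\mathcal{T}_1$, $\mathcal{T}_2$, $\mathcal{T}_3$ are consecutive children of a $Q$-node (so their leaf sets are consecutive arcs).

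Next, I would handle the subroutines that orient multiple nodes. For \complete, since the internal children (excluding $\mathcal{T}^L$ and $\mathcal{T}^R$) have determined neighbours on both sides, every non-trivial orientation decision reduces to an \orient call, and the while-loop ensures orientations propagate; correctness on each call transfers to the whole procedure. For \external, the effect is to canonicalize $\mathcal{T}^L$ (or $\mathcal{T}^R$) so that the nearest-neighbour witness $x\in\argdmin$ is placed at the extreme, which is forced in any Robinson ordering by \Cref{lemma:borders}. Finally, \completefinal handles the terminal recursion, where cyclic closure means $\mathcal{T}^L$ and $\mathcal{T}^R$ become consecutive in the ring; the modification to treat children cyclically simply accounts for this and preserves the previous correctness argument. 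The special case $|\alpha|=2$ needs a separate check: here a single application of \orient with $\mathcal{T}_1^R$ and $\mathcal{T}_1^L$ as left/right context correctly orients $\mathcal{T}_0$ relative to $\mathcal{T}_1$.

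The concluding step is to pass from the invariant to the theorem. When \main returns, the final tree $\mathcal{T}$ has leaves $\X$, and by the invariant $S(\mathcal{T})$ coincides with the set of cyclic Robinson orderings of $D$ up to the natural symmetries preserved by a $Q$-tree (reversal of the root). The remaining symmetries — cyclic shifts and the global reversal — are exactly $\dihedral_n$, and they act freely on the orderings encoded by a rooted $Q$-tree since the root encoding is a linear sequence. Hence $S_{\scircular}(D) = \dihedral_n \circ S(\mathcal{T})$. The main obstacle I anticipate is the completeness direction of the orientation step: showing rigorously that when \brute finds no witness $z$, both orientations of the candidate $Q$-node genuinely extend to global Robinson orderings. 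This requires combining \Cref{lema:canreverse} with a careful construction that no ball of any center gets ripped apart under either orientation, which should follow from the fact that witnesses are necessary and sufficient by \Cref{coro:brutecorrect}, but the bookkeeping across levels of recursion is the delicate part.
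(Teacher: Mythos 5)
Your proposal follows essentially the same route as the paper's own (sketch) proof: both argue that the tree built by recursive nearest-neighbour merging, considered before any orientation, already contains all Robinson orderings (via \cref{lema:consecutive}, \cref{prop:dfs} and \cref{lema:dmintree}), and then reduce correctness to the orientation subroutines, whose soundness and completeness rest on \cref{lema:borders}, \cref{lema:canreverse}, \cref{lema:bruteiscorrect} and \cref{coro:brutecorrect}. The only caveat is that your intermediate invariant (ii) is stated slightly too strongly --- before \completefinal runs, the $Q$-nodes inside $\mathcal{T}^L$ and $\mathcal{T}^R$ have not yet been tested, so at intermediate levels $S(\mathcal{T})$ may strictly contain the set of restrictions of Robinson orderings --- but this is precisely the bookkeeping you flag, and weakening (ii) to a containment until the final step recovers the paper's argument.
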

\begin{proof}[Proof Sketch]

\sac{For simplicity, suppose we omit the orientation steps in \cref{alg:Rec_seriation} and leave them to the end of the process.} This does not affect the set of solutions but may increase the time complexity. Denote $\mathcal{T}^{pre}$ and $\mathcal{T}$ the trees before and after orientation, respectively. Also let $\mathbf{T}_k$ the family of trees instantiated at the $k$-th recursive step. Notice that by \cref{lema:dmintree}, evaluating $\dmin$ over $\mathbf{T}_k$ yields a dissimilarity matrix $D_k\in \presC$. % (a (permuted) submatix of $D$). 
Due to \cref{prop:dfs}, we have that $S_\scircular(D) \subset \dihedral_n \circ  S(\mathcal{T}^{pre})$ (at least all Robinson orderings are considered at this point). To complete the proof, it remains to show that in $\mathcal{T}$ all orientable $Q$-nodes originally in $\mathcal{T}^{pre}$ had been correctly fixed. To see this notice that the orientation of each $Q$-node in $\mathcal{T}^{pre}$ is tested either by \external (\cref{alg:external})
or \orient (\cref{alg:orient}). The correctness of \external is due to \cref{lemma:borders}. The correctness of \orient is due to \cref{coro:brutecorrect}.
\end{proof}

\begin{theorem}
\main runs in $\mathcal{O}(n^2)$ time.
\end{theorem}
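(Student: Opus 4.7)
The plan is to amortize the work over the levels of recursion, using two key geometric facts: the number of levels is $\mathcal{O}(\log n)$, and the per-call costs decay fast enough to yield a convergent sum. Let $n_k$ denote the cardinality of $\mathbf{T}$ at the $k$-th recursive call, so $n_0 = n$. By \Cref{lema:dmintree} the induced dissimilarity $\dmin$ is again strict circular Robinson, so the arguments of \Cref{sec:algo:prelimI} apply and the nearest-neighbour graph of $(\mathbf{T},\dmin)$ is a subgraph of $C_{n_k}$. Since no vertex is isolated, every connected component returned by \partition\, contains at least two trees, hence $n_{k+1} \leq n_k/2$. Thus $n_k \leq n/2^k$ and the recursion terminates after $K = \mathcal{O}(\log n)$ levels.

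Next, I bound the per-call work at level $k$. By \Cref{obs:logborder}, every tree in $\mathbf{T}$ at level $k$ has depth at most $k$ and $|\borders{\mathcal{T}}| \leq k+1$. Combined with \Cref{obs:complexitydmin} this makes each call to \dissimilarity\, cost $\mathcal{O}((k+1)^2)$, and by \Cref{obs:complexitybrute} each call to \brute\, (hence \orient) costs $\mathcal{O}(n(k+1))$. The procedure \external\, walks down a single root-to-leaf branch, so it costs $\mathcal{O}(k)$ per pair, and \bordercandidates\, is linear in the border size.

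Aggregating across level $k$: the double \textbf{for}-loop performs $\mathcal{O}(n_k^2)$ calls to \dissimilarity\, and \external, contributing $\mathcal{O}(n_k^2 (k+1)^2)$; the call to \partition\, runs in $\mathcal{O}(n_k)$ since the $\dmin$-matrix is precomputed; across all calls to \complete\, at level $k$, the total number of internal $Q$-nodes to be oriented equals $n_{k-1}$ (the children of the level-$k$ trees are precisely the level-$(k-1)$ trees), so the total cost of \complete\, at level $k$ is $\mathcal{O}(n_{k-1}\cdot n(k+1))$. Substituting $n_k \leq n/2^k$ and summing:
\[
\sum_{k=0}^{K} \left[\frac{n^2(k+1)^2}{4^k} + \frac{2\,n^2(k+1)}{2^{k}}\right]
\;\leq\; n^2 \sum_{k\geq 0}\left[\frac{(k+1)^2}{4^k} + \frac{2(k+1)}{2^k}\right] = \mathcal{O}(n^2),
\]
since both series converge. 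The terminal call to \completefinal\, admits an identical bound, giving the desired $\mathcal{O}(n^2)$ total.

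The main obstacle is the bookkeeping in the aggregation step: specifically, showing that the \textbf{while} loops inside \complete\, and \completefinal\, do not introduce a super-constant multiplicative factor. This reduces to verifying that each pass is monotone, i.e., every child either becomes fixed or is permanently marked non-orientable, so the amortized work per internal $Q$-node is $\mathcal{O}(n(k+1))$ and not multiplied by a loop factor. A secondary subtlety is confirming that the border-size bound $|\borders{\mathcal{T}}|\leq k+1$ of \Cref{obs:logborder} is preserved through the orientation phases of \external\, and \complete, so that the tight geometric decay $\sum (k+1)^2/4^k < \infty$ is in fact available at every level.
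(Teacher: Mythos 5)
Your proposal follows essentially the same route as the paper's proof: the halving bound $n_k \leq n/2^k$ giving $\mathcal{O}(\log n)$ recursion levels, the border-size bound $|\borders{\mathcal{T}}|\leq k+1$ from \Cref{obs:logborder} combined with \Cref{obs:complexitydmin} and \Cref{obs:complexitybrute} for the per-call costs, and the same convergent sums $\sum_k n^2(k+1)^2/4^k$ and $\sum_k n^2(k+1)/2^k$. The only organizational difference is that the paper separately accounts for orientations deferred inside $\mathcal{T}^L$ and $\mathcal{T}^R$ (its ``second case,'' bounded by $\mathcal{O}(C(i)\,i^2 n)$ per level), which your aggregation folds into the same counting and which yields the same $\mathcal{O}(n^2)$ total.
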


 \begin{proof}

We count the number of operations required by the procedure \complete and \completefinal separately from the rest. At the $i$-th recursion let $\mathbf{T}(i)$ be the input $Q$-trees, let $k(i)\defined |\mathbf{T}(i)|$ and let $b(i) = \max_{\mathcal{T}\in \mathbf{T}(i)} |\borders{\mathcal{T}}|$. Then, by \cref{obs:complexitydmin}, computing $\dmin$, takes $\mathcal{O}(k(i)^2\cdot b(i)^2)$ operations. By \cref{obs:logborder}, the complexity of the procedure \external takes $\mathcal{O}(k(i)^2)$ operations. %\cgp{No me queda claro c\'omo $b(i)=O(1)$. OBS 5.16 s\'olo dice que $b(i)\leq \mbox{depth}({\cal T})+1$. Y esto es siempre  constante? EXPLICAR. MAS ABAJO DICE $b(i)\leq i+1$, as\'i que hay algo inconsistente aqu\'i.}. 
Computing $\NNG$ takes $\mathcal{O}(k(i)^2)$ operations. The procedure \depthfirst takes $\mathcal{O}(k(i))$ operations.

On the other hand, notice that in each step of the recursion, every tree is merged to its nearest neighbour. This implies that $k(i)\leq \frac{n}{2^i}$ and, therefore, the depth of the recursion is bounded by $\log_2(n)$. Since by \cref{obs:logborder} $b(i)\leq i+1$ then, there is some constant $C_1>0$ such that the the total number of operations of this procedure is bounded by $C_1 \sum_{i=0}^{\log_2(n)} \left(\frac{n}{2^i}\right)^2 \left(i+1\right)^2 + \left(\frac{n}{2^i}\right)^2+ \left(\frac{n}{2^i}\right) = {\cal O}(n^2)$.

% which is $\mathcal{O}(n^2)$ operations.
It remains to consider \complete and \completefinal. In this procedures, all $Q$-nodes $\alpha$ are oriented through \orient (\cref{alg:orient}) with input $(\mathcal{T}_1,\mathcal{T}_2,\mathcal{T}_3)$ where $\alpha$ is the root of $\mathcal{T}_2$. To count the operations of this procedure we consider two cases. The first (and most common) case is when $\mathcal{T}_1,\mathcal{T}_2$ and $\mathcal{T}_3$ are trees instantiated at the same recursive step. In this case, if they \sac{were} instantiated at the $i$-th recursion then by \cref{obs:logborder} and \cref{obs:complexitybrute} the orientation takes $\mathcal{O}((i+1)\cdot n)$ operations.

 By counting on the recursion where each node was instantiated, the total number of operations involving first case $Q$-nodes can be bounded by $C_2\cdot \sum_{i=0}^{\log_2(n)} \left(\frac{n}{2^i}\right)(i+1)\cdot n = {\cal O}(n^2). $
%\,\,\leq\,\, C_2\cdot  \left( 4 n^{2}-3 n-\frac{n \log (n)}{\log (2)}\right)=

 A second case to consider is during the complete internal orientation of a connected component. Let $\mathcal{T}$ be a tree generated from a connected component of the nearest-neighbours graph at the $i$-th recursion of the algorithm. Then, in \complete (or \completefinal) with input $\mathcal{T}$, some of the internal $Q$-nodes will be oriented by having as border candidates $\borders{\mathcal{T}^L}$ and $\borders{\mathcal{T}^R}$. Since $\depth(\mathcal{T})\leq i$, this can occur for $\borders{\mathcal{T}^L}$ (resp. $\borders{\mathcal{T}^R}$) for at most $i$ internal $Q$-nodes of $\mathcal{T}$.
\cgp{Let} $C(i)$ be the number of connected component found in the $i$-th recursion, then the number of second case $Q$-nodes is at most $C(i)\cdot i \cdot 2$. Since by \cref{obs:logborder}, $|\borders{\mathcal{T}^L}|\leq i$ and $|\borders{\mathcal{T}^R}|\leq i$, the number of operations required for orienting all this nodes is bounded by $\mathcal{O}(C(i)\cdot i^2\cdot n)$. Again, by counting through the recursion levels and considering that $C(i)\leq n/2^{i}$, the total cost of orienting second case $Q$-nodes is bounded by $C_3\cdot \sum_{i=0}^{\log_2(n)}
\left(\frac{n}{2^i}\right) \cdot i^2 \cdot n = {\cal O}(n^2),
$
%\,\,=\,\, C_3\left(6 n^{2}-6 n-\frac{n \log ^{2}(n)}{\log ^{2}(2)}-\frac{4 n \log (n)}{\log (2)}\right)=
which proves the result.
%By taking $C = C_1 + C_2+ C_3$ we are able to conclude the proof
 \end{proof}

\subsection{$PQ$-tree of solutions in the strict Robinson case}

It is clear that if a sequence is strictly monotone the only permutation that preserves this property is the one that reverses the sequence. Therefore if $D\in \slinear$, we have that $S_{\slinear}(D)=\{\textbf{e}, \textbf{r} \}\cong \dihedral_1$. However, it is not immediately clear which permutations are the ones that preserve strict unimodality. The next Lemma will let us conclude that there is at most one \textit{non trivial} ordering for $D\in \scircular$.

\begin{lemma}\label{lemma:nontrivial}
Let $D\in \scircular$ and let $\mathcal{I}_1,\dots \mathcal{I}_k$ be disjoint arcs of $[n]$. Let $\sigma_{\mathcal{I}_i}$ be the permutation that reverses $\mathcal{I}_i$. Then, at most one of the $\sigma_{\mathcal{I}_i}$'s produces a new Robinson ordering.

%Nevertheless, such ordering can be significantly different from the identity.

\end{lemma}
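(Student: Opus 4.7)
The plan is to argue by contradiction: suppose two of the reversals, say $\sigma_{\mathcal{I}_1}$ and $\sigma_{\mathcal{I}_2}$, both yield new strict circular Robinson orderings (distinct from $D$ modulo the $\dihedral_n$-action). I would split into two cases depending on whether $\mathcal{I}_1 \cup \mathcal{I}_2$ covers $[n]$ and in each case extract a contradiction using the strict-unimodality characterization of \Cref{prop:rob_is_unimod}.

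In the non-covering case, I would pick $k \in [n]\setminus(\mathcal{I}_1\cup\mathcal{I}_2)$. Since row $k$ is fixed by both $\sigma_{\mathcal{I}_1}$ and $\sigma_{\mathcal{I}_2}$, applying $\sigma_{\mathcal{I}_\ell}$ (for $\ell\in\{1,2\}$) only reverses a contiguous block of entries of the strictly unimodal function $f_k(t) := D(k, k+t \bmod n)$, namely the block of those $t$ with $k+t\bmod n\in\mathcal{I}_\ell$. The key claim I would establish is that for the reversed sequence to remain strictly unimodal, the peak of $f_k$ must lie inside that block: otherwise the block sits entirely on one strictly monotone side of the peak, and reversing a strictly monotone segment flips its monotonicity and therefore produces an interior local extremum, incompatible with strict unimodality. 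Applied to both reversals, this forces the peak of $f_k$ to lie inside both blocks, contradicting $\mathcal{I}_1 \cap \mathcal{I}_2 = \emptyset$.

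In the covering case $\mathcal{I}_1 \cup \mathcal{I}_2 = [n]$ (so $\mathcal{I}_2 = \mathcal{I}_1^c$), the reversals have disjoint supports and therefore commute, and their composition $\tau := \sigma_{\mathcal{I}_1}\sigma_{\mathcal{I}_2}$ reverses two complementary arcs of the cycle, which is a reflection and hence an element of $\dihedral_n$. Using commutativity and involutivity, I would obtain $\sigma_{\mathcal{I}_2} D \sigma_{\mathcal{I}_2}^T = \tau\,(\sigma_{\mathcal{I}_1} D \sigma_{\mathcal{I}_1}^T)\,\tau^T$, so the two resulting Robinson matrices lie in the same $\dihedral_n$-orbit and thus represent the same non-trivial ordering, contradicting the assumption that they produce two \emph{distinct} new orderings. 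For $k \geq 3$ arcs, the covering case can occur for at most one pair (any third disjoint arc would have to be contained in $\emptyset$), so the non-covering argument applies to some pair and yields the previous contradiction.

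The hard part will be the careful verification of the strict-unimodality step in the non-covering case. I need to rule out boundary configurations where the block meets the peak only at an endpoint, and to handle the edge case in which the definition of strict unimodality allows two consecutive positions to share the maximum value; in that situation I would check that having one peak position in $\mathcal{I}_1$ and the other in $\mathcal{I}_2$ still forces an interior minimum after reversal. The structural fact I would spell out is that reversing a strictly monotone contiguous segment of a strictly unimodal cyclic sequence always introduces a local extremum in its interior, which is incompatible with the strictness required by the unimodality reformulation of \Cref{prop:rob_is_unimod}.
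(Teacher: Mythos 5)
Your proposal is correct and follows essentially the same route as the paper: the core step — that for any row $k$ fixed by the reversal, the argmax of the strictly unimodal function $t\mapsto D(k,k+t\bmod n)$ must lie inside the reversed arc, since reversing a segment contained in a strictly monotone stretch destroys unimodality — is exactly the paper's claim that $M(i)\subset\mathcal{I}$ for all $i\notin\mathcal{I}$, and disjointness then gives the same contradiction. Your explicit handling of the complementary case $\mathcal{I}_2=\mathcal{I}_1^c$ (the two reversals compose to a reflection in $\dihedral_n$, so they yield the same non-trivial ordering) is a welcome bit of extra care that the paper compresses into the remark that only $\mathcal{I}$ and $\mathcal{I}^c$ can be reversed.
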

\begin{proof}
Suppose $\sigma_\mathcal{I}$ is a Robinson ordering for some arc $\mathcal{I}$. For every $i\in [n]$, let $M(i) =  \argmax_{j}D(i,j)$.  We claim that for every $i\notin \mathcal{I}$ it holds that $M(i)\subset \mathcal{I}$. Otherwise, given $m^*\in M(i)$, by the connectivity of $\mathcal{I}$, we must have that $\mathcal{I}$ must be strictly contained in one of the two paths connecting $i$ and $m^*$. Also notice that $D(i,\cdot)$ is strictly monotone in such path. Hence, reversing $\mathcal{I}$ would violate the monotonicity of such sequence (an thus the unimodality of the whole sequence). This proves the claim. %We conclude that $i\notin \mathcal{I}\rightarrow M(i)\subset \mathcal{I}^c$. 
Since $\mathcal{I}^c$ is an arc, by the same argument we have that $i\in \mathcal{I}$ implies $M(i)\subset \mathcal{I}^c$. Hence, the only arcs that can be reversed are $\mathcal{I}$ and $\mathcal{I}^c$.
\end{proof}

%% MODELO GENERATIVO

\section{Behavior for large \(n\)}\label{sec:generative}

The literature on the seriation problem has mostly focused on finite ordered sets, either linearly or cyclically ordered, on suitable classes of matrices encoding properties of this order, 
such as Robinson matrices, and on efficient algorithms for its solution. However, typically the use of seriation algorithms is motivated by the interpretation of data as embedded in a closed curve, and it is unclear how these combinatorial solutions relate to the underlying order of a continuous object.
%However, the behavior of the solutions to the seriation problem as \(n\) tends to infinity has not received as much attention. 

To bridge this gap, we provide a simple generative model of sampling from a continuous and periodic structure. That sample, and more specifically the dissimilarities between pairs of points from the sample, will be the input of our strict seriation algorithm. The question we want to answer is: {\em to which extent the solution obtained by the seriation algorithm applied to a random sample reflects the underlying ordering of the periodic structure?} We will answer this question by proving that as the sample size $n$ grows, the expected Kendall-tau distance from the strict circular seriation algorithm solution to the order inherited from the continuous model decreases at a rate $\rate{n}$.

\subsection{Reduction to \(\S^1\)}

We will consider our periodic continuous structure as parameterized by the unit circle. Equivalently, we will use the set $[0,1)$ as the set of points, where we topologically identify $0$ and $1$, making it a circular-like structure. This set is endowed with the {\em natural cyclic order}, which results from embedding $\circInt$ into $\sphere$. 
%The key assumption in our analysis is that it can be reduced to a problem on the unit circle
%\[
%    \S^1 := \{(\cos\theta,\sin\theta):\, \theta\in \circInt\}.
%\]
%This is a topological space~[REF]. To every element in \(\S^1\) we can associate an angle \(\theta = \theta(x)\) in \(\circInt\). We endow this interval with a topology such that \(\theta\) is a homeomorphism and with the operation of sum modulo \(2\pi\). Finally, we endow \(\S^1\) with the {\em standard cyclic order} \(\circO_{\S^1}\) given by
%\[
%    \sac{
%    (x,y,z)\in \circO_{\S^1}\quad\Leftrightarrow\quad (\theta(x),\theta(y),\theta(z))\in \circO_{\circInt},
%    }
%\]
%where \(\circO_{\circInt}\) is defined as in~\eqref{eq:circOn}. %We now state our first main assumption.
We assume the set $\circInt$ is endowed with a dissimilarity \(\bfd\). We will make some assumptions that relate the circular ordering to the circular Robinson property.

\begin{axiom}\label{axiom:robinson}
     \(\bfd\) is continuous, and
     {\em strict circular Robinson}, i.e.,
    \begin{equation}
    \label{eq:dInfRobinson}
        \forall\,\, \mbox{cyclically ordered }x,y,z,w\in \circInt:\,\,
        %\mbox{cyclically ordered \(x,y,w,z\in \X_{\infty}\)}:\,\,
        \bfd(y,w) > \min\{\bfd(y,x), \bfd(y,z)\}. 
    \end{equation}
    %There exists a homeomorphism \(\gamma:\S^1\to \X_\infty\) such that the standard cyclic order on \(\S^1\) is consistent with \(\circO_{\X_\infty}\) and the dissimilarity \(\bfd\) is {\em strict circular Robinson}, i.e.,
    %\begin{equation}
    %\label{eq:dInfRobinson}
        %\forall\,\, \mbox{cyclically ordered \(x,y,w,z\in \X_{\infty}\)}:\,\, \bfd(y,w) > \min\{\bfd(y,x), \bfd(y,z)\}. 
    %\end{equation}
\end{axiom}

One natural question is how general this continuous model is. We claim that the assumption that our sample space is the unit circle is without loss of generality\footnote{From now w.l.o.g.}. For example, if the sample space is a one dimensional compact manifold of $\mathbb{R}^d$, we can parameterize the manifold by its arc-length $\gamma:[0,1)\mapsto\mathbb{R}^d$, and let $\bfd(t,s):=\|\gamma(t)-\gamma(s)\|$, which is clearly continuous. Notice however that the validity of the strict circular Robinson property is not guaranteed in this example: such assumption depends on the relative positions of points in space.

\subsection{Solutions in the limit}

To understand the set of solutions in the limit we first need to characterize the natural symmetries of the strict Robinson dissimilarity \(\bfd\). To do so, we consider the family of {\em cyclic shifts} \(\{\pi_s:\,s\in\circInt\}\) defined by \(\pi_s(t) = t + s\bmod 1\), and the {\em reversal} \(\pi_r(t) = 1 - t\). We let \(\dihedral_{\infty} := \gen{\pi_s, \pi_r:\, s\in\circInt}\). In addition, given an arc \(\mathcal{I} := (t,s)\subsetneq \circInt\), we let \(\sigma_{\mathcal{I}}\) be the bijection that reverses \(\mathcal{I}\) and fixes \(\mathcal{I}^c\). Since in the finite case all solutions can be expressed as compositions of such permutations, in the continuous case we look for solutions in $\syminf \defined \dihedral_\infty \circ \langle \sigma_\mathcal{I} : \mathcal{I} \text{ arc}\rangle $.

% A cyclic shift is a bijection of the form $x\mapsto x+s\mod 2\pi$ and the reversing permutation is the bijection $x\mapsto 2\pi - x$. We denote $\dihedral_\infty$ the set $\langle \pi_s, \pi_r:s\in \sphere \rangle$. Given some arc $\mathcal{I}=(a,b)\neq \sphere$, we denote as $\sigma_{\mathcal{I}}$ the permutation that reverses\footnote{In the continuous case, for an arc $\mathcal{I}=(a,b)$ this is the permutation defined by $$\sigma_{\mathcal{}}(x)=\begin{cases}
% \lambda \cdot b + (1-\lambda )\cdot a\quad \text{if }x= \lambda \cdot a + (1-\lambda )\cdot b\\
% x\quad \text{if } x\notin \mathcal{I}=(a,b)\end{cases}$$} $\mathcal{I}$. Since in the finite case all solutions can be expressed as compositions of such permutations, in the continuous case we look for solutions in $\syminf \defined \dihedral_\infty \circ \langle \sigma_\mathcal{I} : \mathcal{I} \text{ arc}\rangle $. 

% A first observation is that
% %First, we show that
% in the continuous setting, strict circular Robinson curves cannot have orderings other than the elements in $\dihedral_\infty$
% \begin{theorem}
% Under \cref{axiom:robinson}, the only permutations $\sigma$ over $\sphere$ such that $\gamma(\sigma(\cdot))$ is a Robinson curve w.r.t.~$\mathbf{d}$ are in $\dihedral_\infty$
% \end{theorem}

\begin{theorem}\label{th:continuo}
Suppose $\bfd$ satisfies \Cref{axiom:robinson}, and let
\(\pi\in \syminf\). If $\bfd\circ\pi$ is strict circular Robinson then $\pi\in \dihedral_{\infty}.$
%he dissimilarity \(\bfd\circ\pi\) satisfies
% be a bijection on \(\circInt\).
%Under \Cref{axiom:robinson}, \(\bfd\circ\pi\) is strict circular Robinson if and only if \(\pi\in \dihedral_{\infty}\). 
\end{theorem}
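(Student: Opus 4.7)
My plan is to show that any nontrivial arc-reversal component of $\pi$ induces a jump in the composed dissimilarity that violates the strict unimodality required by the strict circular Robinson property. Using the decomposition $\syminf = \dihedral_{\infty} \circ \langle \sigma_{\mathcal{I}}:\ \mathcal{I}\text{ arc}\rangle$, I would first verify that if $g \in \dihedral_{\infty}$ then $\bfd \circ g$ is still a continuous, strict circular Robinson dissimilarity: the shifts $\pi_s$ and the reversal $\pi_r$ are homeomorphisms of $\circInt$ that either preserve or reverse the natural cyclic order, so \Cref{axiom:robinson} transfers directly. Writing $\pi = g \circ \rho$ with $g \in \dihedral_{\infty}$ and $\rho$ a finite product of arc reversals, we have $\bfd \circ \pi = (\bfd \circ g) \circ \rho$, so the problem reduces to the following: \emph{if $\rho$ is not the identity as a bijection of $\circInt$, then $\rho$ cannot preserve the strict circular Robinson property of any continuous strict circular Robinson dissimilarity}.

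Second, I would exploit the piecewise-affine structure of arc-reversal products. Any nontrivial $\rho$ is a piecewise affine bijection of $\circInt$ with local slopes $\pm 1$ and therefore admits at least one genuine discontinuity at some point $x_0 \in \circInt$ where a locally-fixed segment of $\rho$ meets a locally-reversed segment. Let $p := \lim_{x \uparrow x_0}\rho(x)$ and $q := \lim_{x \downarrow x_0}\rho(x)$; by the nontriviality of the discontinuity, $p \neq q$. Picking $y$ in a neighborhood of $x_0$ on which $\rho$ acts as the identity ensures $\rho(y) = y$, and since the strictness of $\bfd$ together with \Cref{prop:rob_is_unimod} forces $y \mapsto \bfd(y, p) - \bfd(y, q)$ to be a nonconstant continuous function, we can select $y$ with $\bfd(y, p) \neq \bfd(y, q)$.

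Third, I examine $h(x) := (\bfd \circ \rho)(y, x) = \bfd(y, \rho(x))$. Its one-sided limits at $x_0$ equal $\bfd(y, p)$ and $\bfd(y, q)$, so $h$ has a strict jump at $x_0$. Tracing $h$ around $\circInt$ starting from $y$ and using the continuity and strict unimodality of $\bfd(y, \cdot)$ on each continuous piece of $\rho$, the jump combined with the strictly monotone continuous pieces of $h$ on either side of $x_0$ must produce at least two distinct local maxima of $h$ separated by a strict interior local minimum, contradicting the strict unimodality required by strict circular Robinson of $\bfd \circ \rho$ at $y$. This contradiction forces $\rho = \mathrm{id}$, whence $\pi = g \in \dihedral_{\infty}$, as desired.

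The main obstacle I anticipate lies in Step 2: ensuring that a nontrivial product $\rho$ of arc reversals always admits a discontinuity of the kind used in Step 3. Compositions of overlapping arc reversals can partially cancel and generate intricate piecewise structure, so to make this rigorous I would work with the minimal partition of $\circInt$ into maximal arcs on which $\rho$ agrees with an affine map of slope $\pm 1$, and argue that if every breakpoint were a removable discontinuity then $\rho$ would already be a continuous bijection of the circle, forcing $\rho \in \dihedral_\infty$ and (by absorbing into $g$) $\rho = \mathrm{id}$ in the decomposition. A secondary but easier subtlety is the generic choice of $y$, which follows from the fact that the level set $\{y:\ \bfd(y,p) = \bfd(y,q)\}$ cannot contain any open arc near $x_0$ by strict unimodality.
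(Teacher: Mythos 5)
Your overall strategy --- reduce modulo $\dihedral_\infty$ to a product $\rho$ of arc reversals, locate a discontinuity of $\rho$, and use the continuity of $\bfd$ to contradict strict unimodality --- is the same in spirit as the paper's, and your Step 1 together with the observation that a continuous, piecewise slope-$\pm1$ bijection of the circle must lie in $\dihedral_\infty$ is actually broader in scope than the paper's argument, which only treats a single reversal $\sigma_{\mathcal{I}}$ explicitly. However, the core contradiction in your Step 3 has a genuine gap. A jump discontinuity of $h(x)=\bfd(\rho(y),\rho(x))$ at $x_0$ with distinct one-sided limits $\bfd(\rho(y),p)\neq \bfd(\rho(y),q)$ does \emph{not} by itself violate strict unimodality: a strictly unimodal function may jump upward during its increasing phase or downward during its decreasing phase, so the claim that the jump ``must produce at least two distinct local maxima separated by a strict interior local minimum'' does not follow from the mere inequality $\bfd(y,p)\neq\bfd(y,q)$. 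The contradiction requires a quantitatively sharper choice of center. The paper takes the center to be the preimage of one of the one-sided limits itself: for $\mathcal{I}=[a,b)$ it centers at $a$, whose image is $b$, so that just past the discontinuity the value $\bfd(b,b+\epsilon)$ tends to $0$ while the values at the two flanking points, $\bfd(b,a)$ and $\bfd(b,a-\epsilon)$, stay bounded below by $\delta=\bfd(b,a)>0$; this forces a genuine interior dip, which is exactly what the four-point Robinson inequality forbids. To repair your argument you would need an analogous choice (a center $c=\rho(y)$ with $\bfd(c,p)$ arbitrarily small and $\bfd(c,q)$ bounded away from zero, positioned on the correct side of $x_0$) and a verification that the resulting quadruple is cyclically ordered.

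A second, smaller, gap is in Step 2: you assume the discontinuity separates a ``locally-fixed segment'' from a ``locally-reversed segment'' and then pick $y$ with $\rho(y)=y$ near $x_0$. A product of arc reversals need not fix any open set (e.g., $\sigma_{[0,1/2)}\circ\sigma_{[1/2,1)}$ fixes no arc), so neither the description of the breakpoint nor the existence of such a $y$ is guaranteed. This is repairable --- nothing essential requires $\rho(y)=y$, only a suitable center $c$ in the image of $\rho$, which exists since $\rho$ is a bijection --- but as written the step fails.
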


This result can be seen as a well-posedness statement of the seriation problem in the continuous limit. Our next goal is to study its consequences %of such well-posedness 
for large (but finite) sample size.

\subsection{Approximate well-posedness of seriation in the large $n$ regime}
We now propose a sampling model from the continuous model. We uniformly at random extract a size $n$ sample from $\circInt$. We denote this sample by  \(\sample{n} := \{\x_0,\ldots, \x_{n-1}\}\). If we let $\lambda$ be the Lebesgue measure on $[0,1)$, then our sampling is distributed as $\lambda^n$. Let \(D_{\sample{n}}\) denote the dissimilarity matrix associated to \(\sample{n}\). In particular, if \(\x_0,\ldots, \x_{n-1}\) are cyclically ordered, then the dissimilarity matrix is strict circular Robinson (cf. \Cref{axiom:robinson}).

Despite that in the continuous case there is a unique Robinson ordering, with finitely many samples there might exist non-trivial orderings (cf. \cref{lemma:nontrivial}). In what follows we study conditions under which for a large sample, any ordering in $S_{\scircular}(\sampled{n})$ is close to the one induced by the curve. Our closeness measure
%
%To measure agreement between permutations we make use of
is given by the Kendall-tau's metric $\kendall$ %(see \cref{defi:kendall} at \cref{appendix:definitions} for a detailed definition)
and the goal is to bound the expected value of the diameter of the set of solutions:

\begin{defi}[Kendall-tau's metric \cite{{kendall1938new},ma2020optimal}]\label{defi:kendall}
%Kendall-tau's correlation coefficient \cite{kendall1938new} has been introduced  to measure discrepancy between permutations. In this work, we use instead Kendall-tau's metric \cite{ma2020optimal},which is defined as 
We define the Kendall-tau distance between permutations $\pi_1$ and $\pi_2$ as
$\kendall \left(\pi_{1}, \pi_{2}\right)\defined
|\mathcal{G}\left(\pi_{1}, \pi_{2}\right)|/\binom{n}{2},$ %\frac{|\mathcal{G}\left(\pi_{1}, \pi_{2}\right)|}{\binom{n}{2}},$ 
where %for two permutations $\pi_{1}$ and $\pi_{2}$,
$\mathcal{G}\left(\pi_{1}, \pi_{2}\right)$ corresponds to the set of discordant pairs defined as 
{\small 
$$
\mathcal{G}\left(\pi_{1}, \pi_{2}\right)\defined\left\{(i, j): i<j,\left[\pi_{1}(i)<\pi_{1}(j) \wedge \pi_{2}(i)>\pi_{2}(j)\right] \vee\left[\pi_{1}(i)>\pi_{1}(j) \wedge \pi_{2}(i)<\pi_{2}(j)\right]\right\}.
$$}
\end{defi}
The denominator $\binom{n}{2}$ ensures that $\kendall\left(\pi_{1}, \pi_{2}\right) \in[0,1]$. %Notice this is a metric. 
The next definition of diameter takes into account that for seriation cyclic permutations provide the same ordering.
%where $\kendall\left(\pi_{1}, \pi_{2}\right)=0$ corresponds to $\pi_{1}=\pi_{2}$ \cite{ma2020optimal}.

\begin{defi}
Given a set $S\subset \sym$, \sac{the diameter of $S$} is defined as $\diam(S)\defined  \max_{\pi_1,\pi_2\in S}\min_{\hat{\pi}_1\in \dihedral_n \circ \pi_1}\tau_K(\hat{\pi}_1, \pi_2)$.
\end{defi}

 %We define the arc distance 
 Let $\operatorname{Arc}: \circInt\times \circInt\rightarrow[0,\frac{1}{2}]$ be the length of the shortest arc connecting two points in the unit circle, i.e.~$\operatorname{Arc}(\theta_1, \theta_2)=\min\{|\theta_1-\theta_2|,1-|\theta_1-\theta_2|\}$. To prove rates on the Kendall-tau distance we make a final assumption. This condition allows us to avoid making overly restrictive metric assumptions on the dissimilarity, but still enjoying a weaker form of distance.
%\begin{equation}
%\operatorname{Arc}(\theta_1, \theta_2) = \begin{cases}
%|\theta_1-\theta_2| & \mbox{if } |\theta_1-\theta_2|\le \pi \\
%        2\pi-|\theta_1-\theta_2| & \mbox{else}
%        \end{cases}
%\end{equation}

\begin{axiom}\label{axiom:bilip}
The dissimilarity $\mathbf{d}$ satisfies the following bi-Lipschitz property:%, for given $0<\ell<L$:
\begin{equation} \label{eqn:biLipschitz}
(\exists L\geq \ell>0)(\forall s,t\in \circInt) \qquad \ell\cdot Arc(s,t)\leq \mathbf{d}(s, t)\leq L\cdot Arc(s,t).
\end{equation}
\end{axiom}

%for some fixed $\ell,L>0$ and for every $x,y\in \sphere$.
%Lipschitz-like conditions implies that
%one can prove that the dissimilarity $\mathbf{d}\circ \gamma$ satisfies the so-called $\frac{L}{\ell}$-relaxed triangle inequality:
%$$\mathbf{d}(\gamma(x),\gamma(z))\leq \frac{L}{\ell} \left( \mathbf{d}(\gamma(x),\gamma(y)) + \mathbf{d}(\gamma(y),\gamma(z))\right)$$
%for every $x,y,z\in \sphere$.

%The measure $\mu$ over $\sphere$ will correspond to the Lebesgue measure over the interval $[0,2\pi)$.

We conclude this Section by providing a rate on the expected Kendall-tau diameter of the set of solutions of the circular Robinson algorithm. Hence, all these solutions must be close to the underlying order of the continuous model. Its proof is deferred to \Cref{App:pf_gen_model}.
\begin{theorem}\label{th:main}
 Let $\sample{n} = \{\x_0, \x_1,\dots \x_{n-1} \} \iid \uniform$. Then given any $\bfd$ satisfying \cref{axiom:robinson} and \cref{axiom:bilip} we have that
  \begin{equation}
  \small
   {\textstyle
 \mathbb{E}_{\sample{n}}\left[ \diam (S_{\scircular}(D_{\sample{n}})) \right]  =  {\cal O}\Big( \frac{(L+\ell)}{\ell}\cdot\frac{\log(n)}{n} \Big)}.
 \end{equation}
 
\end{theorem}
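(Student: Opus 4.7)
My plan is to first reduce the diameter computation to bounding the ``mass'' of a single arc reversal, and then to quantify why this arc must be small in the large-sample regime. By \cref{lemma:nontrivial}, among any family of disjoint arcs, at most one admits a valid reversal preserving strict circular Robinson. Combined with the $PQ$-tree structure output by \main\ (which exactly encodes $S_{\scircular}(D_{\sample{n}})$), this implies that, modulo the $\dihedral_n$ action, the set $S_{\scircular}(D_{\sample{n}})$ contains essentially one non-trivial element: the permutation obtained by reversing a specific arc $\mathcal{I}^* \subseteq \sample{n}$ (when such an arc exists). After aligning via the best dihedral transformation, the diameter is therefore at most $\binom{k^*}{2}/\binom{n}{2}$, where $k^* = \min(|\mathcal{I}^* \cap \sample{n}|,\, n-|\mathcal{I}^* \cap \sample{n}|)$.

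The core step is bounding $k^*$, for which I would quantify the continuous obstruction from \cref{th:continuo} in the finite regime. From the proof of \cref{lemma:nontrivial}, reversibility of $\mathcal{I}^*$ forces that for every sample point $x_i \in \mathcal{I}^{*c}$, the farthest sample point from $x_i$ (with respect to $\mathbf{d}$) lies in $\mathcal{I}^*$. Using \cref{axiom:bilip}, any sample point realizing this maximum must lie within arc distance $O((L/\ell)\cdot G)$ of the antipode $x_i + 1/2 \bmod 1$, where $G$ denotes the maximum Lebesgue gap of the sample on $\circInt$. Enforcing this constraint simultaneously over all $x_i \in \mathcal{I}^{*c}$ forces the shifted complement $\mathcal{I}^{*c} + 1/2$ to lie in an $O((L+\ell)/\ell \cdot G)$-neighborhood of $\mathcal{I}^*$. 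Since $\mathcal{I}^*$ and $\mathcal{I}^{*c} + 1/2$ are both arcs, an elementary containment argument on the circle yields $\min(|\mathcal{I}^*|,|\mathcal{I}^{*c}|) \leq C \cdot (L+\ell)/\ell \cdot G$ in Lebesgue measure.

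Finally, I would invoke the classical concentration for uniform spacings: $\mathbb{E}[G] = O(\log(n)/n)$ and $G$ has sub-exponential tails around this scale. Translating Lebesgue length to sample-point count via a standard binomial concentration argument conditional on the arc, the smaller of $\mathcal{I}^*, \mathcal{I}^{*c}$ contains $k^* = O((L+\ell)/\ell \cdot \log n)$ points with high probability, and at most $n$ always. Computing the expectation of $\binom{k^*}{2}/\binom{n}{2}$ (splitting between the high-probability event and its complement, on which we use the trivial bound $\diam \leq 1$) then gives
\begin{equation*}
    \mathbb{E}\bigl[\diam(S_{\scircular}(D_{\sample{n}}))\bigr] = \drate{n} \cdot \frac{L+\ell}{\ell},
\end{equation*}
as claimed.

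The main obstacle is the second step: propagating the continuous obstruction of \cref{th:continuo} to a quantitative geometric statement in the finite setting. The subtlety is that the ``farthest-point'' condition on $M(i)$ must be imposed uniformly over all $x_i \in \mathcal{I}^{*c}$, turning a pointwise bi-Lipschitz distortion into a uniform arc-containment statement whose error is controlled by the maximum gap $G$ rather than the typical spacing $1/n$. This is precisely what drives the rate to $\log(n)/n$ with a condition-number factor $(L+\ell)/\ell$, instead of the naive $1/n$ rate. Care must also be taken to select the correct dihedral representative so that the Kendall-tau distance truly scales with the smaller of the two complementary arcs.
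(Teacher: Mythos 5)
Your outer scaffolding (reduction via \cref{lemma:nontrivial} to a single reversible arc $\mathcal{I}^*$, bounding the Kendall-tau diameter by $\binom{k^*}{2}/\binom{n}{2}$, and finishing with the maximum-gap tail bound plus a point-counting argument on small arcs) matches the paper's proof, which proceeds through \cref{prop:epsilon}, \cref{lema:diameter} and \cref{prop:probepsilon}. However, the step you yourself identify as the core — forcing $\min(\lambda(\mathcal{I}^*),\lambda(\mathcal{I}^{*c}))=O(\tfrac{L+\ell}{\ell}G)$ from the farthest-point condition — has a genuine gap, in fact two. First, \cref{axiom:bilip} does \emph{not} place a $\mathbf{d}$-farthest sample point within arc distance $O((L/\ell)G)$ of the antipode: from $\mathbf{d}(x_i,y^*)\geq \ell(1/2-G)$ and $\mathbf{d}(x_i,y^*)\leq L\operatorname{Arc}(x_i,y^*)$ one only gets $\operatorname{Arc}(x_i,y^*)\geq \tfrac{\ell}{L}(1/2-G)$, so $y^*$ can sit anywhere in a window of width $\approx \tfrac{L-\ell}{L}$ around the antipode — a constant-order slack whenever $L>\ell$. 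Second, even in the ideal case $L=\ell$ (where the farthest point genuinely is the near-antipode), the containment you would derive, namely $\mathcal{I}^{*c}+1/2\subseteq\mathcal{I}^*$ up to $O(G)$, implies $1-\lambda(\mathcal{I}^*)\leq\lambda(\mathcal{I}^*)+O(G)$, i.e.\ $\lambda(\mathcal{I}^*)\geq 1/2-O(G)$; the symmetric condition on points of $\mathcal{I}^*$ gives $\lambda(\mathcal{I}^{*c})\geq 1/2-O(G)$ as well. This pins \emph{both} arcs near measure $1/2$ rather than making one of them $O(G)$. The underlying reason is that the farthest-point condition from the proof of \cref{lemma:nontrivial} is only a necessary condition for reversibility and is far too weak: for $\mathbf{d}=\operatorname{Arc}$ it is satisfied by a half-circle arc, which is nevertheless not reversible.

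The paper avoids this by using a \emph{local} obstruction instead of the global farthest-point one. If $s,t$ are the borders of $\mathcal{I}^*$ and $s^+,t^+$ their neighbours in $\mathcal{I}^{*c}$, the ball $B_s(\mathbf{d}(s,s^+))$ has radius at most $L\epsilon_n$ (where $\epsilon_n$ is the maximum gap), while $\min\{\mathbf{d}(s,t),\mathbf{d}(s,t^+)\}\geq \ell(\mu(\mathcal{I}^*)-\epsilon_n)$. Hence whenever $\epsilon_n<\ell\delta/(L+\ell)$ and $\mu(\mathcal{I}^*)>\delta$, this ball contains $\{s,s^+\}$ and excludes $\{t,t^+\}$, so it strictly overlaps $\mathcal{I}^*$ and by \cref{lema:canreverse} the reversal is forbidden (\cref{prop:epsilon}). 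This is exactly what yields $\min(\lambda(\mathcal{I}^*),\lambda(\mathcal{I}^{*c}))\leq\delta$ on the good event and, with $\delta\asymp\tfrac{L+\ell}{\ell}\cdot\tfrac{\log n}{n}$, the claimed rate. You would need to replace your antipodal argument with an obstruction of this local type (or something equally strong) for the proof to go through.
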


%\newpage
\appendix

\section{Proofs}

\label{apx:proofs}

\subsection{Proof of~\Cref{prop:rob_is_unimod}}
\label{apx:proofs:circ}

We need two auxiliary results first.

\begin{prop}\label{unimod}
    \(f\) is unimodal (resp. strictly unimodal) if and only if for \(i \leq j \leq k\) we have \(f_j \geq \min\{f_i, f_k\}\) (resp. \(f_j > \min\{f_i, f_k\}\)).
\end{prop}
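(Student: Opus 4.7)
My plan is to prove the equivalence in both directions, treating the non-strict and strict versions in parallel. The forward direction is a case split on the position of $j$ relative to the mode; the backward direction identifies a candidate mode as an argmax and verifies the mode property.

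\textbf{($\Rightarrow$).} Assume $f$ is unimodal with mode $m$. Fix $i \leq j \leq k$. If $j \leq m$, then $i \leq j \leq m$, so by the mode property $f_i \leq f_j$, whence $f_j \geq f_i \geq \min\{f_i, f_k\}$. If $j \geq m$, then $m \leq j \leq k$, so symmetrically $f_j \geq f_k \geq \min\{f_i, f_k\}$. For the strict version I would use the two modes $m_1 \leq m_2$ from the definition: when $j < m_1$ strict monotonicity on the left side yields $f_i < f_j$; when $j > m_2$ the right-side monotonicity yields $f_k < f_j$; and when $m_1 \leq j \leq m_2$ we have $f_j = f_{m_1} = f_{m_2}$ equal to the global maximum, strictly dominating $f_i$ and $f_k$ as soon as $i < j < k$ (using minimality/maximality of $m_1,m_2$ as argmax positions).

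\textbf{($\Leftarrow$).} Assume $f_j \geq \min\{f_i, f_k\}$ whenever $i \leq j \leq k$. Choose any $m \in \arg\max f$; I claim $m$ is a mode. For $i \leq j \leq m$, we have $f_j \leq f_m$ since $m$ is an argmax, and the hypothesis on the triple $(i,j,m)$ gives $f_j \geq \min\{f_i, f_m\}$. If $f_i > f_j$ held, then $\min\{f_i, f_m\} = f_m$ (since $f_j < f_i \leq f_m$), so $f_j \geq f_m$, and combined with $f_j \leq f_m$ this forces $f_j = f_m$, contradicting $f_j < f_i \leq f_m$. Hence $f_i \leq f_j$, and the symmetric argument handles $m \leq j \leq i$.

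\textbf{Strict backward direction.} Put $M = \max f$, $m_1 = \min f^{-1}(M)$, $m_2 = \max f^{-1}(M)$. If $m_2 > m_1 + 1$, pick $j$ with $m_1 < j < m_2$; the strict hypothesis gives $f_j > \min\{f_{m_1}, f_{m_2}\} = M$, a contradiction, so $m_2 \leq m_1 + 1$. Now for $i < j < m_1$, minimality of $m_1$ gives $f_i, f_j < f_{m_1}$, so $\min\{f_i, f_{m_1}\} = f_i$, and the hypothesis yields $f_j > f_i$; together with $f_j < f_{m_1}$ this is exactly the required chain $f_i < f_j < f_{m_1}$. The case $i > j > m_2$ is symmetric, completing strict unimodality.

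The step I expect to be the main obstacle is the strict backward direction, since one has to carefully isolate the (possibly two-element) plateau of argmax positions and rule out further argmax points outside it before invoking strict monotonicity — the rest of the argument is routine case analysis.
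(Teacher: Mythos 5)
Your argument is correct and follows essentially the same route as the paper's proof (forward: locate \(j\) relative to the mode; backward: an argmax is a mode), but is more complete, in particular for the strict case, which the paper dispatches with ``the same arguments''. One harmless slip: in the non-strict backward direction, since \(m\) is an argmax we have \(f_i\le f_m\) and hence \(\min\{f_i,f_m\}=f_i\) (not \(f_m\)), so the hypothesis gives \(f_j\ge f_i\) directly, contradicting \(f_i>f_j\) without the detour through \(f_j=f_m\).
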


\begin{proof}
    Suppose \(f\) is unimodal, let \(m\) be a mode, and suppose there are \(i,j,k\), not all equal, such that \(i \leq j \leq k\) and \(x_j < \min\{x_i, x_k\}\). Then \(x_i > x_j\) and \(x_j < x_k\). This implies \(m \leq j\) and \(m \geq j\). Hence \(m = j\). This is a contradiction. Now, suppose \(f\) satisfies the inequality but has no mode. Then \(j\) is not a mode, and there is \(i < j\) and \(k > j\) such that \(f_i > f_j\) and \(f_k > f_j\). This is a contradiction. The proof for the strictly unimodal case follows from the same arguments.
\end{proof}

\begin{prop}\label{prop:cycleSeq}
    If \((i_{-1}, i_{0}, i_1), (i_0, i_1, i_2)\in \circO_n\) then for each \(k\in \{-1,1,2\}\) there is \(q_k \in [n]\) such that \(i_k = i_0 + q_k\bmod n\). Furthermore, \(q_{1} \leq q_2 \leq q_{-1}\).
\end{prop}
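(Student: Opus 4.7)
The first assertion---existence of $q_k \in [n]$ with $i_k = i_0 + q_k \bmod n$---is immediate: set $q_k := (i_k - i_0) \bmod n$ for each $k \in \{-1, 1, 2\}$, which belongs to $[n]$. The substantive content is the ordering $q_1 \leq q_2 \leq q_{-1}$, which I plan to derive by unfolding the definition~\eqref{eq:circOn} of $\circO_n$ and exploiting its symmetries.

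My first step would be to use translation invariance of $\circO_n$: one checks directly from~\eqref{eq:circOn} that $(i, j, k) \in \circO_n$ if and only if $(i + s \bmod n, j + s \bmod n, k + s \bmod n) \in \circO_n$ for every $s \in [n]$. Applying this with $s = n - i_0$, i.e., shifting by $-i_0$, reduces the problem to the case $i_0 = 0$, under which $q_k = i_k$ for each $k \in \{-1, 1, 2\}$ and the hypotheses read $(q_{-1}, 0, q_1), (0, q_1, q_2) \in \circO_n$.

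Since elements of any triple of $\circO_n$ are pairwise distinct---the definition~\eqref{eq:circOn} uses strict inequalities only---we have $q_{-1}, q_1, q_2 \in \{1, \ldots, n-1\}$. A case analysis on the three disjuncts of~\eqref{eq:circOn} collapses the first hypothesis to the sole feasible case $0 < q_1 < q_{-1}$, and the second to the sole feasible case $0 < q_1 < q_2$. These already give $q_1 \leq q_{-1}$ and $q_1 \leq q_2$.

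The remaining inequality $q_2 \leq q_{-1}$ is where the two hypotheses must interact, and it is the main obstacle. My plan is to combine suitable cyclic rotations of the two given triples via the transitivity axiom of cyclic orders---$(a, b, c), (a, c, d) \in \circO_n \Rightarrow (b, c, d) \in \circO_n$---to extract a third triple placing $q_2$ between $q_1$ and $q_{-1}$, namely $(q_1, q_2, q_{-1}) \in \circO_n$. Once such a triple is in hand, applying the same case analysis as above to $(q_1, q_2, q_{-1})$ and using that $q_1 < q_2$ and $q_1 < q_{-1}$ forces the only feasible disjunct $q_1 < q_2 < q_{-1}$, completing the proof. The delicate part is choosing rotations with the correct pivot element so that transitivity applies cleanly and the derived triple has the desired form.
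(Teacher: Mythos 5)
Your reduction to $i_0=0$ via translation invariance, and the extraction of $0<q_1<q_{-1}$ from $(q_{-1},0,q_1)\in\circO_n$ and of $0<q_1<q_2$ from $(0,q_1,q_2)\in\circO_n$, are correct (and already more careful than the paper's own one-line proof, which simply asserts the conclusion after the shift). The gap is exactly at the step you flag as delicate: no combination of cyclicity and transitivity can produce $(q_1,q_2,q_{-1})\in\circO_n$ from the two stated hypotheses, because the desired conclusion is simply not a consequence of them. Concretely, take $n=5$, $i_0=0$, $i_1=1$, $i_{-1}=2$, $i_2=3$. Then $(i_{-1},i_0,i_1)=(2,0,1)\in\circO_5$ (disjunct $j<k<i$ of \eqref{eq:circOn}) and $(i_0,i_1,i_2)=(0,1,3)\in\circO_5$ (disjunct $i<j<k$), yet $q_1=1$, $q_2=3$, $q_{-1}=2$, so $q_2>q_{-1}$ and $(q_1,q_2,q_{-1})=(1,3,2)\notin\circO_5$. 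The structural reason is that, up to cyclic rotation, the two hypotheses are $(0,q_1,q_{-1})\in\circO_n$ and $(0,q_1,q_2)\in\circO_n$: both say only that $q_1$ is met first when traversing the circle from $0$, and carry no information about the relative position of $q_{-1}$ and $q_2$; transitivity requires the shared-pivot pattern $(a,b,c),(a,c,d)$, which is not available here.

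In other words, the proposition as literally stated is false, and the defect is in the hypothesis rather than in your strategy. What is actually needed (and what is available where the proposition is invoked, in the proof of \Cref{prop:rob_is_unimod}, where the four indices are assumed cyclically ordered as a $4$-tuple) is the additional triple $(i_0,i_2,i_{-1})\in\circO_n$ (equivalently $(i_1,i_2,i_{-1})\in\circO_n$ given the others). After your shift this reads $(0,q_2,q_{-1})\in\circO_n$, and your case analysis on the disjuncts of \eqref{eq:circOn} immediately forces $0<q_2<q_{-1}$, which together with $0<q_1<q_2$ gives $q_1<q_2<q_{-1}$ with no transitivity gymnastics required. So the correct repair is to strengthen the hypothesis to a genuine four-element cyclic ordering of $i_0,i_1,i_2,i_{-1}$, at which point your argument closes; as written, the final inequality cannot be salvaged.
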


\begin{proof}
    Consider \(q_k = i_k - i_0\bmod n\). Then \(q_0 = 0\). Since cyclic shifts do not change cyclic orderings, this implies \(q_1 \leq q_2 \leq q_{-1}\). This proves the proposition.
\end{proof}

\begin{proof}[Proof of~\Cref{prop:rob_is_unimod}]
    For simplicity we define \(d^i_j \defined D(i, i + j\mod n)\). (2~\(\Rightarrow\)~1) From~\Cref{prop:cycleSeq} we can write \(i = j + q_i\), \(k = j + q_k\) and \(\ell = j + q_\ell\) with \(q_k \leq q_\ell \leq q_i\). Since \(d^j\) is unimodal, from~\Cref{unimod} we deduce \(d^j_{q_\ell} \geq \min\{d^{j}_{q_k}, d^j_{q_i}\}\). (1~\(\Rightarrow\)~2) If \(d^j\) is not unimodal, by~\Cref{unimod} there are \(q_k \leq  q_\ell \leq q_i\) with \(d^j_{q_\ell} < d^j_{q_k}\) and \(d^j_{q_\ell} < d^j_{q_i}\). If we define \(i= j + q_i\bmod n\), \(k = j + q_k \bmod n\) and \(\ell = j + q_\ell \bmod n\) we see that \((i, j, k), (j, k, \ell)\in\circO_n\). This contradicts 1.
\end{proof}

\subsection{Proofs for \Cref{sec:algorithm}}
\label{app:bruteiscorrect}

\begin{proof}[Proof of~\Cref{lema:bruteiscorrect}]

Let $z\in \X$ and denote $f_z(\cdot)\defined  \mathbf{d}(z,\cdot )$. First, notice that 
\begin{equation}\label{eq:balloverlap}
\begin{split}
       (\exists r>0). \{a,a^\prime\}\subset B_r(z) \,\,\wedge\,\, \{b,b^\prime\}\subset B_r(z)^c \\
   \Leftrightarrow \max \{ f_z(a), f_z(a^{\prime})\} <  \min \{ f_z(b), f_z(b^{\prime})\}.
   \end{split}
\end{equation}
(1.~$\Rightarrow$~2.) By \cref{eq:balloverlap}, this implication is direct from the fact that $a\in \mathcal{A}, b\in \mathcal{B},a^\prime\in \mathcal{A}^\prime$ and $b^\prime\in \mathcal{B}^\prime$. (2.$\Rightarrow$ 1.) Let $r\defined  \max \{ f_z(x), f_z(x^{\prime})\}$, thus $\{x,x^\prime\}\subset B_r(z)$ and  $\{y,y^\prime\}\subset B_r(z)^c$. By \cref{prop:arc_is_ball} this ball is an arc, and therefore is connected in any Robinson ordering. This implies that all elements in between $x$ and $x^\prime$ (in all Robinson orderings), including $a$ and $a^\prime$ must also be present in $B_r(z)$. Similarly, all elements in between $y$ and $y^\prime$, including $b$ and $b^\prime$ must not be present in $B_r(z)$. The implication follows from \eqref{eq:balloverlap}. (3.$\Rightarrow$ 2.) Direct. (2.$\Rightarrow$ 3.) Notice that given any $z\in \X$ and any $t>0$ we have that if there is some $a\in \mathcal{A}$ and $a^\prime \in \mathcal{A}^\prime$ such that $\max \{ f_z(a),f_z(a^\prime)\}<t$. Then, $f_z(a)<t \wedge f_z(a^\prime)<t$. Which implies $\max\{\min f_z(\mathcal{A}),\min f_z(\mathcal{A}^\prime)\}<t$.
Similarly, the existence of $b\in \mathcal{B}$ and $b^\prime \in \mathcal{B}^\prime$ such that $\min \{ f_z(b),f_z(b^\prime)\}>t$ implies that $\min\{\max f_z(\mathcal{B}),\max f_z(\mathcal{B}^\prime)\}>t$. This proves the final implication, and hence the result.
\end{proof}

\begin{proof}[Proof of~\Cref{lema:dmintree}]
\cgp{We only prove the strict case, as the non-strict case follows an analogous argument.} 
Let $x_d\in B_d$ and $x_b\in B_b$ be such that $\Dmin(B_b,B_d) = D(x_b, x_d)$, and let $x_c\in B_c, x_a\in B_a$ be arbitrary. We notice that $x_a,x_b, x_c,x_d $ is cyclically ordered, hence
\[
\Dmin(B_b,B_d) = D(x_b, x_d) > \min \{D(x_b, x_a), D(x_b, x_c)\}.
\]
On the other hand, 
\[
    \min \{D(x_b, x_a), D(x_b, x_c)\} \geq \min\{\Dmin(B_b,B_a), \Dmin(B_b,B_c)\}
\]
by definition of $\Dmin$, proving the result. 
\end{proof}

\subsection{Proof of \cref{th:continuo}}

\begin{proof}
%It suffices to show that any permutation that reverses some arc is not a solution.
Suppose by contradiction that there exist an arc $\mathcal{I}=[a,b)$ such that $\bfd\circ\sigma_{\mathcal{I}}$ is strict Robinson. For $\epsilon>0$ small, $a-\epsilon, a, b, b+\epsilon$ are cyclically ordered. By hypothesis,
\begin{equation}  \label{eqn:ineq_strict_continuous}
\bfd(\sigma(a) ,\sigma(b+\epsilon)) > \min \{\bfd(\sigma(a) ,\sigma(a-\epsilon)) , \bfd(\sigma(a), \sigma(b))\},
\end{equation}
and since $b+\epsilon , a-\epsilon \notin \mathcal{I}$, we get that $\sigma(a-\epsilon) = a-\epsilon$ and $\sigma(b+\epsilon) = b+\epsilon$. On the other hand, $\sigma(a)=b$ and $\sigma(b)=a$. Therefore, we can rewrite \eqref{eqn:ineq_strict_continuous} as
\begin{equation} \label{eqn:ineq_strict_continuous_2}
    \bfd(b ,b+\epsilon) > \min\{\bfd(b ,a-\epsilon) , \bfd(b ,a)\}.
\end{equation}
Let $\delta:=  \bfd(b ,a) >0$. By continuity we get that $\bfd(\sigma(a)) ,\sigma(b+\epsilon)) \rightarrow 0$ and $\bfd(b ,a-\epsilon)\rightarrow \delta$ as $\epsilon \rightarrow 0$. For sufficiently small $\epsilon$, this is a contradiction with \eqref{eqn:ineq_strict_continuous_2}.
\end{proof}

\subsection{Proof of \cref{th:main}} \label{App:pf_gen_model}
 Given $\x_0, \dots , \x_{n-1}\in \circInt$, the \textit{order statistics} correspond to the variables $\x_{(1)},\x_{(2)},\dots ,\x_{(k)}$ obtained by sorting the samples by increasing order. The \textit{gaps} of the sample correspond to the variables $w_i\defined \x_{(i+1)}-\x_{(i)}$. Let $\epsilon_n \defined \max_{i\in [n]} w_i$. The following result can be found in \cite[Theorem 1.2]{pinelis2019order}.

\begin{prop}\label{prop:probepsilon}
Suppose $\x_0, \x_1,\dots \x_{n-1} \iid \uniform$. Then, $$\textstyle\mathbb{P}(\epsilon_n \geq z )\leq \sum_{j=1}^{n+1}(-1)^{j-1} \binom{n+1}{j}(1-j z)_{+}^{n}$$
\end{prop}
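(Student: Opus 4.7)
The plan is to prove this tail bound by inclusion--exclusion (in its Bonferroni upper-bound form) applied to the events $A_i := \{w_i \geq z\}$, whose union is exactly $\{\epsilon_n \geq z\}$. The key distributional ingredient is the classical Dirichlet representation of uniform spacings, which reduces each intersection probability to a simplex-volume computation---this is the strategy underlying the Whitworth--Parzen formula for the maximum uniform spacing.

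First, I would fix the distributional setup. Sorting the sample $\x_0,\dots,\x_{n-1}$ into $\x_{(0)} \leq \cdots \leq \x_{(n-1)}$ and forming the vector of $n+1$ \emph{linear} spacings $w_0 := \x_{(0)}$, $w_i := \x_{(i)} - \x_{(i-1)}$ for $i=1,\dots,n-1$, and $w_n := 1 - \x_{(n-1)}$ produces a Dirichlet$(1,\dots,1)$ vector, equivalently uniform on the simplex $\Delta_n := \{w \in \mathbb{R}^{n+1}_{\geq 0} : w_0 + \cdots + w_n = 1\}$. In particular the joint law is exchangeable in its $n+1$ coordinates, so the probability that any $k$ prescribed spacings all exceed $z$ depends only on $k$.

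Second, I would compute that common value. For any $S \subseteq \{0,\dots,n\}$ with $|S|=k$, the region $\{w \in \Delta_n : w_i \geq z \text{ for all } i \in S\}$ is in volume-preserving bijection with the shrunken simplex $\{v \in \mathbb{R}^{n+1}_{\geq 0} : \sum_i v_i = 1 - kz\}$ via the translation $w_i \mapsto w_i - z$ on the coordinates in $S$ (the region is empty when $1-kz<0$). Comparing $n$-dimensional volumes yields $\mathbb{P}(\bigcap_{i\in S}A_i) = (1-kz)_+^n$, and summing over the $\binom{n+1}{k}$ choices of $S$ gives the $k$-th elementary term. Assembling these via Bonferroni / inclusion--exclusion on $\{\epsilon_n \geq z\} = \bigcup_i A_i$ produces
\begin{equation*}
    \mathbb{P}(\epsilon_n \geq z) \;\leq\; \sum_{k=1}^{n+1}(-1)^{k-1}\binom{n+1}{k}(1-kz)_+^n,
\end{equation*}
which is the claim.

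The main obstacle is reconciling the bookkeeping between the paper's $\epsilon_n$---a maximum over $n$ gaps indexed by $i\in[n]$, suggesting a circular/wrap-around convention where the wrap gap $1-\x_{(n-1)}+\x_{(0)} = w_0 + w_n$ replaces the two boundary gaps---and the $n+1$ linear spacings natural for the volume computation. The cleanest path is to carry out the inclusion--exclusion directly for the circular spacings (which live on an $(n-1)$-dimensional simplex and yield the exact identity $\sum_{j=1}^n(-1)^{j-1}\binom{n}{j}(1-jz)_+^{n-1}$), and then establish term-by-term domination by the $(n+1)$-term linear formula; this final comparison is the delicate step. Alternatively, one can interpret the $w_i$ linearly, in which case inclusion--exclusion delivers the stated bound with equality and no reconciliation is needed.
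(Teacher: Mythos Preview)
The paper does not actually prove this proposition; it simply cites it from \cite[Theorem~1.2]{pinelis2019order}. Your Dirichlet--spacings plus inclusion--exclusion argument is the classical route to this formula and is correct, with one clarification: the full inclusion--exclusion sum you assemble is an \emph{equality}, not a Bonferroni upper bound (Bonferroni refers to truncations of the alternating sum). Under the linear-spacing reading you therefore obtain
\[
\mathbb{P}\Big(\max_{0\leq i\leq n} w_i \geq z\Big) \;=\; \sum_{j=1}^{n+1}(-1)^{j-1}\binom{n+1}{j}(1-jz)_+^{\,n},
\]
and the paper's ``$\leq$'' is then trivial.

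Your final paragraph correctly flags the real issue, and in fact the circular reading would make the stated inequality \emph{false}: with $n$ points on the circle the wrap gap equals $w_0+w_n$, so the circular maximum gap dominates the linear one and hence $\mathbb{P}(\epsilon_n^{\mathrm{circ}}\geq z)\geq \mathbb{P}(\epsilon_n^{\mathrm{lin}}\geq z)$, the reverse of what would be needed (already for $n=1$ the left side is $1$ while the right side is $2(1-z)$ on $(1/2,1)$). So the ``term-by-term domination'' you sketch cannot succeed; the correct resolution is your second alternative, namely to read the $w_i$ as the $n+1$ linear spacings, for which the formula holds with equality. For the downstream application in \Cref{th:main} the discrepancy is harmless anyway, since the circular maximum gap is at most twice the linear one and only the order of magnitude of the tail matters there.
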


 \begin{prop}\label{prop:epsilon}
Given any $\mathcal{I}=[\x_i,\x_j]$, we write $\mu(\mathcal{I})$ to denote $\operatorname{Arc}(\x_i,\x_j)$. Suppose that  \cref{axiom:robinson} and \cref{axiom:bilip} hold. Then the inequality $\epsilon_n<\constant$ implies that any arc $\mathcal{I}\subset \sample{n}$ such that $\mu(\mathcal{I})>\delta$ has a unique orientation in any circular Robinson ordering of $\sampled{n}$.
 \end{prop}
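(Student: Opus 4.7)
The plan is to prove the contrapositive by deriving a violation of strict cyclic unimodality of the dissimilarity. Suppose that some arc $\mathcal{I}=[\x_{i_0},\x_{i_1}]\subset\sample{n}$ with $\mu(\mathcal{I})>\delta$ admits a reversal $\sigma_{\mathcal{I}}$ yielding another strict circular Robinson ordering of $\sampled{n}$. Choose sample points $a$ and $c$ adjacent to the endpoints $\x_{i_0}$ and $\x_{i_1}$ respectively, taken on the \emph{longer} of the two arcs $\mathcal{I}$ and $\mathcal{I}^c$ (which has arc-length at least $1/2$). Then $\operatorname{Arc}(a,\x_{i_0})\leq\epsilon_n$, $\operatorname{Arc}(c,\x_{i_1})\leq\epsilon_n$, and by direct computation of cyclic distances on the two possible paths between $a$ and $c$, we obtain $\operatorname{Arc}(a,c)\geq\mu(\mathcal{I})$.

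In the reversed cyclic order, the sample point $\x_{i_0}$ appears flanked by $y_1=\x_{i_0+1}$ (now on the reversed-$\mathcal{I}$ side) and by $c$. The goal is to show that $\x_{i_0}$ becomes a strict local minimum of $\bfd(a,\cdot)$ in this new order, contradicting the strict cyclic unimodality implied by the hypothesis that the
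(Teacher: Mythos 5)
Your overall strategy---reverse the arc and derive a violation of the strict circular Robinson property using the tiny gap at an endpoint against the large separation of the two endpoints---is essentially the same estimate the paper uses; the paper merely packages it as a ball $B_{s}(\bfd(s,s^{+}))$ that strictly overlaps $\mathcal{I}$ (via \cref{lema:canreverse}), which by \cref{prop:rob_is_unimod} is equivalent to the unimodality violation you are aiming for. However, your write-up is truncated before any inequality is stated, and the step you announce does not go through as formulated. You want $\x_{i_0}$ to become a \emph{strict local minimum} of $\bfd(a,\cdot)$ relative to its immediate neighbours in the reversed order, namely $\x_{i_0+1}$ and $c$. The comparison with $c$ is fine, since $\operatorname{Arc}(a,c)\geq \delta-2\epsilon_n$ while $\operatorname{Arc}(a,\x_{i_0})\leq\epsilon_n$. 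But the comparison with $\x_{i_0+1}$ fails: both $\x_{i_0}$ and $\x_{i_0+1}$ lie at arc-distance $O(\epsilon_n)$ from $a$, so \cref{axiom:bilip} only yields $\bfd(a,\x_{i_0})\leq L\cdot\operatorname{Arc}(a,\x_{i_0})$ and $\bfd(a,\x_{i_0+1})\geq \ell\cdot\operatorname{Arc}(a,\x_{i_0+1})$, and when $L>\ell$ these bounds do not order the two values. One could try to extract $\bfd(a,\x_{i_0})<\bfd(a,\x_{i_0+1})$ from strict unimodality of the \emph{original} $\bfd(a,\cdot)$, but that requires both points to precede the mode of $\bfd(a,\cdot)$, which forces an extra condition of the form $\epsilon_n\lesssim \ell/L$ that is not implied by $\epsilon_n<\ell\delta/(L+\ell)$.

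The repair is small, because unimodality is violated by any interior ``dip,'' not only by an immediate-neighbour local minimum. Compare $\bfd(a,\x_{i_0})\leq L\epsilon_n$ against the two \emph{far} points flanking $\x_{i_0}$ in the reversed order, $\x_{i_1}$ and $c$: since $(a,\x_{i_1},\x_{i_0},c)$ is cyclically ordered after the reversal, the Robinson property would force $\bfd(a,\x_{i_0})\geq\min\{\bfd(a,\x_{i_1}),\bfd(a,c)\}\geq \ell(\delta-2\epsilon_n)$, a contradiction whenever $\epsilon_n<\ell\delta/(L+2\ell)$---which is slightly stronger than the stated hypothesis. To get the exact threshold $\ell\delta/(L+\ell)$, base the violation at the endpoint $s=\x_{i_0}$ itself, as the paper does: after the reversal $(a,\x_{i_1},\x_{i_0},c)$ is cyclically ordered, so the Robinson property would give $L\epsilon_n\geq\bfd(\x_{i_0},a)\geq\min\{\bfd(\x_{i_0},\x_{i_1}),\bfd(\x_{i_0},c)\}\geq\ell(\delta-\epsilon_n)$, contradicting $\epsilon_n<\ell\delta/(L+\ell)$. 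Either way, the constants must appear explicitly; as submitted, the argument is incomplete.
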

 
 \begin{proof} 
 Let $s\defined \x_{(i)}$ and $t\defined \x_{(j)}$ for some $i<j$. Let $\delta \in (0,\frac{1}{2})$ and consider the arc $\mathcal{I}=[s,t]$ in $\sample{n}$. Suppose $\epsilon_n<\constant$ and $\mu(\mathcal{I})>\delta$. Let $s^+=\x_{(i-1\bmod n)}$ and $t^+=\x_{(j+1\bmod n)}$. We claim that $B_{s}(\bfd (s,s^+))\between^*\mathcal{I}$. To prove the claim, it suffices to prove that 
\begin{equation}
   \mathbf{d}( s,s^+) <  \min \{ \mathbf{d}( s,t), \mathbf{d}( s,t^+)\}.
\end{equation}

First, notice that $\mathbf{d}( s,s^+)\leq L\cdot \epsilon_n$. Second, notice that since $\operatorname{Arc}(s,t)>\delta$, then
$$ \mathbf{d}(s,t^+)\geq \ell\cdot \operatorname{Arc}(s,t^+)
        \geq \ell\cdot  (\operatorname{Arc}(s,t) - \epsilon_n)
         \geq \ell\cdot  (\delta- \epsilon_n), $$
         
and therefore $ \min \{ \mathbf{d}( s,t), \mathbf{d}( s,t^+)\} \geq \ell\cdot  (\delta- \epsilon_n)$. Joining this two results with the fact that $\epsilon_n<\constant$ a proves the claim.
 \end{proof}

\begin{lemma}\label{lema:diameter}

 Let $n\geq \log(1/\delta)/\delta$, and let $\x_0, \x_1,\dots \x_{n-1} \iid \uniform$.  
 %$t_i\sim^{iid} Unif(\sphere)$ for all $t_i\in \mathcal{S}$. 
 Let $E_n(\delta)=\{\epsilon_n < \constant\}$, then {\small $$%d\defined 
 \int_{E_n(\delta)} \diam S_{\scircular}(\sampled{n}(\omega) )\text{ d}\lambda^n(\omega) = \mathcal{O}\Big(\delta^2+\frac{\delta\log(1/\delta)}{n}\Big).$$}
\end{lemma}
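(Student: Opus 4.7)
The plan is to combine a deterministic structural reduction with a probabilistic concentration argument. The structural step bounds the diameter pointwise in terms of the size of the unique \emph{reversible} arc allowed by \cref{lemma:nontrivial}, and the probabilistic step controls the maximum number of samples falling in any arc of $\circInt$ of measure at most $\delta$.

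\textbf{Structural reduction.} Without loss of generality, enumerate the samples $\sample{n}$ in the natural cyclic order of $\circInt$, so that $\sampled{n}\in\scircular$ and $\id\in S_{\scircular}(\sampled{n})$. By \cref{lemma:nontrivial}, there is at most one non-trivial orbit of Robinson orderings; equivalently, $S_{\scircular}(\sampled{n})\subseteq \dihedral_n\cup (\dihedral_n\circ\sigma_{\mathcal{I}^*})$ for a single (possibly empty) arc of consecutive samples $\mathcal{I}^*\subseteq\sample{n}$, where $\sigma_{\mathcal{I}^*}$ reverses the indices in $\mathcal{I}^*$. Using the invariance of $\kendall$ under right multiplication, any two $\pi_1,\pi_2\in S_{\scircular}(\sampled{n})$ satisfy
\[
\min_{d\in\dihedral_n}\kendall(d\pi_1,\pi_2)\leq \kendall(\id,\sigma_{\mathcal{I}^*})=\binom{|\mathcal{I}^*|}{2}\big/\binom{n}{2},
\]
because reversing $|\mathcal{I}^*|$ consecutive indices creates exactly $\binom{|\mathcal{I}^*|}{2}$ inversions and leaves all other pairs concordant.

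\textbf{Restriction on $E_n(\delta)$.} On $E_n(\delta)=\{\epsilon_n<\constant\}$, \cref{prop:epsilon} forces the reversible arc $\mathcal{I}^*$ to satisfy $\mu(\mathcal{I}^*)\leq \delta$, since any arc of $\sample{n}$ of larger measure is uniquely oriented in every Robinson ordering. Defining
\[
K_{\max}(\delta)\defined \max\{|\sample{n}\cap \mathcal{J}|:\mathcal{J}\subset\circInt\text{ arc},\ \mu(\mathcal{J})\leq \delta\},
\]
we obtain the pointwise bound $\diam S_{\scircular}(\sampled{n})\leq K_{\max}(\delta)^2/(n(n-1))$ on $E_n(\delta)$.

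\textbf{Concentration.} Cover $\circInt$ by $N=\lceil 2/\delta\rceil$ arcs of length $\delta$ so that every arc of length at most $\delta$ is contained in at most two consecutive covers; this yields $K_{\max}(\delta)\leq 2\max_i K_i$ with $K_i\sim\operatorname{Bin}(n,\delta)$. Bernstein's inequality combined with a union bound gives
\[
\mathbb{P}\bigl(\max_iK_i\geq n\delta+t\bigr)\leq N\exp\!\Bigl(-\frac{t^2}{2n\delta+2t/3}\Bigr).
\]
Integrating separately over its Gaussian regime $t\lesssim n\delta$ and its sub-exponential regime $t\gtrsim n\delta$ produces $\mathbb{E}[K_{\max}(\delta)^2]=O(n^2\delta^2+n\delta\log(1/\delta)+\log^2(1/\delta))$. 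Under the hypothesis $n\geq\log(1/\delta)/\delta$, the residual $\log^2(1/\delta)$ term is dominated by $n\delta\log(1/\delta)$. Dividing by $n^2$ and combining with the previous pointwise bound delivers the claimed rate $O(\delta^2+\delta\log(1/\delta)/n)$.

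\textbf{Main obstacle.} The most delicate step is the concentration estimate: Bernstein's tail must be split at the scale $t\asymp n\delta$ and each regime integrated separately to isolate exactly the two dominant terms $n^2\delta^2$ and $n\delta\log(1/\delta)$. The hypothesis $n\geq \log(1/\delta)/\delta$ is precisely what ensures that the lower-order $\log^2(1/\delta)$ residual is absorbed. By contrast, the structural reduction is a clean combinatorial consequence of \cref{lemma:nontrivial} and \cref{prop:epsilon}, once one observes that $\kendall$ is invariant under right multiplication and that reversing an arc of $K$ consecutive indices creates exactly $\binom{K}{2}$ inversions.
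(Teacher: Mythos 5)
Your proposal is correct and follows essentially the same route as the paper: reduce via \cref{lemma:nontrivial} and \cref{prop:epsilon} to bounding the squared count of samples in the worst arc of measure $\delta$, then cover the circle by $O(1/\delta)$ arcs and control the maximum occupancy. The only difference is that you derive the occupancy bound from Bernstein's inequality plus tail integration, whereas the paper cites the balls-and-bins maximum-occupancy result of Raab--Steger and argues ``with high probability''; your version is somewhat more self-contained (and makes the passage from a high-probability bound to an expectation bound explicit), but the estimates and the role of the hypothesis $n\geq\log(1/\delta)/\delta$ are identical.
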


\begin{proof}

Recall from \cref{lemma:nontrivial} that there is at most one non-trivial ordering $\sigma_\mathcal{I}$ of $\sampled{n}$ which corresponds to the permutation that reverses the arc $\mathcal{I}\cap \sample{n}$. Therefore, it suffices to bound the integral of the random variable $\kendall(\id,\sigma^*)$, where $\sigma^* \in \argmin_{\hat{\sigma}^\in \{\pi_\mathbf{r}\circ \sigma_\mathcal{I} , \sigma_\mathcal{I}\}}\kendall(\id,\hat{\sigma})$. 

The number of discordant pairs between $\sigma^*$ and $\id$ is bounded by $\frac12\min\{ |\mathcal{I}\cap \sample{n}|,n - |\mathcal{I}\cap \sample{n}|\}^2$. Hence, we will focus in bounding this expression.  
Let $\omega \in E_n(\delta)$. By \cref{prop:epsilon}, $\mu(\mathcal{I}(\omega))\leq \delta$. This implies that either $\lambda (\mathcal{I}) \leq \delta $ or $\lambda (\mathcal{I}) \geq 1 - \delta$. \cgp{Therefore,}
\begin{multline*}
    \textstyle\int_{E_n(\delta)}\min\{ |\mathcal{I}\cap \sample{n} |,n - |\mathcal{I}\cap \sample{n}|\}^2\text{ d}\lambda^n(\omega)
   \\ 
   \textstyle\leq  \int_{E_n(\delta)} \max_{{\cal J} \mbox{\tiny interval, }\lambda({\cal J})\leq \delta} |{\cal J}\cap\sample{n}|^2\text{ d}\lambda^n(\omega)
   \leq\int_{\Omega} \max_{\cal J\mbox{\tiny interval, }\lambda({\cal J})\leq \delta} |{\cal J}\cap\sample{n}|^2\text{ d}\lambda^n(\omega).
\end{multline*}
We bound the random variable inside the integral using a balls and bins argument. W.l.o.g. $1/\delta$ is an integer. Let $({\cal J}_{i})_{i\in[1/\delta]}$ be a partition of $\circInt$ by disjoint intervals of length $\delta$. For any $\omega\in \Omega$, the maximizer in the integral above lies in at most two of the partition intervals. Therefore,  $\max_{\lambda({\cal J})\leq \delta} |{\cal J}\cap\sample{n}|\leq 2\max_{i\in [n]}|{\cal J}_i\cap \sample{n}|$. Next, we can estimate $\max_{i\in [n]}|{\cal J}_i\cap \sample{n}|$ by looking into the problem of throwing $n$ balls into $1/\delta$ bins (see \cite{raab1998balls} for further details); since we further assumed that $n\geq 1/\delta\log(1/\delta)$, then w.h.p., the maximum occupancy is bounded by $n\delta+\Theta(\sqrt{n\delta\log(1/\delta)})$. Plugging this bound above yields the result.
%$\int_{E_n(\delta)}\min\{ |\mathcal{I}\cap \sample{n} |,n - |\mathcal{I}\cap \sample{n}|\}\text{ d}\lambda^n(\omega)\leq \delta n$. We conclude that $d\leq \frac{\delta^2}{2}$.

\end{proof}

\begin{proof}[Proof of \cref{th:main}]

Let $E_n(\delta)$ denote the event $\{\epsilon_n < \constant\}$. Denote the random variable $Z= \diam (S_{\scircular}(\sampled{n}))$. Then, 
 \begin{equation}\label{eq:th1}
  \textstyle\mathbb{E}\left[  Z \right]   = \int_{E_n(\delta)}Z(\omega) \text{d}\lambda(\omega) +  \int_{E_n(\delta)^c} Z(\omega) \text{d}\lambda(\omega) \leq \mathcal{O}\Big(\delta^2+\frac{\delta\log(1/\delta)}{n}\Big)+  \mathbb{P}\left[  E_n(\delta)^c\right],
 \end{equation}
 where in the inequality we used \cref{lema:diameter} and $\kendall\leq 1$. By \cref{prop:probepsilon} we have  
 \begin{equation}\label{eq:probepsilon}
\textstyle\mathbb{P}\left[  E_n(\delta)^c\right] \leq \sum_{j=1}^{n+1}(-1)^{j-1} \binom{n+1}{j}(1-j x)_{+}^{n}  \leq \sum_{j=1}^{n+1} \left( \frac{e(n+1)}{j}\right)^j \exp\{-xjn\},
 \end{equation} 
where $x=\constant $. By taking $\delta(n)= (L+\ell)\cdot \log(e(n+1)^2)/(n\cdot \ell)$ we obtain that \cref{eq:probepsilon} can be bounded by $\sum_{j=1}^{n+1}\frac{1}{(n+1)^j}\in \mathcal{O}(1/n)$. By \cref{eq:th1} and \cref{eq:probepsilon} we conclude that $\mathbf{E}[Z]={\cal O}(\frac{L+\ell}{\ell}\frac{\log n}{n}).$%\rate{n}$.
\end{proof}

%\newpage
\section{Auxiliary subroutines}

\label{app:subroutines}

\begin{algorithm}%{AP}
 \caption{\partition }\label{alg:gac}
\begin{algorithmic}[1]
\small
\STATE{\textbf{Input}: A dissimilarity $\mathbf{d}$ and a set $\mathbf{T}$}
\STATE{$\mathcal{N}_G(x)\defined \{y\in \mathbf{T}:\, x\in \NN(y)\vee y\in \NN(x) \}$}
\COMMENT{Compute the neighbourhood function}
\STATE{$\mathcal{B}\defined \{x\in \mathbf{T} : |\mathcal{N}_G(x)|=1\}$\\}
\COMMENT{Find all degree $1$ nodes (if there are no such nodes pick any)}
\STATE{$i= 0$}
\COMMENT{Run DFS starting at every non visited degree $1$ node}
\FOR{$x\in \mathcal{B}\setminus \cup_{j< i} \alpha_i$}
\STATE{$\alpha_i = \DFS(\mathcal{N}_G, \emptyset, x)$}
\STATE{$i=i+1$}
\ENDFOR
\STATE{\textbf{Output}: An arc partition stored into tuples $\mathcal{P}\defined\{\alpha_i\}_{i\in [k]}$}
\end{algorithmic}
\end{algorithm}%

\begin{algorithm}%{DFS}
 \caption{\depthfirst}\label{alg:dfs}
\begin{algorithmic}[1]
\small
\STATE{\textbf{Input}:  The neighbourhood function of a graph $\mathcal{N}_G(\cdot)$, a tuple $\alpha$ of visited nodes and a starting node $x$ }
\STATE{$\alpha(n) = x$}
\COMMENT{Set $x$ as $n$-th visited node where $n$ is the size of $\alpha$}
\FOR{$y \in \mathcal{N}_G(x)\setminus \alpha$}
\STATE{$\alpha = \DFS( \mathcal{N}_G, \alpha, y)$}
\COMMENT{Recurse over all adjacent nodes that have not been visited}
\ENDFOR
\RETURN{$\alpha$}
\STATE{\textbf{Output}: A tuple of visited nodes $\alpha$}
\end{algorithmic}
\end{algorithm}%

\section*{Acknowledgments}
We would like to thank Alexandre d'Aspremont for valuable discussions at different stages of this work. 
%This work was partially supported by INRIA through the INRIA Associate Teams project, CORFO through the Clover 2030 Engineering Strategy - 14ENI-26862, and ANID – Millennium Science Initiative Program – NCN17\_059. C.A.SL. was partially funded by ANID – Millennium Science Initiative Program – NCN17\_129.

\bibliographystyle{siamplain}
\bibliography{references}

\end{document}